\definecolor{darkgreen}{rgb}{0,0.5,0}
\numberwithin{equation}{section} \makeatletter
\renewcommand{\subsection}{\@startsection
{subsection}{2}{0mm}{\baselineskip}{-0.25cm}
{\normalfont\normalsize\bf}} \makeatother
\newtheorem{theorem}{Theorem}[section]
\newtheorem{lemma}[theorem]{Lemma}
\newtheorem{proposition}[theorem]{Proposition}
\theoremstyle{definition}
\newtheorem{definition}[theorem]{Definition}
\theoremstyle{remark}
\newtheorem{remark}[theorem]{Remark}
\newtheorem{example}[theorem]{Example}
\newtheorem{assumption}[theorem]{Assumption}
\def \A {\mathcal A}
\def \F {\mathcal F}
\def \H {\mathcal H}
\def \S {\mathcal S}
\def \cT {\mathcal T}
\def \P {\mathbf P}
\def \Q {\mathbf Q}
\def \I {{\mathbf 1}}
\def \R {\mathbb R}
\def \bF {\mathbb F}
\def \bE {\mathbb E}
\def \bN {\mathbb N}
\def \Nu {\boldsymbol{\nu}}
\def \bpi{\boldsymbol{\pi}}
\newcommand{\ud}{\mathrm d}
\newcommand{\ds}{\displaystyle}
\newcommand{\esp}[2][\mathbb E] {#1\left(#2\right)}
\newcommand{\qesp}[2][\mathbb E^{\mathbf Q}] {#1\left(#2\right)}
\newcommand{\numin}{0}
\newcommand{\numax}{\nu^{\text{max}}}
\newcommand{\cdown}{c^{\text{down}}}
\newcommand{\cupp}{c^{\text{up}}}
\newcommand{\wstep}{m_1}
\newcommand{\pistep}{m_2}
\newcommand{\fconst}{c_f}
\newcommand{\fexp}{\varsigma}
\newcommand{\Vdet}{V^{\prime,\text{det}}}
\newcommand{\pipost}{\pi^{\text{post}}}
\begin{document}

\title{Optimal Liquidation under Partial Information with Price Impact}




\author{ Katia Colaneri \footnote{Department of Economics, University of Rome Tor Vergata, Via Columbia 2, 00133 Roma, Italy  \texttt{katia.colaneri@uniroma2.it}}
\and Zehra Eksi \footnote{Institute for Statistics and Mathematics, Vienna University of Economics and Business (WU), Welthandelsplatz 1, 1020 Vienna, Austria, \texttt{zehra.eksi@wu.ac.at}}
\and R\"{u}diger Frey \footnote{Institute for Statistics and Mathematics, Vienna University of Economics and Business (WU), \texttt{ruediger.frey@wu.ac.at}, corresponding author}
\and Michaela Sz\"{o}lgyenyi \footnote{Department of Statistics, University of Klagenfurt, Universit\"atsstra\ss{}e 65-67, 9020 Klagenfurt, Austria, \texttt{michaela.szoelgyenyi@aau.at}}}

\date{}

\maketitle

\begin{abstract}
We study the optimal liquidation problem in a market model where the bid price follows a geometric pure jump process whose  local characteristics are driven by an unobservable finite-state Markov chain and by the liquidation rate. This model is consistent with  stylized facts of high frequency data such as the discrete nature of tick data and  the clustering in the order flow.  We include both temporary and permanent effects into our analysis.  We  use stochastic filtering to reduce the optimal liquidation problem to an equivalent optimization problem  under complete information. This leads to a stochastic control problem for piecewise deterministic Markov  processes (PDMPs). We carry out a detailed mathematical analysis of this problem. In particular, we derive the optimality equation for the value function, we characterize the value function as continuous viscosity solution of the associated dynamic programming equation, and we prove a novel comparison result.
The paper concludes with numerical results illustrating the impact  of partial information and price impact on the value function and on the optimal liquidation rate.


\end{abstract}

Keywords: Optimal liquidation,  Stochastic filtering,  Piecewise deterministic Markov process,  Viscosity solutions and comparison principle.


\vspace{0.2cm}

\section{Introduction}

In financial markets, traders  frequently face  the task of selling a large amount of a given asset over a  short time period. This has led to a lot of research on optimal portfolio execution, largely in the context of  market impact models. In these models one directly specifies the  impact of a given trading strategy on the bid price of the asset and the fundamental price (i.e. the price if the trader is inactive) is usually modelled as a diffusion process. However, portfolio liquidation strategies are executed at relatively high trading frequency. Hence a sound market impact model should be consistent with   key stylized facts of high frequency data as discussed  for instance by  \citet{bib:cartea-jaimungal-penalva-15} or \citet{bib:cont-11}.  First,  on   fine time scales  the bid price of an asset is  best described by a pure jump   process, since in reality   prices move on a discrete grid defined by the tick size.
Second, the order flow is clustered in time: there are random periods with  a lot of  buy orders  or  with a lot of  sell orders, interspersed by quieter times with less trading activity. \citet{bib:cont-11} attributes this to  the fact that  many observed orders are components of a larger parent order that is executed in small blocks.   A further reason for the clustering in the inter-event times  are  random  fluctuations in the arrival rate of new information, see, e.g.~\citet{bib:andersen-96}.
Third, the distribution of returns over short time intervals is strongly non-Gaussian but has heavy tails and a large mass around zero; to a certain extent this is a consequence of the first two stylized facts.
Finally,  there is permanent price impact,   that is  the implementation of a liquidation strategy pushes prices downwards.

To capture these  stylized facts we  model the bid price as a  marked point process with Markov switching whose local  characteristics (intensity and jump size distribution)  depend on  the trader's current liquidation rate $\nu_t$  and on the value $Y_t$ of a finite state  Markov chain $Y$. The fact that  the local characteristics depend on $\nu_t$  is used to model permanent price impact. Markov switching allows us to reproduce the  observed clustering in the order flow.   Our framework encompasses models with  a high intensity of downward jumps in one state of $Y$ and  a high intensity of upward jumps in another state of $Y$ and models where inter-event times are given by a mixture of exponential distributions. We view the process $Y$ as an abstract modelling device that  generates clustering and assume therefore that $Y$ is  unobservable by the trader. This is consistent with the fact that economic sources for clustering such as the trading activity of other investors are  not directly observable.
Markov modulated marked point processes  with partial information (without price impact) were considered previously in the statistical modelling of high frequency data, see for instance   \citet{bib:cvitanic-rozovski-zalyapin-05} or \citet{bib:cartea-jaimungal-13}; however, we are the first to study optimal liquidation  in such a setting.

The first step in the analysis of a  control problem with  partial information is to formulate an equivalent  problem under full information via stochastic filteringand hence  to derive the stochastic filtering equations for our setup. These equations describe the dynamics of the  conditional distribution of $Y_t$ given the bid price history up to time $t$. Note that this provides a further rationale for modelling the bid price as a marked  point process.  In fact, the strong non-normality  of short-period  returns implies that it is  problematic to use high frequency data as input in the numerical solution of the  classical filtering equations for models with diffusive observation process, as   the resulting filters may become   unstable.  To overcome this issue we prefer to work in a point process framework.
We use the reference probability approach to give a rigorous construction of our model and to derive the filtering equations.
We end up with a control problem whose state process, denoted by $X$, consists of the stock price, the inventory level, and the filter process.
We provide a detailed mathematical analysis of this  problem. The form of the asset return dynamics implies  that $X$ is a piecewise deterministic Markov process (PDMP) so that we rely on control theory for PDMPs;  a general  introduction to this theory is given in  \citet{davis1993markov} or in \citet{bauerle-rieder-book}.
We establish the dynamic programming equation for the value function and we derive  conditions on the data of the problem that guarantee the continuity of the value function. This requires a careful analysis of the behaviour of the value function close to the boundary of the state space.
As a further step we  characterize the value function as the unique continuous  viscosity solution of the Hamilton-Jacobi-Bellman (HJB) partial integro-differential equation   and we give an example showing that in general the HJB equation does not admit a classical solution.  Moreover,  we prove a novel comparison theorem for the HJB equation which is valid in more general PDMP setups. A comparison principle is necessary to  ensure  the convergence of  numerical schemes to the value function, see \citet{barles1991}.

The paper closes with a section on applications. We discuss   properties  of the optimal liquidation rate and of the expected liquidation profit and we use a finite difference approximation of the HJB equation to  analyze the influence of the temporary and permanent price impact parameters on the form of the optimal liquidation rate. Among others,  we find that for certain parameter constellations the optimal strategy displays a surprising  gambling behaviour of the trader that cannot be guessed upfront and we give an economic interpretation that is based on the form of the HJB equation. Moreover, we study the  additional liquidation profit from the use of a  filtering model, and  we report results from a small calibration study that provides further support for our model.

We continue with a brief discussion  of the existing literature, concentrating on market impact models.
The first contribution is \citet{bertsimas1998optimal} who analyze  the optimal portfolio execution problem for a risk-neutral agent in a  model with linear and purely permanent price impact. This model has been generalized by \citet{almgren2001optimal}  who consider also  risk aversion and  temporary price impact. Since then, market impact models have been extensively studied. Important contributions include 
\citet{he2005dynamic}, \citet{schied2009risk},  \citet{schied16}, \citet{guo2015optimal}, \citet{bib:casgrain-jaimungal-19}. All these models work in a (discretized) diffusion framework. From a methodological point of view our analysis is also related to the literature on expected utility maximization  for pure jump process
such as \citet{bib:bauerle-rieder-09}.
Important contributions to the control theory of PDMPs include   \citet{davis1993markov, almudevar2001dynamic,costa2013continuous}. Viscosity solutions for PDMP control problems were previously considered in \citet{bib:davis-farid-99}.

The outline of the paper is the following. In Section \ref{sec:framework}, we introduce our model, the main assumptions and the optimization problem. In Section \ref{sec:partial info}, we  derive the filtering equations for our model. Section \ref{section:mdm} contains the mathematical analysis of the  optimization problem via PDMP techniques. In Section \ref{sec:viscosity} we provide a viscosity solution characterization of the value function.
Finally, in Section \ref{sec:numerics}, we present the results of our numerical experiments.
The appendix  contains additional  proofs. 

\section{The Model}\label{sec:framework}

\subsection{The optimal liquidation problem}

Throughout we work on the filtered probability space $(\Omega,\mathcal{F},\mathbb{F},\mathbf{P})$, where the filtration $\mathbb{F}=\{\mathcal{F}_t\}_{t\geq 0}$ satisfies the usual conditions. Here $\mathbb{F}$ is the global filtration, i.e. all considered processes are $\mathbb{F}$-adapted, and $\mathbf{P}$ is the historical probability measure.
We consider a trader who has to liquidate  $w_0>0$ units of a given security (referred to as the \emph{stock} in the sequel) over the period $[0, T]$ for a given time horizon $T$. We denote the bid price process by $S = (S_t)_{0 \leq t \le T}$ and  $\mathbb{F}^S$ is the right-continuous and complete filtration generated by $S$.

We assume that the trader  sells the shares at a {nonnegative} $\mathbb{F}^S$-adapted rate $ \Nu = (\nu_t)_{0 \le t \le T}$.
Hence  her inventory, i.e.~the amount of shares she holds at time $t \in [0,T]$,  is given by the absolutely continuous process
\begin{equation}\label{eq:inventory}
W_t = w_0 -\int_0^t \nu_{u}\ud u, \quad 0 \le t \le T.
\end{equation}
The inventory dynamics \eqref{eq:inventory} is a stylized model of  real trading where orders  are placed at discrete points in time.   Our interpretation follows the literature on price impact models  such as   \citet{almgren2001optimal} or \citet{bib:cartea-jaimungal-penalva-15}, Section~6.2. We split the time interval $[0,T]$  into small subintervals of fixed length $\delta$, leading to a partition $0=t_0<t_1< \dots <t_n=T$.
At each time $t_j$, the investor decides on the amount of inventory  she wants to liquidate over the period $[t_j, t_{j+1})$. This quantity is described in terms of the $\F_{t_j}^S$-measurable nonnegative  trading rate  $\nu_{t_j} = (W_{t_{j}}-W_{t_{j+1}}) / \delta$, $j =0, \dots, n-1 $.  We assume that the revenue generated by this share sale is given by
\begin{equation}\label{eq:revenue-from-liquidation}
(\nu_{t_j} \delta) S_{t_j}(1-f(\nu_{t_j})), \quad j =0, \dots, n-1.
\end{equation}
Here $S_{t_j}(1-f(\nu_{t_j}))$ is the execution price per share, and the nonnegative, continuous  and increasing function $f$ models  \emph{temporary price impact}.  A simple interpretation of \eqref{eq:revenue-from-liquidation}  is that $\nu_{t_j} \delta$ shares are sold in a market order, and the quantity $f(\nu_{t_j})$ describes in a stylized way the impact on the execution price as the order ``walks the order book''. More abstractly, one may  view \eqref{eq:revenue-from-liquidation} as  (expected) revenue of some ultra-high frequency trading algorithm for the  liquidation of  the  child order $\nu_{t_j} \delta $ over $(t_j, t_{j+1}]$, see for instance \citet{lehalle2018optimal}. Note that the price impact described by  $f$ is purely temporary as it pertains only to the execution price of the current trade. Permanent price impact (the impact of trading on the dynamics of $S$) is discussed in the next section.

Consider now a discrete list of share sales  $\{\nu_{t_0},\dots,\nu_{t_{n-1}}\}$ and define the associated continuous-time  liquidation strategy  $\Nu$ by
\begin{equation}\label{eq:piecewise-constant}
\nu_t = \sum_{j=0}^{n-1}  \nu_{t_j} 1_{(t_j,t_{j+1}]}(t)\,, \quad 0\le t \le T\,.
 \end{equation}
Then for small $\delta$ the inventory process generated by the discrete trades is  approximately equal to  \eqref{eq:inventory},  and the cumulative  revenue process of the discrete trades  is approximately equal to the absolutely continuous cash flow  stream
\begin{equation} \label{eq:cash-flow}
 \int_0^t \nu_s S_s (1-f(\nu_s)) \ud s\,, \quad 0 \le t \le T.
\end{equation}
In this paper we work with the absolutely continuous  inventory dynamics \eqref{eq:inventory} and with the  revenue stream~\eqref{eq:cash-flow}.  This  facilitates comparison with the literature and permits us to  use  tools from stochastic calculus and continuous-time stochastic control.

Now we describe the ingredients of the liquidation problem in detail.  First, we  consider the temporary  price impact.  Empirical evidence suggests that $f$ can be modelled as a power function,  $f(\nu)=\fconst \nu^\fexp$, with $0 < \fexp <  1$, see for instance \citet{bib:cartea-jaimungal-penalva-15}, Section~6.7 or \citet{almgren}. In that case $\nu f(\nu)$, the  cost of trading at the rate $\nu$, is increasing, strictly convex and exhibits superlinear growth, that is  $\lim_{\nu \to \infty} f(\nu) = \infty$. In  the theoretical part of our analysis we do not specify explicitly the functional form of $f$,  but we assume throughout that  $\nu f(\nu)$ is increasing and  strictly convex  with superlinear growth.
Second, in order to  account for the case where not all shares have been sold strictly prior to time $T$, we model the value of the  remaining share position by  $h(W_T)S_T$. Here $h$ is an increasing, continuous and  concave function  with $h(w) \le w$ and $h(0)=0$. We also view the difference  $(W_T - h(W_T))S_T$ as a \emph{penalization} of a nonzero terminal inventory position.
Third, we confine  the trader to $\mathbb{F}^S$ adapted strategies. Moreover,  observe that  convexity and superlinear growth of   $\nu f(\nu)$ imply that the mapping $\nu \mapsto \nu  S_{t} (1-f(\nu))$
(the instantaneous  revenue generated by a share sale at rate $\nu$) has a unique maximum at some  $\numax >0$.
Hence  it is  never optimal for the agent to liquidate shares at a rate larger than $\numax$. We may therefore assume without loss of generality that  the liquidation rate is bounded by $\numax$, and we call a liquidation strategy $\Nu =(\nu_t)_{0 \le t \le T} $ \emph{admissible} if $\Nu$  is $\mathbb{F}^S$ adapted and  if $\nu_t \in [0, \numax]$ for all $0 \le t \le T$.\footnote{Imposing an  upper bound on the liquidation rate facilitates the mathematical analysis, since  existing results on  the control of piecewise deterministic Markov processes rely on the assumption of  a compact control space.}
The exact form of $f$  for  large $\nu$ large  is hard to estimate empirically, since one needs to extrapolate beyond the typical order size. Consequently it is difficult to estimate the value of $\numax$ precisely.  In Section~\ref{subsec:bounds} we therefore show that the value of the optimal liquidation problem is fairly insensitive to the exact  value of the this parameter.

Finally we describe the objective of the trader.  Fix some admissible liquidation strategy $\Nu$ and denote by $\rho \ge 0$ the (subjective) discount rate of the trader.   The expected discounted  value of the revenue generated by  $\Nu$  is  equal to
\begin{equation}\label{eq:optimization_full}
J(\Nu) = \mathbb{E}\left (\int_0^\tau e^{-\rho u} \nu_u S_u^{\Nu} (1 -f(\nu_{u})) \ud u + e^{-\rho \tau } S_\tau^{\Nu} h( W_\tau) \right )\,.
\end{equation}
Here  $S^{\Nu}$ denotes the bid price given that the trader follows the strategy $\Nu$ (see Section~\ref{subsec:bid-price}), and the $\bF^S$ stopping time $\tau$ is given by
\begin{gather}\label{tau1}
\tau:=\inf\{t\geq 0 : \ W_t\leq 0\}\wedge T\,.
\end{gather}
The goal of the trader is to maximize \eqref{eq:optimization_full} over all admissible strategies; the corresponding  optimal value is denoted by $J^*$.

Note that the form of the objective function in \eqref{eq:optimization_full} implies that the trader is risk neutral. Risk neutrality seems a reasonable assumption in our setup  since the time period $[0,T]$ is fairly short and since the trader is confined to pure selling strategies so that the  risk she may take is limited. A straightforward way to incorporate risk aversion  into our model  is to include a penalty of the form $ - \int_0^\tau   e^{-\rho u} S_u W_u \ud u$ into the reward function. Such term penalizes slow execution and hence strategies with high price risk.
Similar ideas are discussed for instance in \citet{bib:cartea-jaimungal-penalva-15}, Section 6.5.

%
%

\subsection{Dynamics of the bid price.}\label{subsec:bid-price}

To capture the discrete nature of high-frequency price trajectories we  model the bid price as a Markov-modulated geometric finite activity pure jump process. Let $Y = (Y_{t})_{0 \leq t \le T}$
be a continuous-time finite-state Markov chain  on $(\Omega,\mathcal{F},\mathbb{F},\mathbf{P})$ with state space $\mathcal{E} = \{e_1,e_2,...,e_K\}$ ($e_k$ is $k$-th unit vector in  $\mathbb{R}^K$),  generator matrix $ Q=(q^{ij})_{i,j=1,\dots,K}$ and  initial distribution $\pi_0=(\pi_0^1,\cdots,\pi_0^K)$.
The bid price has the dynamics
 \begin{equation}\label{eq:bid_price}
 \ud S_t = S_{t^-} \ud R_t, \quad S_0 = s\in(0, \infty)\,,
 \end{equation}
where the \emph{return process}   $R=(R_t)_{0 \leq t \le T}$ is a finite activity pure jump process. Moreover, it holds that $\Delta R_t:=R_t- R_{t^-} > - 1$ for all $t$, so that $S$ is strictly positive.  Note that $\mathbb{F}^S$ is equal to $\mathbb{F}^R$, the filtration generated by the return process $R$; in the sequel we will use both filtrations interchangeably.  Denote by $\mu^R$ the random measure associated with $R$, defined by
\begin{equation} \label{eq:def-muR}
\mu^R(\ud t, \ud z):= \!\!\sum_{u \ge 0,  \Delta R_u\neq 0}\delta_{\{u, \Delta R_u\}}(\ud t, \ud z),\,
\end{equation}
and by $\eta^{\P}$ the $(\bF,\P)$-dual predictable projection (or compensating random measure) of $\mu^R$.  We assume that  $\eta^{\P}$  is absolutely continuous and of the form $\eta^\P(t,Y_{t^-},\nu_{t^-};\ud z) \ud t$, for a finite measure $\eta^\P(t,e,\nu;\ud z)$ on $\R$ and that  the processes $R$ and $Y$ have no common jumps. Hence $R$ and $Y$ are orthogonal, i.e.  $[R,Y]_t \equiv 0$  for all $t \in [0, T]$, $\P$-a.s.

The measure $\eta^\P(t,e,\nu;\ud z)$ is a crucial quantity as it determines the law of the bid price with respect to filtration $\bF$ under the probability $\P$.  The fact that $\eta^{\P} $ depends on the current liquidation rate serves to model permanent price impact; the dependence of $\eta^{\P} $ on $Y_{t-}$ can be used  to reproduce the clustering in inter-event durations observed in high frequency data and to  model the  feedback effect from the trading activity of the rest of the market.
Finally,  time-dependence of  $\eta^{\P} $  can be used to model the strong intra-day seasonality patterns observed for high frequency data. These aspects are explained in more detail in Example \ref{exm2} below.

Next we provide further motivation for our setup. The discrete nature of high frequency price trajectories is illustrated in Figure~\ref{fig:high-frequency-prices} where we plot of Google share price sampled at a one-second frequency, together with a QQ~plot of the corresponding returns. The latter plot clearly shows that the returns are strongly non-Gaussian.
\begin{figure}[ht]
\begin{center}
\includegraphics[width =6.8cm, height= 5.6cm]{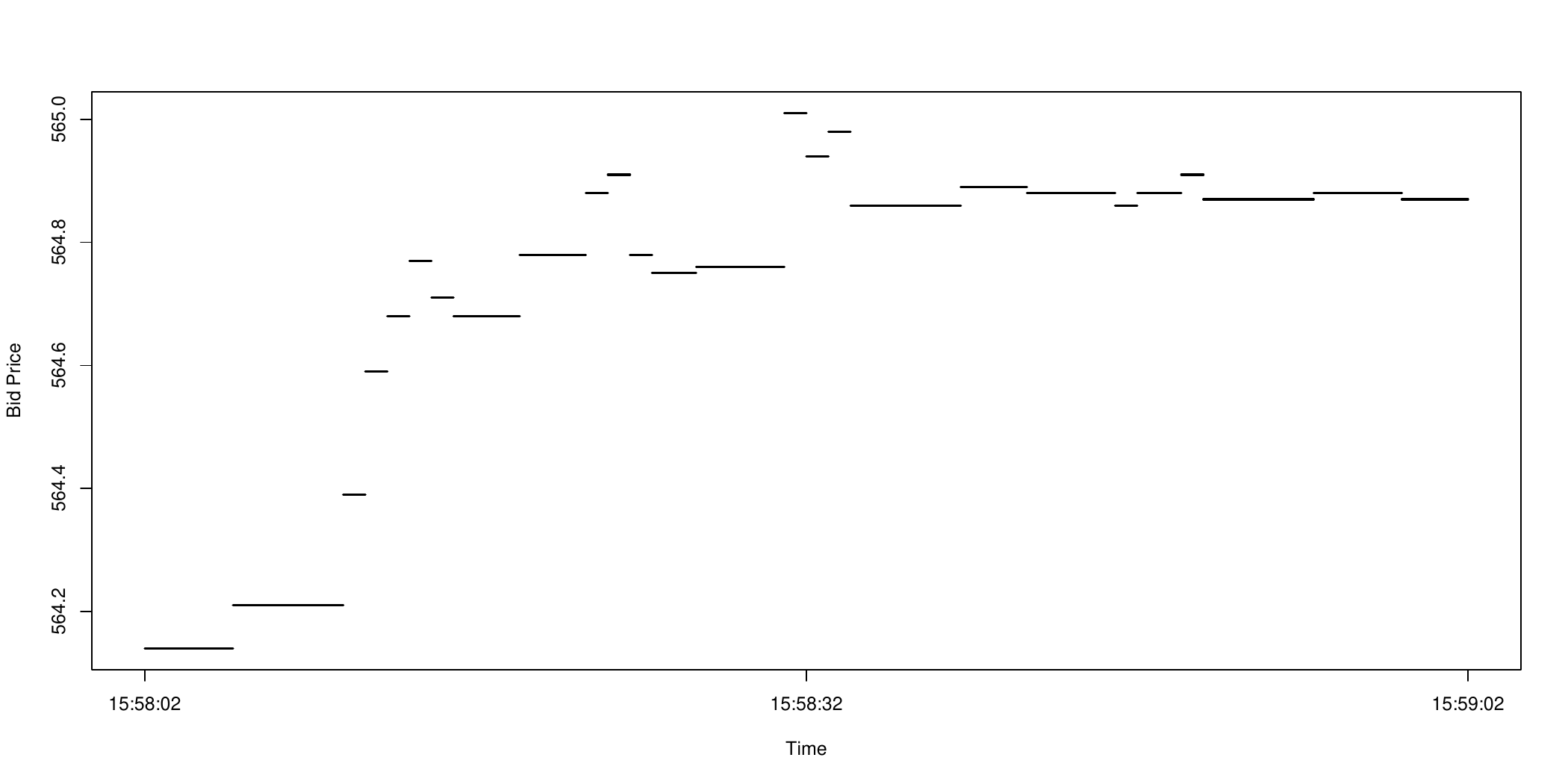} \quad
\includegraphics[width =6.8cm, height= 5.4cm]{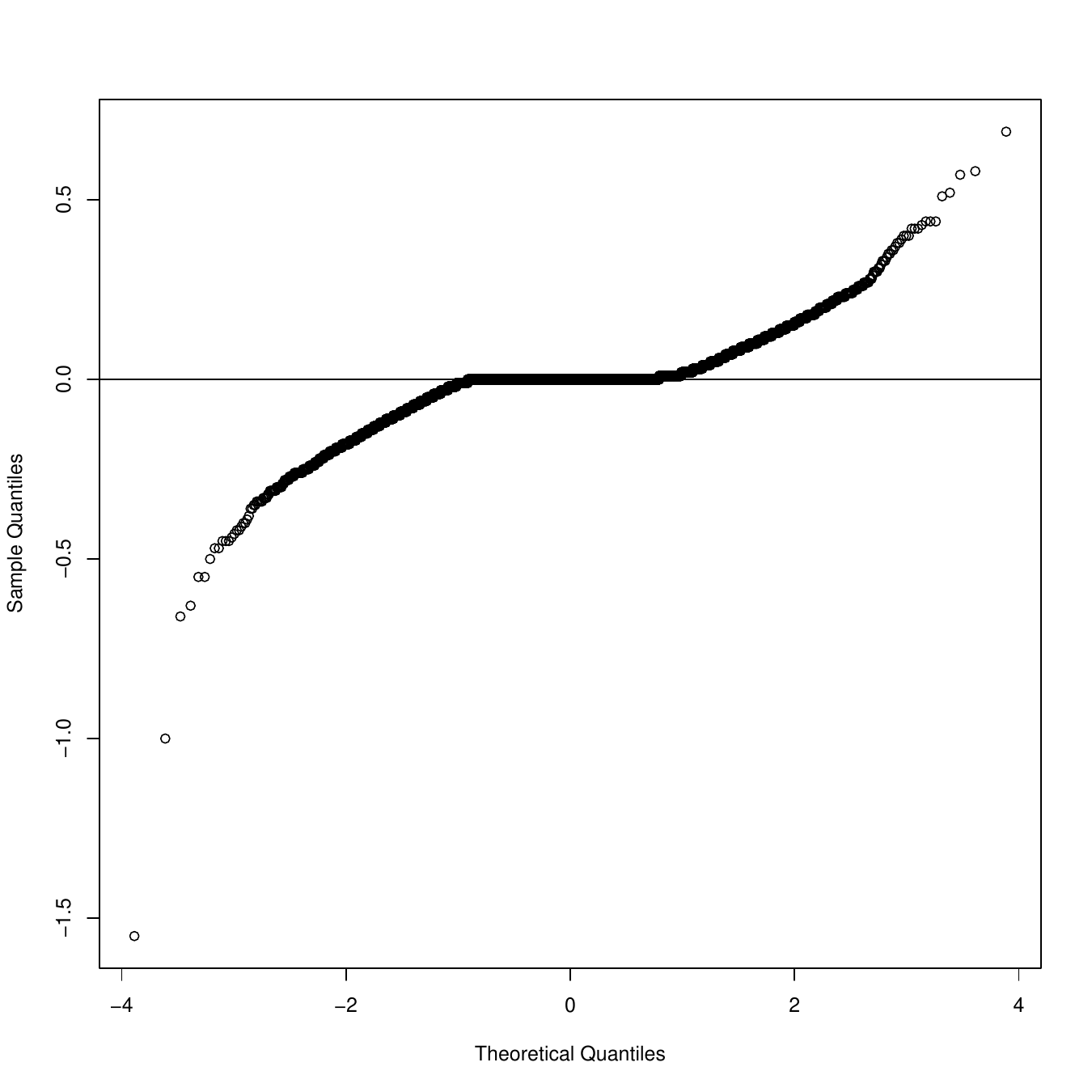}
\caption{Properties of high frequency returns. Left plot: Google share price on 2012-06-21 sampled between 15:58:02 and 15:59:02 at the frequency of one second; right plot: QQ~plot of the corresponding returns.   \label{fig:high-frequency-prices}}
\end{center}
\end{figure}
The  discretisation parameter  $\delta$ in \eqref{eq:revenue-from-liquidation} is typically of the order of a few minutes and hence larger than the time scale used in Figure~\ref{fig:high-frequency-prices},  so that on the $\delta$-timescale  a diffusion approximation for $S$ might  make sense.
In our setup with \emph{partial information} a point process  framework is nonetheless more appropriate, for the following reason.  As  the  process $Y$ is not directly observable, the optimal liquidation rate $\nu_t$ depends on the trader's  \emph{estimate} of the state of the market as  summarized by the conditional state probabilities $\pi_t^i=P(Y_t=e_i|\F^S_t)$, $i=1, \dots, K$. For instance, in the context of Example \ref{exm2} below it is intuitively clear  that if the conditional probability to be in the good state is high ($\pi^1_t$ close to one), the trader might want to wait in anticipation that the price will rise. In mathematical terms this means that  one has to add the filter process $(\pi^1_t, \dots, \pi^K_t )_{0\leq t \leq T} $ to the  state variables of the problem. The latter solves a $K$ dimensional SDE (called Kushner-Stratonovich equation) that is driven by the return observations. Here the following issue arises. In the numerical analysis of the optimal  liquidation problem  one needs to  solve this SDE  on a very fine time scale to make good use of the available information.\footnote{For instance we used one second returns for our calibration study in Section~\ref{subsec:calibration}.}  This leads to  numerical difficulties  if one works with  a diffusion model for $R$, essentially because   high frequency returns are strongly non-Gaussian.   For this reason we prefer to model the return  as  a marked point process.   Consistency between the model used for the computation of the filter process $\pi$ and the model used in the optimal liquidation problem itself thus implies that the latter problem should be analyzed  in a point process framework.

Further empirical support for our setup comes from a small calibration study with simulated  and real data presented in Section~\ref{subsec:calibration}. There we use the fact that for $\Nu   \equiv 0$ the model is a hidden Markov model with point process observation and we apply the expectation maximization (EM) methodology for Markov modulated point processes   to estimate the generator matrix of $Y$ and the parameters of the compensator $\eta^\P$.

It is interesting to consider the  \emph{semimartingale decomposition} of the bid price with respect to the full information filtration $\mathbb{F}$.  Denote for all $(t,e, \nu)\in[0,T]\times \mathcal E \times [0, \numax]$ the mean of ${\eta}^\P$ by
\begin{equation}\label{eq:drift}
\overline{\eta}^\P(t,e, \nu): = \int_{\R} z \, \eta^\P(t,e, \nu; \ud z)\,;
\end{equation}
$\overline{\eta}^\P(t,e, \nu)$ exists under Assumption~\ref{ass1} below. Fix some liquidation strategy $\Nu$. Then the martingale part $M^R$ of the return process is given by
$M^R_t = R_t - \int_0^t \overline{\eta}^\P (s,Y_{s-}, \nu_{s-})\ud s $, for all $t\in[0,T]$,  and  the $\mathbb{F}$-semimartingale decomposition of $S$ equals
\begin{equation}\label{eq:semimartingalefull}
S_t = S_0 + \int_0^t S_{s-} \ud M^R_s + \int_0^t  S_{s-} \overline{\eta}^\P(s,Y_{s-}, \nu_{s-})\ud s\,, \quad t\in[0,T]\,.
\end{equation}
In the sequel we assume that for all $(t,e) \in [0,T] \times \mathcal{E}$, the mapping $\nu \mapsto  \overline{\eta}^\P(t,e, \nu)$  is decreasing on $[0, \infty)$, that is the drift in the semimartingale decomposition \eqref{eq:semimartingalefull} of $S$ is decreasing in the liquidation rate. This is similar to the modeling of permanent price impact in \citet{almgren2001optimal} where liquidation adds a negative drift to bid price.

Finally we introduce some technical assumptions on the compensator $\eta^P$.
\begin{assumption} \label{ass1}
There  is a deterministic finite measure $\eta^{\Q}$ on $\R$ whose support, denoted by $\text{supp}(\eta)$, is  a compact subset of $(-1, \infty)$, such that for all $(t,e,\nu) \in[0,T] \times \mathcal{E}\times [0, \infty)$ the measure $\eta^{\P} (t,e, \nu; \ud z)$ is equivalent to   $\eta^{\Q}(\ud z) $.
     Furthermore,  for every $\numax < \infty$ there is some constant $M > 0$ such that
\begin{equation} \label{eq:cond-on-density}
    M^{-1} < \frac{\ud \eta^\P (t,e, \nu) }{\ud \eta^\Q}(z) < M  \text{ for all } (t, e, \nu)  \in[0,T] \times \mathcal{E} \times [0, \numax]\,.
\end{equation}
\end{assumption}
The assumption implies that for every fixed $\numax $ there is a  $ \lambda^{\text{max}}  < \infty$ such that
\begin{equation}\label{eq:bound_etaP}
\sup\{\eta^\P(t,e,\nu;\R) \colon (t,e,\nu) \in[0,T]\times \mathcal{E}\times[0 ,  \numax]\} \le  \lambda^{\text{max}} \,;
\end{equation}
in particular the counting process associated to the jumps of $S$ is $\P$-nonexplosive.
Moreover, it provides a sufficient condition for the existence of a \emph{reference probability measure}, i.e. a probability measure $\Q$ equivalent to $\P$ on $(\Omega, \F_T)$, such that under $\Q$, $\mu^R$ is a Poisson random measure with intensity  measure $\eta^{\Q}(\ud z)$,  independent of $Y$ and $\nu$. This is needed in the analysis of the filtering problem of the trader in Section~\ref{sec:partial info}. Note that the equivalence of $\eta^\P $ and $\eta^\Q$ implies that for all $(t,e,\nu) \in[0,T] \times \mathcal{E}\times [0, \infty)$ the support of $\eta^{\P}$ is equal to $\text{supp}(\eta)$. The assumption that $\text{supp}(\eta)$ is compact is not restrictive, since in reality the bid price moves only by a few ticks at a time.

The following examples serve to illustrate our framework; they  will be taken up in  Section~\ref{sec:numerics}.

\begin{example}\label{exm1}
Consider the case where the return process $R$ follows a bivariate  point process, i.e. there are two possible jump sizes, $\Delta R\in\{- \theta, \theta \}$ for some $\theta >0$. In this example we assume that the dynamics of $S$ is independent of  $Y$ and $t$. Moreover, the intensity $\lambda^+$ of an upward jump is constant and  equal to $\cupp >0$, and the intensity $\lambda^-$   of a downward jump depends on the rate of trading and is given by
$
\lambda^-(\nu) = c^{\text{down}}(1+ a \nu)
$
for constants  $c^{\text{down}}, a > 0$.
Note that, with this choice of $\lambda^-$, the intensity of a downward jump in $S$  is {linearly} increasing in the liquidation rate $\nu$. The  function $\overline{\eta}^\P$ from~\eqref{eq:drift} given by $\overline{\eta}^\P(\nu) = \theta (\cupp - \cdown(1+a \nu))$  is clearly independent of $t$ and $e$ and linearly decreasing in $\nu$.  Linear models for the permanent price  impact  are frequently considered in the literature as they have theoretical and empirical advantages; see for instance \citet{almgren} or \citet{gatheral2013dynamical}.
\end{example}

\begin{example}\label{exm2} Now we generalize Example~\ref{exm1} and  allow  $\eta^\P$ to depend  on the state process $Y$.  We consider a two-state Markov chain  $Y$ with the state space $\mathcal{E}=\{e_1,e_2\}$ and we assume that $e_1$ is a `good' state and $e_2$ a `bad' state in the following sense:  in state $e_1$ the intensity of an upward move  of the stock is larger than  in state $e_2$;   the intensity   of a downward move on the other hand is larger in state $e_2$ than in $e_1$.
We therefore choose constants  $\cupp_1 > \cupp_2 >0$,  $\cdown_2 > \cdown_1>0$ and a price impact parameter  $a> 0$ and we set for $i=1,2$,
\begin{equation}\label{eq:lambda-ex2}
\lambda^+ (e_i, \nu )  = (\cupp_1, \cupp_2) e_i \; \text{ and } \lambda^- (e_i, \nu ) = (1 + a \nu) (\cdown_1, \cdown_2) e_i.
\end{equation}
 Then,
$
\eta^\P( e_i, \nu,\ud z) = \lambda^+ (e_i, \nu ) \delta_{\{\theta \}}(\ud z) +
 \lambda^- (e_i, \nu)  \delta_{ \{-\theta \}}(\ud z)
$, for $i=1,2$.
Since $\cupp_1 > \cupp_2$, in state $e_1$ one has on average more  buy orders; this   might represent a scenario where   another trader is executing a large buy program.   Similarly, since $\cdown_2 > \cdown_1$,  there are on average more sell orders  in state $e_2$, for instance because another trader is executing a large sell program. The form of $\eta^\P$ implies that the  permanent price impact is linear and  proportional to the intensity of a downward move and hence larger in the `bad' state $e_2$ than in the good state $e_1$.

Note that within our general setup this example could be enhanced in a number of ways. For instance, the transition intensities $ \cupp_i$ and $\cdown_i$ and the liquidity parameter $a$ could be made time dependent to reflect the fact that on most markets trading activity during the day is $U$-shaped with more trades occurring at the beginning and the end of a day than in the middle. Moreover, one  could introduce an additional state where the market is moving sideways, or one could consider the case where the liquidity parameter $a$ depends on $Y$.
\end{example}

\subsection{Bounds on the value function} \label{subsec:bounds}

In the previous section we have seen that having a temporary price impact described by a superlinear and convex function $\nu f(\nu)$ implies an endogenous upper bound $\numax$ for the liquidation strategy. However it is difficult to estimate this value if the exact form of the function $f$ is unknown. In this section we provide a robustness result by showing that the optimal proceeds from liquidation are almost independent of the precise value of $\numax$.  To this, we define $J^{*,m}$ as the  optimal liquidation value if the trader uses $\bF^S$-adapted strategies with $\nu_t \le m$ for all $t$ and prove in Proposition~\ref{prop:bound-on-value} that $J^{*,m}$ is bounded independently of $m$. Now the sequence $\{J^{*,m}\}_{m \in \bN}$ is obviously  increasing,  since a  higher $m$ means that the trader can optimize over a larger set of strategies. Hence, $\{J^{*,m}\}_{m \in \bN}$ is Cauchy which leads to the result.

\begin{proposition}\label{prop:bound-on-value}
Suppose that Assumption~\ref{ass1} holds and that the function $(t,e,\nu)\to\overline{\eta}^\P(t,e, \nu)$ from \eqref{eq:drift} is decreasing in $\nu$, and set
$$\overline{\eta} = 0 \vee \sup \{\overline{\eta}^\P(t,e, 0) - \rho \colon t \in[0,T], e \in \mathcal{E}\}.$$
Then $\sup_{m >0} J^{*,m} \le w_0 S_0 e^{ \overline{\eta}T}$.
\end{proposition}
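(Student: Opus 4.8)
The plan is to derive an a priori upper bound for the functional $J(\Nu)$ in \eqref{eq:optimization_full} that holds for every admissible strategy and does not involve the bound on the liquidation rate. Since $f\ge 0$ gives $1-f(\nu_u)\le 1$ while $h(w)\le w$ gives $h(W_\tau)\le W_\tau$, and since $S>0$, $\nu\ge0$ and $W_t\ge0$ on $[0,\tau]$, the first reduction is
\[
J(\Nu)\le \mathbb{E}\Big(\int_0^\tau e^{-\rho u}\nu_u S_u\,\ud u + e^{-\rho\tau}S_\tau W_\tau\Big).
\]
It therefore suffices to bound the right-hand side by $w_0 S_0 e^{\overline{\eta}T}$ uniformly in $\Nu$.

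Next I would rewrite the right-hand side by integration by parts. Writing $A_t:=e^{-\rho t}S_t$, the semimartingale decomposition \eqref{eq:semimartingalefull} shows that $\ud A_t = A_{t^-}\,\ud M^R_t + A_{t^-}\big(\overline{\eta}^\P(t,Y_{t^-},\nu_{t^-})-\rho\big)\,\ud t$. Since $W$ is continuous and of finite variation with $\ud W_t=-\nu_t\,\ud t$, there is no covariation term, and the product rule gives, after integrating to $\tau$ and rearranging,
\[
\int_0^\tau A_s\nu_s\,\ud s + A_\tau W_\tau = S_0 w_0 + \int_0^\tau W_{s^-}A_{s^-}\,\ud M^R_s + \int_0^\tau W_{s^-}A_{s^-}\big(\overline{\eta}^\P(s,Y_{s^-},\nu_{s^-})-\rho\big)\,\ud s.
\]
Taking expectations (the stochastic integral against $M^R$ having zero mean, see below), using the monotonicity of $\nu\mapsto\overline{\eta}^\P(\cdot,\nu)$ together with the definition of $\overline{\eta}$ to get $\overline{\eta}^\P(s,Y_{s^-},\nu_{s^-})-\rho\le\overline{\eta}^\P(s,Y_{s^-},0)-\rho\le\overline{\eta}$, and the nonnegativity of $W_{s^-}A_{s^-}$, together with $W_{s^-}\le w_0$ and an extension of the integration to $[0,T]$, I obtain
\[
\mathbb{E}\Big(\int_0^\tau A_s\nu_s\,\ud s + A_\tau W_\tau\Big)\le S_0 w_0 + \overline{\eta}\,w_0\int_0^T \mathbb{E}(A_s)\,\ud s.
\]

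The remaining input is a bound on $g(s):=\mathbb{E}(A_s)$. Taking expectations in the dynamics of $A$ and using the same monotonicity and the definition of $\overline{\eta}$ yields the integral inequality $g(t)\le S_0 + \overline{\eta}\int_0^t g(s)\,\ud s$, so that Gronwall's lemma gives $g(s)\le S_0 e^{\overline{\eta}s}$. Substituting this and computing $\overline{\eta}\,w_0 S_0\int_0^T e^{\overline{\eta}s}\,\ud s = w_0 S_0(e^{\overline{\eta}T}-1)$ (the contribution being $0$ when $\overline{\eta}=0$) produces exactly $J(\Nu)\le w_0 S_0 e^{\overline{\eta}T}$. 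As this bound depends neither on $\Nu$ nor on the rate bound $m$, taking the supremum over admissible strategies with $\nu_t\le m$ and then over $m$ gives the claim. The main obstacle is the justification that the two stochastic integrals against $M^R$ are true martingales rather than merely local ones, which is what legitimizes dropping them under expectation. I would handle this by localization: since the jump sizes of $R$ are bounded (the support of $\eta$ is a compact subset of $(-1,\infty)$) and the intensity is bounded by $\lambda^{\text{max}}$ by \eqref{eq:bound_etaP}, the process $S$ is nonexplosive and the stopping times $\tau_n:=\inf\{t:S_t\ge n\}\wedge T$ increase to $T$; on each $[0,\tau_n]$ the integrand $W_{s^-}A_{s^-}$ is bounded, so the stopped integrals are genuine martingales. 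Running the Gronwall argument and the main estimate along $\tau\wedge\tau_n$ yields the same bound uniformly in $n$, and the resulting finite bound on $\mathbb{E}(A_{s\wedge\tau_n})$ together with Fatou's lemma both secures the integrability $\mathbb{E}\int_0^T A_s\,\ud s<\infty$ needed for the martingale property and lets me pass to the limit $n\to\infty$ to recover the estimate at the untruncated horizon $\tau$.
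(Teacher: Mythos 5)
Your proof is correct and follows essentially the same route as the paper: the same reduction using $f\ge 0$ and $h(w)\le w$, the same integration by parts for $\int_0^\tau \nu_s e^{-\rho s}S_s\,\ud s$, the same use of the monotonicity of $\overline{\eta}^\P$ in $\nu$, and the same Gronwall bound $\mathbb{E}(e^{-\rho t}S_t)\le S_0 e^{\overline{\eta}t}$ leading to $w_0S_0 e^{\overline{\eta}T}$. The only divergence is the justification of the true-martingale property of the integrals against $M^R$: the paper invokes the second-moment and quadratic-variation bound of Lemma~\ref{lemma:martingale-part-of-S}, whereas you localize at $\tau_n=\inf\{t:S_t\ge n\}\wedge T$ (using the bounded jump support and bounded intensity) and pass to the limit via Fatou and monotone convergence --- both arguments are valid, and yours has the mild advantage of not requiring square-integrability of $S$.
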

Note that the upper bound on $J^*$ corresponds to the liquidation value of the inventory in a frictionless model  where the expected value of the bid price grows at the maximum rate $ \overline{\eta} + \rho$.

\begin{proof} Fix some $\bF^S$-adapted strategy $\Nu$ with values in  $[0,m]$, denote the corresponding  bid price  by $ S^{\Nu}$ and let $\widetilde{S}^{\Nu}_t = e^{-\rho t} S_t^{\Nu} $. Since   $W_t = w_0 - \int_0^t \nu_s \ud s$ we get by partial integration that
$$
\int_0^\tau \nu_s \widetilde{S}_s^{\Nu} \ud s = - \int_0^\tau \widetilde{S}_s^{\Nu} \ud W_s =
S_0 w_0 - \widetilde{S}_\tau^{\Nu}  W_\tau + \int_0^\tau W_s \ud \widetilde{S}^{\Nu}_s\,.
$$
Since $h(w) \le w$ and $f(\nu) \ge 0$ we thus get that
$$
\int_0^\tau  \! \nu_u \widetilde{S}_u^{\Nu} (1 -f(\nu_{u})) \ud u +  \widetilde{S}_\tau^{\Nu} h( W_\tau)
\le \int_0^\tau \nu_u \widetilde{S}_u^{\Nu} \ud u  + \widetilde{S}_\tau^{\Nu} W_\tau  =
S_0 w_0 + \int_0^\tau W_u \ud \widetilde{S}^{\Nu}_u .
$$
Notice that $ \int_0^\tau W_u \ud \widetilde{S}^{\Nu}_u = \int_0^\tau W_u  \widetilde{S}^{\Nu}_u \ud M_u^R +
\int_0^\tau W_u  \widetilde{S}^{\Nu}_u ( \overline{\eta}^\P(u,Y_{u-}, \nu_{u-}) -\rho) \ud u$.
Moreover the process$\int_0^{\cdot \wedge \tau} W_u  \widetilde{S}^{\Nu}_u \ud M_u^R$ is a true martingale.  As $0 \le W_u \le w_0$,  a similar argument as in the proof of  Lemma~\ref{lemma:martingale-part-of-S} shows that this process  is of integrable quadratic variation.   Since   $\overline{\eta}^\P(u,Y_{u-}, \nu_{u-}) -\rho \le  \overline{\eta}$,  $\tau  \le T$ and $W_u \le w_0$, we  get
\begin{align}
J(\Nu) &\le S_0 w_0 + \mathbb{E} \Big ( \int_0^\tau  W_u  \widetilde{S}^{\Nu}_u  (\overline{\eta}^\P(u,Y_{u-}, \nu_{u-}) - \rho)
\ud u \Big )
 \\ \label{eq:estimate-on-J}
 & \le S_0 w_0 + \mathbb{E} \Big ( \int_0^T w_0  \widetilde{S}^{\Nu}_u \overline{\eta}\,  \ud u \Big)\,.
\end{align}
Next we show that $\mathbb{E} \big( \widetilde{S}_t^{\Nu} \big) \le S_0 e^{\overline{\eta} t}  $. To this end, note that by Lemma \ref{lemma:martingale-part-of-S}, $\int_0^\cdot S_{s-}^{\Nu} \ud M_s^R$ is a true martingale so that
$$ \mathbb{E} \big( \widetilde{S}_t^{\Nu} \big) = S_0 +
   \mathbb{E} \Big ( \int_0^t \widetilde{S}^{\Nu}_u  (\overline{\eta}^\P(u,Y_{u-}, \nu_{u-}) - \rho) \ud u \Big )
   \le S_0 + \overline{\eta} \int_0^t \mathbb{E} \big( \widetilde{S}_u^{\Nu} \big) \ud u,
$$
and the claim follows from the Gronwall inequality. Using \eqref{eq:estimate-on-J} we finally get  that
$
J(\Nu) \le S_0 w_0 (1 + \int_0^T \overline{\eta} e^{\overline{\eta} u} \ud u ) = S_0 w_0 e^{\overline{\eta} T}
$,
and hence the result.
\end{proof}

\section{Partial Information and Filtering}\label{sec:partial info}

In this section we derive the filtering equations for our model.
Filtering for point process observations is for instance considered in \citet{frey2012pricing, ceci2012nonlinear, ceci2014zakai}. This literature is mostly based on the innovations approach. In this paper, instead, we address  the filtering problem  via the {\em reference probability approach}. This methodology relies on the existence of an equivalent probability measure such that the observation process is driven by a random measure with dual predictable projection independent of the Markov chain, see for instance \citet[Chapter 6]{bremaud1981point}.  The reference probability approach permits us to give a rigorous construction of our model, see Lemma~\ref{lem:change-of-measure}.


\subsection{Reference probability.}\label{sec:ref_probability}

We start from a filtered probability space $(\Omega,\mathcal{F},\mathbb{F},\Q)$ that supports a Markov chain $Y$ with state space $\mathcal{E}$ and generator matrix  $Q$, and an \emph{independent} Poisson random measure $\mu^R$  with compensator $\eta^\Q(\ud z) \ud t$ as in Assumption \ref{ass1}.2; $\Q$ is known as the {\em reference probability} measure.
Note that  the independence of $Y$ and $\mu^R$ implies that  $R$ and $Y$ have no common jumps.
For  $(t,e,\nu,z) \in [0,T]\times \mathcal{E}\times [0,\numax] \times \text{supp}(\eta) $, we define the function $\beta$ by
\begin{equation} \label{eq:def-beta}
 \beta(t,e, \nu, z): =\frac{\ud \eta^\P (t,e, \nu;\ud z) }{\ud \eta^\Q(\ud z)}(z) - 1\,,
\end{equation}
i.e. $\beta(t,e, \nu, z)+1$ is the Radon-Nikodym derivative of the measure $\eta^\P (t,e, \nu; \ud z)$  with respect to $\eta^\Q(\ud z)$.

We denote by $\bF^R$ the filtration generated by $\mu^R$.   Fix some $\bF^R$-adapted  liquidation strategy $\Nu$ with $\nu_t \in [0, \numax]$, $t \le T$ and define for $t \in[0,T]$ the stochastic exponential  $\widetilde Z$ by
\begin{equation} \label{eq:densityZtilde}
\widetilde{Z}_{t}=1+ \int_0^t \int_{\mathbb{R}} \widetilde{Z}_{s^-} \beta( s,Y_{s^-},\nu_{s^-}, z) \left( \mu^R(\ud s,\ud z) -\eta^\Q(\ud z) \ud s  \right).
\end{equation}
Then we have the following result.
\begin{lemma}\label{lem:change-of-measure} Let Assumption \ref{ass1} prevail. Then the process $\widetilde Z$ is a strictly positive  martingale with $\bE^\Q \big( \widetilde Z_T\big) =1.$ Define a measure $\P$ on $\mathcal{F}_T$ by setting $\frac{\ud \P}{\ud \Q}\big |_{\F_T} = \widetilde Z_T$. Then $\P$ and $\Q$ are equivalent and, under $\P$, the random measure $\mu^{R}$ has the compensator $\eta^\P$.
\end{lemma}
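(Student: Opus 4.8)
The plan is to recognise $\widetilde Z$ as the Doléans-Dade exponential $\widetilde Z = \doleans{L}$ of the $\Q$-local martingale $L_t = \int_0^t\int_{\R}\beta(s,Y_{s^-},\nu_{s^-},z)\,(\mu^R(\ud s,\ud z)-\eta^\Q(\ud z)\,\ud s)$, and to establish the three assertions in turn: strict positivity, the true martingale property (the crux), and the identification of the $\P$-compensator. I would begin by recording the bounds on $\beta$ coming from Assumption~\ref{ass1}. The two-sided estimate \eqref{eq:cond-on-density} yields $M^{-1}-1<\beta(t,e,\nu,z)<M-1$ uniformly on $[0,T]\times\mathcal E\times[0,\numax]\times\text{supp}(\eta)$; in particular $\beta>-1$ and $|\beta|<C:=M-1$. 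Since the $\Q$-intensity $\eta^\Q$ is finite with compact support, the associated counting process is $\Q$-nonexplosive (cf.~\eqref{eq:bound_etaP}), so $L$ has finitely many jumps on $[0,T]$ a.s.\ and the stochastic exponential admits the explicit product representation
\[
\widetilde Z_t = \exp\Big(-\int_0^t\!\!\int_\R \beta(s,Y_{s^-},\nu_{s^-},z)\,\eta^\Q(\ud z)\,\ud s\Big)\prod_{0<s\le t,\ \Delta R_s\neq 0}\big(1+\beta(s,Y_{s^-},\nu_{s^-},\Delta R_s)\big).
\]
Each factor in the product equals $\tfrac{\ud \eta^\P}{\ud\eta^\Q}(\cdots)\in(M^{-1},M)$ and the exponential prefactor is positive, which gives $\widetilde Z>0$ and hence strict positivity.

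The main obstacle is to upgrade the nonnegative $\Q$-local martingale $\widetilde Z$ (being a stochastic integral against the compensated measure $\mu^R-\eta^\Q\,\ud s$, it is automatically a local martingale) to a genuine martingale with $\bE^\Q(\widetilde Z_T)=1$. Here I would exploit the finite-activity structure. Writing $\lambda^\Q:=\eta^\Q(\R)<\infty$ and letting $N_t:=\mu^R([0,t]\times\R)$, under $\Q$ the process $N$ is Poisson with rate $\lambda^\Q$. From the product formula, using $|\beta|<C$ for the prefactor and $1+\beta<M$ (with $M>1$) for each of the $N_t$ factors, one obtains the uniform domination $\sup_{t\le T}\widetilde Z_t\le e^{C\lambda^\Q T}\,M^{N_T}$. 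Since $N_T$ is Poisson, $\bE^\Q\big(M^{N_T}\big)=e^{\lambda^\Q T(M-1)}<\infty$, so $\sup_{t\le T}\widetilde Z_t\in L^1(\Q)$. A nonnegative local martingale dominated on $[0,T]$ by an integrable random variable is of class (D), hence a true (uniformly integrable) martingale; in particular $\bE^\Q(\widetilde Z_T)=\widetilde Z_0=1$.

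Given this, the measure $\P$ with $\tfrac{\ud\P}{\ud\Q}\big|_{\F_T}=\widetilde Z_T$ is a probability measure, and since $\widetilde Z_T>0$ $\Q$-a.s.\ the two measures are mutually absolutely continuous on $\F_T$, i.e.\ equivalent. Finally, to identify the $\P$-compensator of $\mu^R$ I would invoke the Girsanov theorem for integer-valued random measures (as in \citet[Chapter 6]{bremaud1981point}): since the density process solves \eqref{eq:densityZtilde} with predictable integrand $1+\beta\ge 0$, the $\Q$-compensator $\eta^\Q(\ud z)\,\ud t$ is transformed into $(1+\beta(t,Y_{t^-},\nu_{t^-},z))\,\eta^\Q(\ud z)\,\ud t=\eta^\P(t,Y_{t^-},\nu_{t^-};\ud z)\,\ud t$ by \eqref{eq:def-beta}. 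If a self-contained verification is preferred, I would fix a nonnegative predictable integrand $H$, set $N_t=\int_0^t\!\int H\,(\mu^R(\ud s,\ud z)-\eta^\P(s,Y_{s^-},\nu_{s^-};\ud z)\,\ud s)$, and use the Bayes rule — $N$ is a $\P$-local martingale iff $N\widetilde Z$ is a $\Q$-local martingale — together with integration by parts; a short computation of the drift of $N\widetilde Z$ under $\Q$, using $\ud\eta^\P=(1+\beta)\,\ud\eta^\Q$ and $\ud[N,\widetilde Z]_t=\int H\,\widetilde Z_{t^-}\beta\,\mu^R(\ud t,\ud z)$, shows that all $\ud t$-terms cancel, so $N\widetilde Z$ has vanishing drift and the claimed compensator follows.

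The step I expect to be delicate is the true martingale property: local or supermartingale status is immediate, but ruling out a loss of mass requires the quantitative domination above, which is precisely where Assumption~\ref{ass1} (boundedness of $\beta$ and finiteness of $\eta^\Q$, hence Poisson control of the jump count) is used.
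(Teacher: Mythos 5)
Your proposal is correct, and on the one genuinely hard step it takes a different, more self-contained route than the paper. The paper's proof is essentially three citations: strict positivity follows from $\beta>-1$ (a consequence of the equivalence of $\eta^\P$ and $\eta^\Q$ in Assumption~\ref{ass1}); the true martingale property of $\widetilde Z$ is obtained by invoking \citet{protter2008arbitrage} on the strength of \eqref{eq:cond-on-density} and \eqref{eq:bound_etaP}; and the identification of the $\P$-compensator is, exactly as in your final step, the Girsanov theorem for random measures, \citet[Ch.~VIII, Theorem~T10]{bremaud1981point}. You instead exploit the finite-activity structure explicitly: the Dol\'eans-Dade product representation makes positivity transparent (each jump factor equals the Radon--Nikodym density and lies in $(M^{-1},M)$), and the pathwise domination $\sup_{t\le T}\widetilde Z_t\le e^{(M-1)\lambda^\Q T}M^{N_T}$, with $N_T$ Poisson under $\Q$ so that $\bE^\Q\big(M^{N_T}\big)=e^{\lambda^\Q T(M-1)}<\infty$, yields class (D) and hence the martingale property with no external criterion. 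Your bounds check out: \eqref{eq:cond-on-density} gives $M^{-1}-1<\beta<M-1$, so $|\beta|\le M-1$ since $M>1$, and the exponential prefactor is controlled by $e^{(M-1)\lambda^\Q T}$. What each approach buys: the paper gets brevity by delegating the delicate step to Protter--Shimbo, while your argument is elementary and quantitative, and it makes visible precisely where Assumption~\ref{ass1} enters (boundedness of the density and finiteness of $\eta^\Q$, hence Poisson control of the jump count); your alternative self-contained Girsanov verification via the Bayes criterion and the cancellation of $\ud t$-terms in the drift of $N\widetilde Z$ is also sound, though unnecessary given the Br\'emaud reference. One minor remark: beyond the stated lemma, the paper's proof also records that $\widetilde Z$ and $Y$ are orthogonal (since $R$ and $Y$ have no common jumps), so the law of $Y$ is unchanged under $\P$; this is not asserted in the lemma but is implicitly needed for the model construction in Section~\ref{sec:ref_probability}, and your framework covers it by the same observation.
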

The proof of the lemma is postponed to Appendix \ref{app:filtering}.

Note that Lemma~\ref{lem:change-of-measure} gives a rigorous construction of the model introduced in  Section~\ref{subsec:bid-price}. The advantage of using a change-of-measure approach is the fact that the Poisson random measure $\mu^R$ and the observation filtration $\bF^R$ are  exogenously given.  If one attempts a direct construction  circularities arise, as  the  process $R$   depends on the  strategy $\Nu$ which is in turn adapted to the filtration $\bF^R$.


\subsection{Filtering equations.}

For a function $f\colon\mathcal{E} \to \R$, we introduce the  filter $\pi(f)$ as the optional projection of the process $f(Y)$ on the filtration $\bF^S$, i.e.~$\pi(f)$ is a c\`{a}dl\`{a}g process such that for all $t \in [0,T]$, it holds that
$\pi_t(f)=\esp{f(Y_t)\mid\F^S_t}$. Note that since $Y$ is a finite state Markov chain  $f(Y_t) = \langle\mathbf{f},Y_t \rangle$ for all $t \in [0,T]$, where $\langle  \ , \ \rangle$ denotes the scalar product on $\R^K$ and $\mathbf{f}_i = f(e_i)$, $i \in \{1, \dots, K\}$, and therefore functions of the Markov chain can be identified with $K$-vectors.
Let for all $t \in [0,T]$ and $i \in \{1, \dots, K\}$, $\pi^i_t:=\esp{\I_{\{Y_t=e_i\}}\mid\F^S_t}$ . Then, we can represent the filter as
\begin{equation}\label{eq:filterMC}
\pi_{t}(f)=\sum_{i=1}^K \mathbf{f}_i\pi^i_t=\langle \mathbf{f},\pi_t \rangle, \quad 0 \leq t \leq T.
\end{equation}
The objective of this section is to  derive the dynamics of the process $\bpi = (\pi^1, \dots, \pi^K)$.
To this end, we first observe that by the  Kallianpur-Striebel formula we have
$\displaystyle \pi_t(f):=\frac{p_t(f)}{p_t(1)}$ for all $t \in [0,T]$,
where $p(f)$ denotes the \emph{unnormalized} version of the filter, which is defined by
\begin{equation}\label{eq:unnormalized}
p_t(f):=\qesp{\widetilde{Z}_t \langle \mathbf{f},Y_t \rangle  \mid\F^S_t}, \quad 0 \leq t \leq T.
\end{equation}
The dynamics of $p(f)$ is given in the next theorem.

\begin{theorem}[The Zakai equation]\label{thm:Zakai}
Suppose Assumption~\ref{ass1}  holds and let $f\colon\mathcal{E}\to \mathbb{R}$. Then, for all $t \in [0,T]$, the unnormalized filter \eqref{eq:unnormalized} solves the equation:
\begin{equation}\label{eq:zakai}
p_t(f)=\pi_0(f)+\int_0^tp_s(Qf)\ud s+\int_0^t\int_\R p_{s^-}\left( \beta(z) f\right)(\mu^R(\ud s, \ud z)-\eta^\Q_s(\ud z)\ud s),
\end{equation}
where  $p_{t^-}\left( \beta(z) f\right)=\qesp{f(Y_{t^-}) \widetilde Z_{t^-} \beta(t,Y_{t^-},\nu_{t^-},z)\mid\F^S_t}$ and $p_t(Qf)=\qesp{\widetilde{Z}_t \langle Q \mathbf{f},Y_t \rangle  \mid\F^S_t}$.
\end{theorem}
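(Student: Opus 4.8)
The plan is to derive the $\Q$-dynamics of the process $\widetilde Z_t\langle\mathbf f,Y_t\rangle$ by the product rule and then to pass to its optional projection onto $\bF^S$, matching the three groups of terms in \eqref{eq:zakai} one at a time. First I would record the two ingredients. Under $\Q$ the chain $Y$ has generator $Q$, so $\langle\mathbf f,Y_t\rangle=\langle\mathbf f,Y_0\rangle+\int_0^t\langle Q\mathbf f,Y_s\rangle\,\ud s+m^f_t$ with $m^f$ a purely discontinuous $\Q$-martingale that jumps only when $Y$ jumps; by \eqref{eq:densityZtilde} the density $\widetilde Z$ is a purely discontinuous $\Q$-martingale jumping only at the jump times of $\mu^R$, and by Lemma~\ref{lem:change-of-measure} (under Assumption~\ref{ass1}) it is a strictly positive true martingale. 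Since $|\mathbf f|<\infty$ and $\beta$ together with the intensities are bounded by Assumption~\ref{ass1}, all processes below are integrable. The key structural fact is that under $\Q$ the chain $Y$ and the Poisson random measure $\mu^R$ are independent, hence have no common jumps, so the covariation $[\widetilde Z,m^f]\equiv 0$.

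Applying integration by parts and using $[\widetilde Z,m^f]\equiv 0$ then yields
\begin{align*}
\widetilde Z_t\langle\mathbf f,Y_t\rangle=\langle\mathbf f,Y_0\rangle &+\int_0^t\widetilde Z_s\langle Q\mathbf f,Y_s\rangle\,\ud s+\int_0^t\widetilde Z_{s^-}\,\ud m^f_s\\
&+\int_0^t\!\!\int_\R\widetilde Z_{s^-}\langle\mathbf f,Y_{s^-}\rangle\,\beta(s,Y_{s^-},\nu_{s^-},z)\big(\mu^R(\ud s,\ud z)-\eta^\Q(\ud z)\,\ud s\big).
\end{align*}
I would then apply $\mathbb E^{\Q}(\,\cdot\mid\F^S_t)$, recalling $p_t(f)=\mathbb E^{\Q}(\widetilde Z_t\langle\mathbf f,Y_t\rangle\mid\F^S_t)$ from \eqref{eq:unnormalized}, and project term by term. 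The constant gives $\pi_0(f)$. For the drift, the optional projection commutes with the $\ud s$-integral and the projection of $s\mapsto\widetilde Z_s\langle Q\mathbf f,Y_s\rangle$ is $p_s(Qf)$ by definition, producing $\int_0^t p_s(Qf)\,\ud s$. The chain-martingale term projects to zero: conditioning first on the whole trajectory of $\mu^R$ (which generates $\F^S_T$) and using that, by $\Q$-independence, $Y$ remains a Markov chain with generator $Q$ independent of this trajectory, the integrand $\widetilde Z_{s^-}$ is predictable while $m^f$ is still a martingale with respect to the correspondingly enlarged filtration of $Y$, so the conditional expectation of $\int_0^t\widetilde Z_{s^-}\,\ud m^f_s$ vanishes, and the tower property gives $\mathbb E^{\Q}(\int_0^t\widetilde Z_{s^-}\,\ud m^f_s\mid\F^S_t)=0$. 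Finally, since $\Nu$ is $\bF^S$-predictable, the predictable projection can be carried inside the compensated integral, replacing the integrand by its predictable projection $p_{s^-}(\beta(z)f)$ as defined in the statement; collecting terms gives \eqref{eq:zakai}.

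I expect the main obstacle to be the rigorous justification of the two interchanges of conditional expectation with stochastic integration, which I would isolate in a projection lemma valid under the reference measure. The vanishing of the chain-martingale contribution rests on the $\Q$-independence of $Y$ and $\mu^R$ (the same fact that gives $[\widetilde Z,m^f]\equiv0$), and the identification of the projected martingale as an integral against $\mu^R-\eta^\Q(\ud z)\,\ud s$ rests on the predictable representation property of the Poisson random measure $\mu^R$ under $\Q$. The genuinely delicate step is pushing the $\bF^S$-predictable projection inside the compensated jump integral: one must check that this does not disturb the compensator, and it is precisely here that the reference-measure structure — $\eta^\Q$ deterministic and independent of $Y$ and $\Nu$ — together with the integrability from Assumption~\ref{ass1} is essential.
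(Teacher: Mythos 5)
Your proposal is correct in outline and shares the paper's skeleton exactly: It\^o's product rule for $\widetilde Z_t\langle\mathbf f,Y_t\rangle$, orthogonality $[\widetilde Z,M^f]\equiv 0$ from the absence of common jumps of $R$ and $Y$, and term-by-term projection onto $\F^S_t$. Where you genuinely diverge is in the technical devices. The paper never projects $\widetilde Z$ directly: it introduces the truncation $\widetilde Z^\epsilon_t=\widetilde Z_t/(1+\epsilon\widetilde Z_t)$, which is bounded by $1/\epsilon$, runs the whole argument at the truncated level, and only at the end lets $\epsilon\to0$ by dominated convergence, handling the $\mu^R$-integral by writing it as a finite sum over the jump times $T_n\le t$. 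Your substitute --- the blanket assertion that ``all processes below are integrable'' because $\beta$ is bounded --- is the one soft spot: $\widetilde Z$ is unbounded, and to know that $\int_0^\cdot\widetilde Z_{s^-}\,\ud m^f_s$ and the compensated $\mu^R$-integral are true martingales, and that conditional Fubini applies, you need a moment bound such as $\bE^{\Q}\bigl[\sup_{t\le T}\widetilde Z_t^2\bigr]<\infty$. This is repairable under Assumption~\ref{ass1} (under $\Q$ the jump count $N_T$ is Poisson with finite intensity $\eta^{\Q}(\R)$ and $1+\beta\in(M^{-1},M)$, so $\widetilde Z_T\le M^{N_T}e^{cT}$ has all moments and Doob's inequality controls the supremum), but it must be said; the paper's truncation is precisely the device that avoids any moment computation. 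Second, to kill the chain-martingale term the paper uses a duality argument: for bounded $\F^S_t$-measurable $H$ it represents $H_u=\qesp{H\mid\F^S_u}$ as a stochastic integral against $\mu^R-\eta^{\Q}$ via the martingale representation theorem, and concludes from $[M^f,H]\equiv0$ and the product rule that $\bE^{\Q}\bigl(H\int_0^t\widetilde Z^\epsilon_{s^-}\,\ud M^f_s\bigr)=0$. Your alternative --- initial enlargement of the filtration of $Y$ by $\sigma(\mu^R)$, under which $m^f$ stays a martingale by $\Q$-independence while $\widetilde Z_{s^-}$ is predictable (note $\F^S_t\subset\sigma(\mu^R)$) --- is legitimate and arguably more transparent, at the cost of the same true-martingale verification in the enlarged filtration. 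Third, for the final term the paper does not invoke an abstract projection lemma: it uses the identity $\qesp{U\mid\F^S_t}=\qesp{U\mid\F^S_T}$ for $\F_t$-measurable $U$ (your ``projection lemma'', proved as in Bain--Crisan) to run Fubini on the truncated quantities and then passes to the limit jump by jump, whereas you push the $\F^S$-predictable projection inside the compensated integral, exploiting that the $(\F^S,\Q)$-compensator of $\mu^R$ is the deterministic $\eta^{\Q}(\ud z)\,\ud s$ --- a standard filtering lemma and a valid shortcut that correctly identifies where the reference-measure structure is used. In sum, no step of yours would fail: you trade the paper's elementary-but-laborious truncation for explicit integrability estimates, and its duality argument for an independence/enlargement argument; both trades are sound once the moment bound is supplied.
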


We now provide the general idea of the proof, details are given in  Appendix~\ref{app:filtering}. Consider the process $\widetilde Z$ defined in \eqref{eq:densityZtilde} and some function $f\colon\mathcal E\to \R$. Then by It\^o's formula the product $\widetilde Z_t f(Y_t)$ has the following $(\Q,\bF)$-semimartingale decomposition
\begin{align}
\widetilde Z_t f(Y_t)= &f(Y_0)+ \int_0^t \widetilde Z_s \langle Q\mathbf{f},Y_t \rangle\ud s+\int_0^t\widetilde Z_s \ud M^f_s\\
& + \int_0^t\widetilde Z_s f(Y_s)\int_\R \beta(s, Y_{s^-}, \nu_{s^-},z)\left(\mu^R(\ud s, \ud z)-\eta^\Q_{s}(\ud z)\ud s\right),\label{eq:semimg}
\end{align}
where $M^f=(M^f)_{0 \leq t \leq T}$ is the true $(\bF, \Q)$-martingale appearing in the semimartingale decomposition of $f(Y)$.
Taking the conditional expectation with respect to $\F^S_t$ yields the result, since it can be shown that  $\bE^\Q \big ( \int_0^t\widetilde Z_s \ud M^f_s \mid \F_t^S \big ) =0$.

We introduce the notation
\begin{equation}\label{eq:compensator}
\pi_{t^-}(\eta^\P(\ud z)):=\sum_{i=1}^K\pi^i_{t^-}\eta^\P(t,e_i,\nu_t,\ud z), \quad 0 \leq t \leq T.\end{equation}
By applying \cite[Ch. II, Theorem T14]{bremaud1981point} it is easy to see that $\pi_{t^-}(\eta^\P(\ud z))\ud t$ gives the $(\bF^S, \P)$-dual predictable projection of the measure $\mu^R$.
The next proposition provides the dynamics of the conditional state probabilities.

\begin{proposition}\label{cor:KS-equation}
The process $\bpi=(\pi^1, \dots, \pi^K) $ solves the following system of equations:
\begin{equation}\label{eq:KS}
\pi_t^i=\pi_0^i+\int_0^t\sum_{j=1}^Kq^{ji}\pi_{s}^j\ud s+\int_0^t\int_{\mathbb{R}} \pi_{s^-}^i u^i(s,\nu_{s^-},\pi_s, z)(\mu^R(\ud s, \ud z)-\pi_{s^-}(\eta^{\P}(\ud z))\ud s),
\end{equation}
for every $t \in [0,T]$ and $ 1 \le i \le K$, where $\ds u^i(t,\nu,\pi,z):=\frac{(\ud \eta^{\P}(t,e_i,\nu)/\ud \eta^{\Q})(z)}{\sum_{j=1}^K \pi^j ({\ud \eta^{\P}(t,e_j,\nu)}/{\ud \eta^\Q} )( z)}-1$.
\end{proposition}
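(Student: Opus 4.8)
The plan is to obtain \eqref{eq:KS} from the Zakai equation of Theorem~\ref{thm:Zakai} via the Kallianpur--Striebel representation $\pi_t^i = p_t^i/p_t$, where I abbreviate $p_t^i := p_t(\mathbf 1_{\{e_i\}})$ and $p_t := p_t(1) = \sum_{j=1}^K p_t^j$, followed by an application of It\^o's formula (the quotient rule) for finite-activity jump processes. Two facts make the quotient licit: by Lemma~\ref{lem:change-of-measure} the density $\widetilde Z$ is strictly positive, so that $p_t = \bE^\Q\big(\widetilde Z_t \mid \F^S_t\big) > 0$; and under Assumption~\ref{ass1} the function $\beta$ is bounded and $\eta^\Q$ is finite, so all compensator integrals converge and $\mu^R$ has $\P$-a.s.\ finitely many jumps on $[0,T]$.

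First I would specialise \eqref{eq:zakai} to the test functions $f = \mathbf 1_{\{e_i\}}$ and $f \equiv 1$. For $f = \mathbf 1_{\{e_i\}}$ the generator term reduces to $p_s(Q\mathbf 1_{\{e_i\}}) = \sum_{j} q^{ji} p_s^j$, and on the event $\{Y_{s^-} = e_i\}$ one has $\beta(s,Y_{s^-},\nu_{s^-},z) = \beta(s,e_i,\nu_{s^-},z)$; since $\nu_{s^-}$ is $\F^S$-adapted this gives the gain coefficient $p_{s^-}(\beta(z)\mathbf 1_{\{e_i\}}) = p_{s^-}^i\, \beta(s,e_i,\nu_{s^-},z)$. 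For $f \equiv 1$ the generator term vanishes because $Q\mathbf 1 = 0$, and the gain coefficient is $p_{s^-}(\beta(z)) = \sum_j p_{s^-}^j\, \beta(s,e_j,\nu_{s^-},z)$. This produces explicit semimartingale decompositions of $p^i$ and $p$, each consisting of an absolutely continuous drift and one integral against the compensated measure $\mu^R(\ud s,\ud z) - \eta^\Q(\ud z)\ud s$.

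Next I would apply the quotient rule to $\pi^i = p^i/p$. Since $p^i$ and $p$ have no continuous martingale part, only the absolutely continuous drifts and the finite-activity jumps contribute. The drift splits into the generator part $p_s^{-1}\sum_j q^{ji} p_s^j = \sum_j q^{ji}\pi_s^j$ and the part coming from the two compensators; a short computation gives the latter as $-\pi_s^i \int_\R\big(\beta(s,e_i,\nu_{s^-},z) - \overline{\beta}_s(z)\big)\eta^\Q(\ud z)$, where $\overline{\beta}_s(z) := \sum_k \pi_{s^-}^k \beta(s,e_k,\nu_{s^-},z)$. At a jump of $\mu^R$ with mark $z$ one has
\[
\frac{p_t^i}{p_t} = \frac{p_{t^-}^i\,(\ud\eta^\P(t,e_i,\nu_{t^-})/\ud\eta^\Q)(z)}{\sum_j p_{t^-}^j\,(\ud\eta^\P(t,e_j,\nu_{t^-})/\ud\eta^\Q)(z)},
\]
so dividing numerator and denominator by $p_{t^-}$ shows that the jump of $\pi^i$ equals $\pi_{t^-}^i\, u^i(t,\nu_{t^-},\pi_{t^-},z)$, i.e.\ the jumps assemble into $\int_0^t\!\int_\R \pi_{s^-}^i u^i(s,\nu_{s^-},\pi_{s^-},z)\,\mu^R(\ud s,\ud z)$.

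The one genuinely computational step, which I expect to be the main obstacle, is to check that the compensator drift just found is exactly the term needed to complete $\mu^R$ to the compensated measure $\mu^R(\ud s,\ud z) - \pi_{s^-}(\eta^\P(\ud z))\ud s$ appearing in \eqref{eq:KS}. Writing $\beta_i := \beta(s,e_i,\nu_{s^-},z)$ and using $\eta^\P(s,e_k,\nu_{s^-};\ud z) = (1+\beta_k)\eta^\Q(\ud z)$ together with $\sum_k \pi_{s^-}^k = 1$, one gets $\pi_{s^-}(\eta^\P(\ud z)) = (1+\overline{\beta}_s(z))\,\eta^\Q(\ud z)$ and $u^i = (\beta_i - \overline{\beta}_s(z))/(1+\overline{\beta}_s(z))$; hence the factor $1+\overline{\beta}_s(z)$ cancels and
\[
\int_\R \pi_{s^-}^i\, u^i(s,\nu_{s^-},\pi_{s^-},z)\,\pi_{s^-}(\eta^\P(\ud z)) = \pi_{s^-}^i\int_\R\big(\beta_i - \overline{\beta}_s(z)\big)\eta^\Q(\ud z).
\]
Thus the required compensator $-\int_\R \pi_{s^-}^i u^i\,\pi_{s^-}(\eta^\P(\ud z))$ coincides with the drift $-\pi_s^i\int_\R(\beta_i - \overline{\beta}_s)\eta^\Q(\ud z)$ found above, so matching the two turns the drift-plus-jump expression into \eqref{eq:KS} and completes the derivation. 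The care needed here lies in keeping the normalisation $\sum_k\pi^k = 1$ and the distinction between the $\eta^\P$- and $\eta^\Q$-integrals straight; the remaining manipulations are routine bookkeeping.
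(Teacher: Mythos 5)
Your proposal is correct and follows exactly the paper's own (much terser) argument: the Kallianpur--Striebel formula $\pi_t(f)=p_t(f)/p_t(1)$, the Zakai equation \eqref{eq:zakai} specialised to $f=\I_{\{Y=e_i\}}$ and $f\equiv 1$, and It\^o's quotient rule for the finite-activity jump semimartingales $p^i$ and $p(1)$. Your explicit verification that the drift produced by the quotient rule, $-\pi_{s^-}^i\int_\R\big(\beta(s,e_i,\nu_{s^-},z)-\overline{\beta}_s(z)\big)\eta^\Q(\ud z)$, coincides with $-\int_\R \pi_{s^-}^i u^i\,\pi_{s^-}(\eta^\P(\ud z))$ via the cancellation of the factor $1+\overline{\beta}_s(z)$ is precisely the computation the paper leaves to the reader, and it is carried out correctly.
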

\begin{proof}
By the Kallianpur-Striebel formula we have that $\displaystyle \pi_t(f):=\frac{p_t(f)}{p_t(1)}$, for every $t \in [0,T]$. Then, by \eqref{eq:zakai} and It\^{o} formula  we get the dynamics of the normalized filter $\pi(f)$. The claimed result is obtained by setting $f(Y_t)=\I_{\{Y_t=e_i \}}$, for every $i \in \{1, \dots, K\}$.
\end{proof}
Note that the filtering equation~\eqref{eq:KS} does not depend on the particular choice of $\eta^\Q$.

\paragraph{Filter equations for \protect{Example~\ref{exm2}}.} In the following we give the dynamics of the process $\bpi$ for Example~\ref{exm2}. For a two-state Markov chain it is sufficient to specify the dynamics of $\pi = \pi^1$, since $\pi^2=1-\pi^1$.
Define two point processes$ N_t^{\text{up}} = \sum_{T_n \le t} 1_{\{\Delta R_{T_n} = \theta\}}$ and $ N_t^{\text{down}} = \sum_{T_n \le t} 1_{\{\Delta R_{T_n} = - \theta\}} $,  for all $t \in [0,T]$, that count the upward and the downward jumps of the return process. It is easily seen  that for every $(\nu, \pi, z)\in [0, \numax]\times [0,1]\times \{-\theta, \theta\}$, the function $u^1$ is  given by
 \begin{equation}\label{eq:u1-example}
 u^1 (\nu, \pi, z ) = \frac{ \lambda^+(e_1, \nu)}{\pi \lambda^+ (e_1, \nu) + (1-\pi) \lambda^+ (e_2, \nu)} 1_{\{z = \theta\}}
                    + \frac{\lambda^-(e_1, \nu)}{\pi \lambda^- (e_1, \nu) + (1-\pi) \lambda^- (e_2, \nu)} 1_{\{z = - \theta\}}.
 \end{equation}
By Corollary~\ref{cor:KS-equation} we  get the following equation for $\pi_t = \pi_t^1$:
\begin{align}
\ud \pi_t &=  \left(q^{11}\pi_t + q^{21}(1-\pi_t)\right)\ud t\\
&+\pi_t(1-\pi_t)\left ((\lambda^+(e_1, \nu_t)+\lambda^-(e_1, \nu_t))
- (\lambda^+(e_2, \nu_t)+\lambda^-(e_2, \nu_t))\right ) \ud t\\
&+ \pi_{t-} \Big(\frac{\lambda^+(e_1, \nu_t)}{\pi_{t-}\lambda^+(e_1, \nu_t)+(1-\pi_{t-})\lambda^+(e_2, \nu_t)}-1\Big)\ud N_t^{\text{up}}\\
&+ \pi_{t-} \Big(\frac{\lambda^-( e_1, \nu_t)}{\pi_{t-}\lambda^-(e_1, \nu_t)+(1-\pi_{t-})\lambda^-(e_2, \nu_t)}-1\Big) \ud N_t^{\text{down}}.\label{eq:dynamics-pi-Ex2}
\end{align}

\section{Control Problem I: Analysis via  PDMPs}\label{section:mdm}

We begin with a brief overview of our analysis of the control problem~\eqref{eq:optimization_full}. In Proposition~\ref{prop:pdmp-exists} below we  show that the  Kushner-Stratonovich equation \eqref{eq:KS} has a unique solution. Then standard arguments  ensure that the original control problem under incomplete information  is equivalent to a control problem under complete information with state process equal to the $(K+2)$-dimensional process $X:=(W,S,\bpi)$. This process is a PDMP in the sense of \citet{davis1993markov}, that is a trajectory of  $X$  consists of a deterministic part which solves an ordinary differential equation (ODE),  interspersed by random jumps. Therefore, to solve the optimal liquidation problem we apply control theory for PDMPs. This theory is based on the observation that a control problem for a PDMP is discrete in time: loosely speaking,  at every jump-time of the process  one chooses a control policy to be followed up to the next jump time or until maturity. Therefore,  one can identify the control problem for the PDMP with a control problem for a discrete-time, infinite-horizon Markov decision model (MDM). Using this connection we show that the value function of the  optimal liquidation  problem is continuous and that is the unique solution of the dynamic programming or optimality equation for the MDM.  These results are the basis for the viscosity-solution characterization of the value function in Section~\ref{sec:viscosity}.



\subsection{Optimal liquidation as a control problem for a PDMP.}\label{subsec:PDMP}

From the viewpoint of the trader   endowed with the filtration $\bF^S$, the state of the economic system at time $t\in [0,T]$ is  given by  $X_t=(W_t,S_t,\pi_t)$. Since it is more convenient to work with autonomous Markov  processes we  include  time into the state and define $\widetilde X_t:=(t,X_t)$.
The  state space of $\widetilde{X}$ is $\widetilde{\mathcal X}=[0,T]\times \mathcal X$ where  $ \mathcal X=[0, w_0]\times\R^+\times \S^K $ with $\S^K$  being the $K$-dimensional simplex.
Let $\Nu $ be the liquidation strategy followed by the trader. It follows from \eqref{eq:inventory},  \eqref{eq:KS}, and from the fact that the bid price is a pure jump process  that between jump times  the state process follows the ODE $\ud \widetilde X_t=  g (\widetilde X_t,\nu_t)\ud t $, where   the   vector field $ g (\widetilde x,\nu)\in \mathbb{R}^{K+3}$  is given by
$ g^1 (\widetilde x,\nu)=1$, $ g^2 (\widetilde x,\nu)=-\nu$, $g^3 (\widetilde x,\nu)=0$,
and for $k=1,\dots,K$,
\begin{gather} \label{eq:def-g}
g^{k+3} (\widetilde x,\nu)=\sum_{j=1}^K q^{jk}\pi^j-\pi^k\sum_{j=1}^K \pi^j \int_{\mathbb{R}}  u^k(t,\nu,\pi,z)\eta^\P(t,e_j,\nu,\ud z).
\end{gather}
For our analysis we need the following regularity property of $g$.
\begin{lemma}\label{lemma:lipschitz}
Under Assumption~\ref{ass1}, the function $g$ is Lipschitz continuous in $\widetilde x$ uniformly in $(t,\nu)  \in [0,T] \times [0,\numax] $; the Lipschitz constant is denoted by $K_g$.
\end{lemma}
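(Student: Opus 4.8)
The plan is to exploit the very explicit structure of $g$ in \eqref{eq:def-g} and reduce everything to a polynomial estimate on the simplex. First I would note that only the filter components carry any spatial dependence: the components $g^1\equiv 1$, $g^2(\widetilde x,\nu)=-\nu$ and $g^3\equiv 0$ do not depend on the spatial state $(w,s,\pi)$ at all, so they are trivially Lipschitz in $\widetilde x$ (with the $(t,\nu)$-dependence irrelevant for a uniform bound). Hence the whole lemma reduces to showing that, for each $k\in\{1,\dots,K\}$, the filter drift $g^{k+3}(\widetilde x,\nu)$ is Lipschitz in $\pi\in\S^K$ uniformly in $(t,\nu)\in[0,T]\times[0,\numax]$. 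In $g^{k+3}$ the first summand $\sum_{j=1}^K q^{jk}\pi^j$ is linear in $\pi$, hence Lipschitz with a constant controlled by the entries of the generator $Q$.

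The hard part is the nonlinear second summand, which through the definition of $u^k$ in Proposition~\ref{cor:KS-equation} involves the denominator $\sum_{j}\pi^j(\ud\eta^\P(t,e_j,\nu)/\ud\eta^\Q)(z)$ and could a priori fail to be Lipschitz where this denominator degenerates. The key observation is that the denominator cancels. Writing $\phi_i(t,\nu,z):=1+\beta(t,e_i,\nu,z)=(\ud\eta^\P(t,e_i,\nu)/\ud\eta^\Q)(z)$, cf.~\eqref{eq:def-beta}, and $D(t,\nu,\pi,z):=\sum_{l=1}^K\pi^l\phi_l(t,\nu,z)$, I would first use $\eta^\P(t,e_j,\nu,\ud z)=\phi_j(t,\nu,z)\,\eta^\Q(\ud z)$ to pull the sum over $j$ inside the integral,
\begin{equation*}
\pi^k\sum_{j=1}^K\pi^j\int_\R u^k(t,\nu,\pi,z)\,\eta^\P(t,e_j,\nu,\ud z)=\pi^k\int_\R u^k(t,\nu,\pi,z)\,D(t,\nu,\pi,z)\,\eta^\Q(\ud z),
\end{equation*}
and then use the identity $u^k D=\phi_k-D$, which is immediate from $u^k=\phi_k/D-1$. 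This removes the denominator entirely and yields
\begin{equation*}
g^{k+3}(\widetilde x,\nu)=\sum_{j=1}^K q^{jk}\pi^j-\pi^k a_k(t,\nu)+\pi^k\sum_{l=1}^K\pi^l a_l(t,\nu),\qquad a_l(t,\nu):=\eta^\P(t,e_l,\nu;\R).
\end{equation*}

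It then remains to observe that $g^{k+3}$ is a polynomial of degree two in $\pi$ whose coefficients are bounded uniformly in $(t,\nu)$: by Assumption~\ref{ass1}, cf.~\eqref{eq:bound_etaP}, one has $0\le a_l(t,\nu)\le\lambda^{\text{max}}$ for all $(t,\nu)\in[0,T]\times[0,\numax]$, while the $q^{jk}$ are fixed. A polynomial of bounded degree with uniformly bounded coefficients, restricted to the compact simplex $\S^K$, is Lipschitz with a constant depending only on those bounds and on the diameter of $\S^K$; this gives the Lipschitz estimate for $g^{k+3}$ in $\pi$, uniformly in $(t,\nu)$. Collecting the estimates for all components produces a common constant $K_g$ depending only on the entries of $Q$ and on $\lambda^{\text{max}}$, which is the assertion. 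The main subtlety is the cancellation used in the second step; without it one would have to control $1/D$ directly, which is still feasible since Assumption~\ref{ass1} forces $D\ge M^{-1}>0$ on $\S^K$, but the algebraic simplification makes the Lipschitz property transparent and avoids any estimate on the modulus of continuity of the densities in $z$.
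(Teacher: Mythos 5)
Your proof is correct, and it takes a genuinely different route from the paper's. The paper argues by direct differentiation: it computes the partial derivatives $\partial g^{k+3}/\partial \pi^i$ from the definition \eqref{eq:def-g} and bounds each resulting term, using the two-sided density bounds $M^{-1}<\ud\eta^\P/\ud\eta^\Q<M$ from \eqref{eq:cond-on-density} to control both $u^k$ and its $\pi$-derivatives; since the denominator $D$ enters these derivatives squared, the paper's Lipschitz constant carries factors $M^2$ and $M^4$ alongside $\lambda^{\text{max}}$ and the entries of $Q$. You instead eliminate the denominator algebraically before estimating anything: the identity $u^kD=\phi_k-D$ holds $\eta^\Q$-a.e., the interchange of the finite sum over $j$ with the integral is harmless because $\eta^\Q$ is finite and $\int\phi_l\,\ud\eta^\Q=a_l(t,\nu)\le\lambda^{\text{max}}$ by \eqref{eq:bound_etaP}, and the drift collapses to the quadratic $\sum_{j=1}^K q^{jk}\pi^j-\pi^k a_k(t,\nu)+\pi^k\sum_{l=1}^K\pi^l a_l(t,\nu)$, the classical Shiryaev-type form of the filter drift between observation times. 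Each approach buys something. Yours yields a cleaner and smaller constant $K_g$, depending only on $Q$ and $\lambda^{\text{max}}$, and uses only the total-mass bound \eqref{eq:bound_etaP}, not the lower density bound in \eqref{eq:cond-on-density}; it also sidesteps any smoothness discussion for $u^k$ (your closing remark that one could alternatively bound $1/D\le M$ directly is exactly the fallback the paper implements). The paper's heavier computation, on the other hand, works with the representation in which $u^k$ actually appears, and that technique is the one needed where no cancellation is available --- e.g.\ the denominator cannot be removed from $u^k$ inside the kernel $\overline{Q}_{\widetilde X}$ in \eqref{eq:def-Qbar} or in Lemma \ref{lemma:continuity-of-QL}. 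Note finally that both proofs establish Lipschitz continuity only in the spatial variables $(w,s,\pi)$, uniformly in $(t,\nu)$ --- the paper likewise never differentiates in $t$ --- which is what is used for Proposition \ref{prop:pdmp-exists} and for the Gronwall estimate \eqref{eq:convergence-of-flow}.
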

The proof is postponed to Appendix~\ref{app:sec-4}.

The jump rate of the state process  $\widetilde{X}$ is given by $\lambda(\widetilde X_{t^-}, \nu_{t^-})$, for all $t \in (0,T]$, where for every $(\widetilde x, \nu)\in \widetilde{\mathcal{X}}\times [0, \numax]$,
$$
\lambda(\widetilde x,\nu)=\lambda(t,w,s,\pi,\nu):=\sum_{j=1}^K \pi^j \eta^\P(t,e_j,\nu,\mathbb{R}).
$$
Next, we identify the  transition kernel $Q_{\widetilde X}$ that governs the jumps of  $\widetilde{X}$. Denote by $\{T_n\}_{n \in \bN}$ the sequence of jump times of $\widetilde{X}$. It  follows from \eqref{eq:KS} that for any measurable function $f \colon \widetilde{\mathcal{X}} \to \R^+ $,
\begin{align} \label{eq:def-Q}
 Q_{\widetilde X} f (\widetilde x, \nu )  &:= E\big ( f(\widetilde{X}_{T_n}) \mid T_n = t, X_{T_n -} =x, \nu_{T_n -} = \nu \big )  = \frac{1}{\lambda(\widetilde x,\nu)} \overline{Q}_{\widetilde X} f(\widetilde x, \nu) \,,
\end{align}
where the unnormalized kernel $\overline{ Q}_{\widetilde X}$ is given by
\begin{align}
 \overline{ Q}_{\widetilde X} f (\widetilde x, \nu ) =\sum_{j=1}^K \pi^j \int_{\mathbb{R}} f \big ( t,w,s(1+z),\pi^{1}(1+u^{1}),\dots,
 \pi^{K}(1+u^{K}) \big) \eta^\P(t,e_j,\nu,\ud z).
 \label{eq:def-Qbar}
\end{align}
Here $u^i$ is short for $u^i (t,\nu,\pi,z)$.
Summarizing, $\widetilde{X}$ is a PDMP with characteristics given by the vector field $g$, the jump rate $\lambda$ and the transition kernel $Q_{\widetilde X}$.

It is standard in control theory for PDMPs to work with so-called open-loop controls. In the current context this means that the trader chooses at each jump time $T_n<\tau$ a liquidation policy $\nu^n$ to be followed up to $T_{n+1}\wedge \tau$. This policy may depend on the state $\widetilde{X}_{T_n}=(T_n,X_{T_n})$.
\begin{definition} \label{def:admissible-control} Denote by $\mathcal A$ the set of measurable mappings $\alpha\colon[0,T]\to [0,\numax]$.  An \emph{admissible open loop liquidation strategy} is a sequence of mappings $\{\nu^n\}_{n \in \bN}$ with $\nu^n:\widetilde{\mathcal X}\to \mathcal A$;  the liquidation rate at time $t$ is given by
$\nu_t=\sum_{n=0}^\infty \I_{(T_n\wedge \tau, T_{n+1}\wedge \tau]} (t)   \nu^n (t- T_n,\widetilde X_{T_n}).$
\end{definition}
It follows from \citet[Theorem~T34, Appendix~A2]{bremaud1981point} that  an  admissible strategy is of the form given in Definition~\ref{def:admissible-control}, but for $\F^S_{T_n}$ measurable mappings $\nu^n \colon \Omega \to \mathcal A$ for every $n \in \bN$, that $\nu^n $  may depend on the entire history of the system.  General  results for Markov decision models (see \citet[Theorem~2.2.3]{bauerle-rieder-book})  show that the expected profit of the trader stays the same if instead we consider the smaller class of admissible open loop strategies, so that we may restrict ourselves to this class.

\begin{proposition}\label{prop:pdmp-exists}
Let Assumption~\ref{ass1} hold.
For every admissible liquidation strategy $\{\nu^n\}_{n \in \bN}$ and every initial value $\widetilde x$, a unique PDMP with characteristics $g$, $\lambda$, and $Q_{\widetilde X}$ as above exists.
In particular the Kushner-Stratonovic equation \eqref{eq:KS} has a unique solution.
\end{proposition}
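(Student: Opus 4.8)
The plan is to verify that the triple $(g,\lambda,Q_{\widetilde X})$ satisfies the standard hypotheses under which the iterative construction of a PDMP in \citet{davis1993markov} yields a unique process, and then to read off the uniqueness of \eqref{eq:KS} from the uniqueness of its $\bpi$-component. Recall that a PDMP is built recursively: starting from $\widetilde x$ one follows the deterministic flow of the ODE $\ud \widetilde X_t = g(\widetilde X_t,\nu_t)\ud t$ until the first jump time $T_1$, whose conditional survival function is $\exp\big(-\int_0^t \lambda(\cdot,\nu_s)\ud s\big)$; at $T_1$ the state is resampled according to $Q_{\widetilde X}$; and the procedure is repeated on $[T_1,T_2]$, and so on. There are thus three things to check: that the inter-jump flow is well defined and leaves $\widetilde{\mathcal X}$ invariant, that the jump mechanism is governed by a genuine stochastic kernel on $\widetilde{\mathcal X}$, and that the construction does not explode before $T$.

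First I would fix an admissible open-loop strategy $\{\nu^n\}_{n\in\bN}$ and an initial state, so that on each inter-jump interval the control $t\mapsto \nu^n(t-T_n,\widetilde X_{T_n})$ is a deterministic measurable function of time. By Lemma~\ref{lemma:lipschitz} the field $g(\cdot,\nu)$ is Lipschitz in $\widetilde x$ with constant $K_g$ uniformly in $(t,\nu)$, so the inter-jump ODE is of Carath\'eodory type with globally Lipschitz right-hand side and admits a unique absolutely continuous solution. Note that $g^3\equiv 0$, so the price coordinate is frozen between jumps and the unboundedness of the $S$-factor of $\mathcal X$ causes no difficulty. It remains to see that the flow preserves $\widetilde{\mathcal X}$: the time coordinate moves with unit speed, $W$ decreases at rate $\nu\le\numax$ and is held at $\tau$, and the filter coordinates stay in $\S^K$ because $g^{k+3}$ is precisely the drift of a conditional probability vector. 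Indeed, one checks directly that $\sum_{k} g^{k+3}=0$ (using $\sum_k q^{jk}=0$ and the pointwise identity $\sum_k \pi^k u^k(t,\nu,\pi,z)=0$), so that total mass is conserved, and that on the face $\{\pi^k=0\}$ the drift reduces to $\sum_{j\neq k} q^{jk}\pi^j\ge 0$, so no coordinate can become negative.

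Next I would verify the jump ingredients. The rate $\lambda$ is bounded by $\lambda^{\text{max}}<\infty$ by \eqref{eq:bound_etaP}, and the kernel $Q_{\widetilde X}$ from \eqref{eq:def-Q}--\eqref{eq:def-Qbar} maps into $\widetilde{\mathcal X}$: since $\text{supp}(\eta)\subset(-1,\infty)$ the post-jump price $s(1+z)$ is strictly positive, and the post-jump filter $\big(\pi^i(1+u^i)\big)_i$ is the normalized Bayesian update, whose components are nonnegative and sum to one, with denominator $\sum_j\pi^j(\ud\eta^\P(t,e_j,\nu)/\ud\eta^\Q)(z)\ge M^{-1}>0$ bounded away from zero by Assumption~\ref{ass1}, so that no division by zero can occur. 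Because $\lambda\le\lambda^{\text{max}}$, the jump times $\{T_n\}$ are stochastically dominated by those of a Poisson process of intensity $\lambda^{\text{max}}$, whence $T_n\to\infty$ a.s.\ and only finitely many jumps fall in $[0,T]$; the construction therefore reaches $T$ and defines a unique PDMP on $[0,T]$ by the standard construction in \citet{davis1993markov} (see also \citet{bauerle-rieder-book}). Finally, the $\bpi$-component of this process solves \eqref{eq:KS} by construction, and since between jumps $\bpi$ is the unique solution of the above Lipschitz ODE while at each jump it is updated deterministically through $u^i$, pathwise uniqueness holds; hence \eqref{eq:KS} has a unique solution.

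I expect the only genuinely delicate point to be the invariance of the simplex $\S^K$ under the flow, namely showing that the filter coordinates cannot leave $\S^K$ between jumps; this is what the mass-conservation and inward-pointing computations in the second paragraph are designed to settle. Everything else is either an immediate consequence of Lemma~\ref{lemma:lipschitz} and the bound \eqref{eq:bound_etaP}, or a routine verification of the hypotheses of the PDMP construction theorem.
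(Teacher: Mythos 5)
Your proof is correct and takes essentially the same route as the paper's (much terser) three-line argument: unique inter-jump flow via Lemma~\ref{lemma:lipschitz}, a jump state uniquely determined by the observable data $(T_n,\Delta R_{T_n})$, and non-accumulation of jump times from the bound \eqref{eq:bound_etaP}. Your extra verifications --- mass conservation $\sum_k g^{k+3}=0$, the inward-pointing drift on the faces of $\S^K$, and that the post-jump filter $\bigl(\pi^i(1+u^i)\bigr)_{i}$ is a genuine probability vector with Bayesian denominator bounded below by $M^{-1}$ --- merely make explicit the state-space invariance that the paper leaves implicit (and partly defers to its discussion of the boundary of the simplex in Section~\ref{sec:MDM}), so they are a welcome refinement rather than a different method.
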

\begin{proof}
Lemma \ref{lemma:lipschitz} implies that for $\alpha\in \A$ the ODE $\ud \widetilde X_t= g (\widetilde X_t,\alpha_t)\ud t $ has a unique solution so that between jumps the state process is well-defined.
At any jump time $T_n$, $\widetilde X_{T_n}$ is uniquely defined in terms of observable data $(T_n, \Delta R_{T_n})$.
Moreover, since the jump intensity is bounded by $\lambda^\text{max}$, jump times cannot accumulate.
\end{proof}

Denote by $\P^{\{\nu^n\}}_{(t,x)}$ (equiv. $\P^{\{\nu^n\}}_{\widetilde x}$) the law of the state process provided that  $X_t=x\in\mathcal{X}$ and that the trader uses the open-loop strategy $\{\nu^n\}_{n \in \bN}$. The reward function associated to an admissible liquidation strategy $\{\nu^n\}_{n \in \bN} $ is defined by
\begin{gather}
V\left(t,x,\{\nu^n\}_{n \in \bN} \right)=\mathbb E_{(t,x)}^{\{\nu^n\}}\Big( \int_t^{\tau} e^{-\rho (u-t)} \nu_u S_u(1-f(\nu_u)) \ud u + e^{\rho(\tau-t)}h\left(W_{\tau}\right)S_{\tau}    \Big ),
\end{gather}
and the value function of the liquidation problem under partial information  is
\begin{align}\label{1.7}
&V(t,x) =\sup \left\{ V\left(t,x,\{\nu^n\}_{n \in \bN} \right)\colon   \{\nu^n\}_{n \in \bN} \  \text{admissible liquidation strategy} \right \}. \quad{}
\end{align}

\begin{remark}\label{rem:V-pos-hom}
Note that the compensator $\eta^\P$ and the dynamics of the filter $\bpi$ are independent of the current bid price $s$,
and that the payoff of a liquidation strategy $\{\nu^n\}_{n \in \bN}$ is positively homogeneous in $s$. This implies that  the reward  and the value function of the liquidation problem are positively homogeneous in $s$ and, in particular,
$ V(t,w,s,\pi) = s V(t,w,1,\pi).$
\end{remark}

\subsection{Associated Markov decision model.}\label{sec:MDM}

The optimization problem in \eqref{1.7} is  discrete in time since  the control policy is chosen at the discrete time points $T_n$, $n \in \bN$ ,  and  the value of the state process at these time points forms a discrete-time Markov chain (for $T_n  < \tau$).  Hence \eqref{1.7} can be rewritten as a control problem in an infinite horizon Markov decision model.  The \emph{state process} of the MDM is given by the sequence $\{L_n\}_{n \in \bN} $ of random variables with
\begin{gather}
L_n=\widetilde X_{T_n}\ \text{ for } \ T_n<\tau \ \text{ and } \ L_n=\bar \Delta \  \text{ for } \ T_n\ge \tau, \quad n\in\bN\,,
\end{gather}
where $\bar \Delta$ is the cemetery state.
In order to derive the transition kernel of the sequence $\{L_n\}_{n \in \bN} $ and the  reward function of the MDM, we introduce some notation. For a function $\alpha\in\mathcal A$ we denote by $\widetilde \varphi^{\alpha}_t(\widetilde x)$ or by  $\widetilde \varphi_t (\alpha, \widetilde x)$ the flow of the initial value problem
$\frac{\ud }{\ud t}  \widetilde X(t)=g \big(\widetilde X(t), \alpha_t\big)$ with initial condition  $ \widetilde X(0) =\widetilde x.$
Whenever we want to make the dependence on time explicit we write  $\widetilde \varphi^{\alpha}$ in the form $(t,\varphi^{\alpha})$.  Moreover, we define the function $\lambda^{\alpha}_u$ by
\begin{equation}\label{eq:def-survival}
   \lambda^{\alpha}_u(\widetilde x) =\lambda(\widetilde \varphi^{\alpha}_u(\widetilde x),\alpha_u)=\lambda((t+u,\varphi^{\alpha}_u),\alpha_u)\, \quad u \in [0,T-t],
\end{equation}
and we let  $\Lambda^{\alpha}_u(\widetilde x) = \int_0^u \lambda^{\alpha}_v(\widetilde x)\ud v$.

Next we take a closer look at the \emph{boundary} of $\widetilde{X}$. First note that the process $\bpi$ takes values in  the hyperplane $ \H^K = \{x \in \R^K \colon \sum_{i=1}^K x_i = 1 \}$, so that  $\widetilde{\mathcal{X}}$ is contained in the set $\H = \R^3 \times \H^K$, which is a hyperplane of $\R^{K+3}$. When considering the boundary  or the interior of the state space  we always refer to the \emph{relative} boundary or the \emph{relative} interior with respect to $\H$.
 Of particular interest to us is the \emph{active boundary} $\Gamma$ of the state space, that is the part of the boundary of $\widetilde{\mathcal{X}}$ which can be reached by the flow $\widetilde \varphi^{\alpha}_\cdot(\widetilde x)$ starting in an interior point $\tilde x\in  \operatorname{int}(\widetilde{\mathcal{X}})$.
The boundary of  $\widetilde {\mathcal X}$ can only be reached if $w=0$, if  $t=T$, or if the filter process reaches the boundary of the $K$-dimensional simplex. The latter is not possible: indeed, if $\pi_0^i>0$, then $\pi_t^i>0$ for all $t\in [0,T]$, since  there is a positive probability that the Markov chain has not changed its state and since the conditional distribution of $Y_t$ given $\F_t^S$ is equivalent to the unconditional distribution of $Y_t$ by the Kallianpur-Striebel formula.
Hence the active boundary equals $\Gamma = \Gamma_1 \cup \Gamma_2$,  where
\begin{equation} \label{eq:active-boundary}
\Gamma_1 = [0,T] \times \{0\} \times (0, \infty) \times \S^K_0 \ \text{ and } \ \Gamma_2 = \{T\} \times [0, w_0] \times (0, \infty) \times \S^K_0,
\end{equation}
and where $\S^K_0$ is the interior of $\S^K$, i.e. $\S^K_0:=\{x\in \S^K \colon x_i > 0  \text{ for all } i\}$.
In \eqref{eq:active-boundary} $\Gamma_1$ is the \emph{lateral} part of the active boundary corresponding  to an inventory level equal to zero, and $\Gamma_2$  is the terminal boundary corresponding to  the exit from the state space at maturity $T$.
In the sequel we denote  the first exit time of the flow $\widetilde \varphi^{\alpha}_\cdot (\widetilde x)$  from  $\widetilde{\mathcal{X}}$ by
\begin{gather}
\tau^{\varphi}=\tau^{\varphi}(\widetilde x, \alpha)=\inf\{u\ge 0: \widetilde\varphi^{\alpha}_u(\widetilde x) \in \Gamma \}\,.
\end{gather}
Notice that the stopping time $\tau$ defined in \eqref{tau1} corresponds to the first time the state process $\widetilde X$ reaches the active boundary $\Gamma$.

Using similar arguments as in \citet[Section~8.2]{bauerle-rieder-book} or in \citet[Section~44]{davis1993markov}, it is easily seen that the \emph{transition kernel} $Q_L$ of the sequence  $\{L_n\}_{n \in \bN} $ is given by
\begin{equation}\label{eq:form-of-QL}
Q_L f \big( (t,x), \alpha\big ) =  \int_0^{ \tau^\varphi(\widetilde x)} \! \!  e^{- \Lambda^\alpha_u (\widetilde x)} \overline{Q}_{\widetilde X} f(u+t, \varphi_u(\widetilde x), \alpha_u \big ) \ud u + e^{- \Lambda_{\tau^\varphi}^\alpha (\tilde x)} f(\bar \Delta) \,;
\end{equation}
we omit the details. Moreover, since the cemetery state is absorbing, $Q_L \I_{\{\bar \Delta\}} (\bar \Delta, \alpha)=1$.
Finally we  define the \emph{one-period reward function}  $r \colon \widetilde{\mathcal X}\times \mathcal A\to \mathbb{R}^+$ by
\begin{align}\label{rewardmdm}
r(\widetilde x, \alpha)=\int_0^{\tau^{\varphi}}e^{-\rho u}e^{-\Lambda^{\alpha}_u(\widetilde x)}\alpha_u s (1 - f(\alpha_u))\ud u + e^{-\rho \tau^\varphi} e^{-\Lambda^{\alpha}_{\tau^{\alpha}}(\widetilde x)}h(w_{\tau^{\varphi}})s,
\end{align}
and $w_{\tau^{\varphi}}$  the inventory-component of $\widetilde \varphi^{\alpha}$, and we set $r(\bar \Delta) =0$.
For an admissible  strategy $\{\nu^n\}_{n \in \bN} $ we set
$ J_{\infty}^{\{\nu^n\}}(\widetilde x)=\mathbb{E}_{\widetilde x}^{\{\nu^n\}}\Big( \sum_{n=0}^{\infty}r\left( L_n,\nu^n(L^n)\right) \Big)$, and
\begin{align}\label{2.8}
J_{\infty}(\widetilde x):=\sup \left\{J_{\infty}^{\{\nu^n\}}(\widetilde x):\{\nu^n\}_{n \in \bN} \ \text{admissible liquidation strategy}   \right\}.
\end{align}
The next lemma shows that the MDM with transition kernel $Q_L$ and one-period reward $r(L,\alpha)$ is equivalent to the optimization problem \eqref{1.7}.
\begin{lemma}\label{lemma2.1} For every admissible  strategy $\{\nu^n\}$ it holds that $V^{\{\nu^n\}}=J_{\infty}^{\{\nu^n\}}$.  Hence $V=J_{\infty}$, and  the control problems \eqref{1.7} and \eqref{2.8} are equivalent.
\end{lemma}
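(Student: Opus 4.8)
The plan is to establish the per-strategy identity $V^{\{\nu^n\}} = J_\infty^{\{\nu^n\}}$ for an arbitrary fixed admissible open-loop strategy $\{\nu^n\}_{n\in\bN}$ and initial state $\widetilde x = (t,x)$; the identity $V=J_\infty$ and the equivalence of \eqref{1.7} and \eqref{2.8} then follow by taking the supremum over all admissible strategies. First I would invoke Proposition~\ref{prop:pdmp-exists} to guarantee that, under $\{\nu^n\}$, the controlled PDMP $\widetilde X$ is well defined with jump times $\{T_n\}$, and that, because the strategy has the open-loop feedback form of Definition~\ref{def:admissible-control}, the embedded sequence $\{L_n\}$ with $L_n = \widetilde X_{T_n}$ for $T_n<\tau$ (and $L_n=\bar\Delta$ once $T_n\ge\tau$) is a controlled Markov chain with transition kernel $Q_L$ from \eqref{eq:form-of-QL}. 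The strong Markov property of $\widetilde X$ at the $T_n$ (standard for PDMPs, cf. \citet{davis1993markov,bauerle-rieder-book}) is what makes the time-discrete reduction legitimate.

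The heart of the argument is a per-period computation. I would decompose the running part of the reward in \eqref{1.7} along successive inter-jump intervals, writing $\int_t^{\tau}=\sum_{n\ge 0}\int_{T_n\wedge\tau}^{T_{n+1}\wedge\tau}$, and allocate the terminal liquidation value $h(W_\tau)S_\tau$ to the unique interval on which $\tau$ is reached. Conditioning on $\F^S_{T_n}$ and using the strong Markov property, the conditional expected contribution of the $n$-th interval depends only on $L_n$ and on the control $\nu^n$ applied after $T_n$. On this interval the state follows the deterministic flow $\widetilde\varphi^{\nu^n}$, along which the price component is constant (since $g^3\equiv 0$, so $S_u\equiv S_{T_n}$), and the next event is either a jump, whose local time has conditional survival function $e^{-\Lambda^{\nu^n}_u}$ by \eqref{eq:def-survival}, or the exit of the flow at $\tau^\varphi$ onto the active boundary $\Gamma$. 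Integrating the random jump time against the density $\lambda^{\nu^n}_u e^{-\Lambda^{\nu^n}_u}\ud u$ and applying Fubini and integration by parts, the expected running reward collapses to $\int_0^{\tau^\varphi} e^{-\rho u} e^{-\Lambda^{\nu^n}_u}\,\nu^n_u\, s\,(1-f(\nu^n_u))\ud u$, while the event that no jump precedes $\tau^\varphi$ (probability $e^{-\Lambda^{\nu^n}_{\tau^\varphi}}$) contributes the boundary term $e^{-\rho\tau^\varphi}e^{-\Lambda^{\nu^n}_{\tau^\varphi}}h(w_{\tau^\varphi})\,s$. Invoking the positive homogeneity in $s$ (Remark~\ref{rem:V-pos-hom}) to absorb the price level, this is exactly $r(L_n,\nu^n)$ from \eqref{rewardmdm}.

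I would then sum over $n$ and interchange sum and expectation via the tower property across the jump times. Since $r\ge 0$ and $r(\bar\Delta)=0$, Tonelli applies directly, and the a priori bound of Proposition~\ref{prop:bound-on-value} together with $\tau\le T$ and $0\le W_u\le w_0$ guarantees finiteness, so the manipulations are not vacuous. This yields $V^{\{\nu^n\}}(t,x)=\bE^{\{\nu^n\}}_{\widetilde x}\big(\sum_{n\ge 0} r(L_n,\nu^n)\big)=J_\infty^{\{\nu^n\}}(\widetilde x)$, and the supremum over admissible strategies gives $V=J_\infty$.

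The step I expect to be most delicate is the per-period reduction: justifying the strong Markov property for the \emph{controlled} PDMP at the random times $T_n$; correctly handling the dichotomy between a jump and an exit through $\Gamma$ at $\tau^\varphi$, so that the terminal payoff is counted exactly once and on the correct interval; and, above all, verifying that the cumulative discounting across periods is bookkept consistently with the local discount $e^{-\rho u}$ built into $r$ and with the (undiscounted) kernel $Q_L$. This is where a clean accounting of the flow, the survival probabilities, and the discount factor is essential; the remaining summation and supremum arguments are routine given the nonnegativity of $r$ and the bound of Proposition~\ref{prop:bound-on-value}.
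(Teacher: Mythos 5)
Your route is the standard argument behind \citet[Theorem~44.9]{davis1993markov} --- interjump decomposition of the reward, strong Markov property at the jump times $T_n$, identification of each period's conditional contribution with the one-period reward, then Tonelli --- which is precisely the reference the paper substitutes for a written proof, so the approach itself matches. However, there is a genuine gap at the very point you flag as ``most delicate'' and then never resolve: the discount bookkeeping. Substituting $u = T_n + u'$ on the $n$-th interval and conditioning on $\F^S_{T_n}$, the expected contribution of that interval to \eqref{1.7} is
\begin{equation}
e^{-\rho(T_n - t)}\, r\big(L_n, \nu^n(L_n)\big)\,,
\end{equation}
\emph{not} $r(L_n,\nu^n(L_n))$: the local discount $e^{-\rho u'}$ is built into $r$ via \eqref{rewardmdm}, but the accumulated factor $e^{-\rho(T_n-t)}$ is not, and $r$ has no dependence on the initial time through which it could be absorbed. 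So your central assertion ``this is exactly $r(L_n,\nu^n)$'' is false as written, and summing your per-period identities yields $V^{\{\nu^n\}} = \mathbb{E}\big[\sum_{n} e^{-\rho(T_n-t)}\, r(L_n,\nu^n(L_n))\big]$, which agrees with $J_\infty^{\{\nu^n\}} = \mathbb{E}\big[\sum_n r(L_n,\nu^n(L_n))\big]$ only when $\rho = 0$ if $\{L_n\}$ is driven by the kernel $Q_L$ taken literally from \eqref{eq:form-of-QL}.

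The missing idea is that the MDM must be run under the $\rho$-discounted (sub-Markov) kernel: the transition kernel should carry $e^{-\rho u}$ inside the integral in \eqref{eq:form-of-QL} and $e^{-\rho \tau^\varphi}$ in front of the cemetery term, so that killing at rate $\rho$ realizes the discount. This is in fact how the paper uses the kernel later --- in \eqref{eq:optimality-for-Vprime} and \eqref{eq:def-Tprime} the factor $e^{-(\rho u + \Lambda^\alpha_u)}$ multiplies $\overline{Q}' V'$ --- so \eqref{eq:form-of-QL} should be read with the discount included. With that kernel one proves, by a short induction using the tower property and the per-period identity $\mathbb{E}\big[e^{-\rho(T_{n+1}-T_n)} f(L_{n+1}) \mid \F^S_{T_n}\big] = Q_L f(L_n, \nu^n(L_n))$, that the expectation of $r(L_n,\cdot)$ under the killed chain reproduces exactly $\mathbb{E}\big[e^{-\rho(T_n-t)} r(L_n,\nu^n(L_n))\big]$; only then does your summation close the proof. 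Two minor further points: positive homogeneity in $s$ (Remark~\ref{rem:V-pos-hom}) plays no role in this lemma --- the state $L_n$ contains the price, $r$ carries the explicit factor $s$, and the price is constant between jumps because $g^3 \equiv 0$ --- and the appeal to Proposition~\ref{prop:bound-on-value} is unnecessary for the interchange, since $r \ge 0$ makes Tonelli applicable without any finiteness a priori.
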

The proof is similar to the proof of \citet[Theorem~44.9]{davis1993markov} and is therefore omitted.

\subsection{The Bellman equation.}\label{subsec:optimality-equation}

In this section we study the Bellman  equation for the value function $V$. Define for  $\alpha\in\mathcal{A}$ and a measurable function $v\colon\widetilde{\mathcal{X}}\to \mathbb{R}^+$  the function $\mathcal{L}v(\cdot,\alpha)$ by
\begin{equation} \label{eq:def-L}
\mathcal{L}v( \widetilde{x},\alpha)=r(\widetilde{x},\alpha)+  Q_L v (\widetilde{x}, \alpha), \; \widetilde{x}  \in \widetilde{\mathcal{X}}.
\end{equation}
The \emph{maximal reward operator} $\mathcal{T}$ is then given by $\mathcal{T}v(\widetilde{x})=\sup_{\alpha\in\mathcal{A}}\mathcal{L}v(\widetilde{x},\alpha)$. Since the one-period reward function is nonnegative we have a so-called \emph{positive} MDM and it follows from  \citet[Theorem~7.4.3]{bauerle-rieder-book}  that the value function satisfies the so-called \emph{Bellman} or \emph{optimality equation}
\begin{equation} \label{eq:bellman-equation}
V(\widetilde x) = \mathcal{T} V(\widetilde x), \quad \widetilde{x}  \in \widetilde{\mathcal{X}},
\end{equation}
that is $V$ is a fixed point of the operator $\mathcal{T}$. In order to characterize $V$ as viscosity solution of the HJB equation associated with the PDMP $\widetilde{X}$ (see Section~\ref{sec:viscosity}) we need a stronger result. We want to show: i.) that the value function  $V$ is the unique fixed point of $\mathcal{T}$ in a suitable function class $\mathcal{M}$; ii.) that for a starting point $v^0 \in \mathcal{M} $ iterations of the form $v^{n+1} = \mathcal{T} v^n$, $n \in \bN 1$, converge to $V$; and iii.) that $V$ is continuous on $\widetilde{\mathcal{X}}$.

Points  i.) and ii.) follow from the next lemma.
\begin{lemma}\label{lemma:bounding-function} Define for  $\gamma >0$, the function  $b \colon
\widetilde{\mathcal{X}} \cup \{\bar \Delta\} \to \R^+$ by
$b(\widetilde{x})=b(t,x):=s w e^{\gamma(T-t)}$,  $\widetilde{x} \in \widetilde{\mathcal X}$, and $b(\bar \Delta) = 0$.
Then under  Assumption~\ref{ass1},    $ b$ is a \emph{bounding function} for the MDM with transition  kernel $Q_L$ and  reward function $r$, that is there are constants $c_r, c_Q $ such that for all $(\widetilde{x},\alpha)\in \widetilde{\mathcal{X}}\times\mathcal{A}$.
$$ |r(\widetilde{x},\alpha)| \le c_r b(\widetilde{x})  \  \text{ and } \ Q_L b (\widetilde{x}, \alpha) \le c_Q b(\widetilde{x})\,.$$
Moreover, for $\gamma$ sufficiently large it holds that $c_Q <1$, that is the MDM is \emph{contracting}.
\end{lemma}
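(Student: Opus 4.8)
The plan is to establish the two inequalities separately by exploiting the explicit integral representations of $r$ and $Q_L$ in \eqref{rewardmdm} and \eqref{eq:form-of-QL}, together with the boundedness of the jump intensity $\lambda \le \lambda^{\text{max}}$ from \eqref{eq:bound_etaP} and the fact that the inventory component $w$ is nonincreasing along the flow (since $g^2 = -\nu \le 0$). The key structural observation is that $b(\widetilde x) = s w e^{\gamma(T-t)}$ factorizes into the price component $s$, the inventory $w$, and an exponential time-decay factor; the first two are preserved or shrink along the deterministic flow, while the exponential factor is what will ultimately produce the contraction constant $c_Q < 1$ for $\gamma$ large.

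For the reward bound, I would start from \eqref{rewardmdm}. The running-reward integrand is $e^{-\rho u} e^{-\Lambda^\alpha_u} \alpha_u s (1 - f(\alpha_u))$; since $f \ge 0$, $\alpha_u \le \numax$, and the price component along the flow is constant (recall $g^3 = 0$, so $s$ does not change between jumps), this is bounded by $\numax s$. Integrating over $u \in [0, \tau^\varphi]$ with $\tau^\varphi \le T - t$ gives a contribution of order $\numax \, s \, (T-t)$. For the terminal term $e^{-\rho \tau^\varphi} e^{-\Lambda^\alpha_{\tau^\varphi}} h(w_{\tau^\varphi}) s$, I use $h(w) \le w \le w_0$ and the monotonicity of the inventory to bound $h(w_{\tau^\varphi}) \le w_0$. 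Since $b(\widetilde x) = s w e^{\gamma(T-t)} \ge s w$, I need to absorb the factor $w_0$ and the time length into $c_r$; here the subtle point is that $r$ must be dominated by $b$ which carries a factor $w$, not $w_0$, so on the part of the state space where $w$ is small one must argue that the reward is correspondingly small. This is where I would use that the inventory along the flow starting from $w$ cannot exceed $w$, and that both the running and terminal rewards are proportional to quantities controlled by the available inventory, yielding $|r(\widetilde x, \alpha)| \le c_r \, s \, w \le c_r b(\widetilde x)$ for a suitable $c_r$ depending on $\numax$, $T$ and $\rho$.

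For the kernel bound, I evaluate $Q_L b$ using \eqref{eq:form-of-QL}. The cemetery term contributes $e^{-\Lambda^\alpha_{\tau^\varphi}} b(\bar \Delta) = 0$. For the integral term I need $\overline Q_{\widetilde X} b$, which by \eqref{eq:def-Qbar} replaces $s \mapsto s(1+z)$, $w \mapsto w$, and shifts time by $u$; applying $b$ and using $\sum_j \pi^j \int (1+z)\,\eta^\P(t,e_j,\nu;\ud z) = s^{-1}(\lambda(\widetilde x,\nu) s + \dots)$ together with the boundedness of the mass and first moment of $\eta^\P$ (Assumption~\ref{ass1} and \eqref{eq:drift}), I obtain $\overline Q_{\widetilde X} b \le C \, s \, w \, e^{\gamma(T - t - u)}$ for a constant $C$ independent of the controls. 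Multiplying by $e^{-\Lambda^\alpha_u} \le 1$ and integrating over $u \in [0, \tau^\varphi]$ against the bound $\tau^\varphi \le T - t$ gives
$$
Q_L b(\widetilde x, \alpha) \le C \, s \, w \int_0^{T-t} e^{\gamma(T-t-u)} \, \ud u = \frac{C}{\gamma}\, s\, w \,\bigl(e^{\gamma(T-t)} - 1\bigr) \le \frac{C}{\gamma}\, b(\widetilde x),
$$
so $c_Q = C/\gamma$, and the contraction $c_Q < 1$ follows by choosing $\gamma > C$. The main obstacle I anticipate is the bookkeeping in the kernel estimate: one must verify that the factor arising from $\overline Q_{\widetilde X}$ acting on the price argument $s(1+z)$ stays uniformly bounded over the compact support of $\eta$ and over all controls, and that the intensity factor $\lambda(\widetilde x,\nu)$ implicit in the normalization of $Q_L$ cancels correctly against $\overline Q_{\widetilde X}$, so that no spurious dependence on $\lambda^{\text{max}}$ enters the exponent. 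Keeping the exponential time factor as the sole source of the contraction — rather than relying on the discount rate $\rho$ or on the survival probability $e^{-\Lambda^\alpha_u}$, both of which could vanish — is the essential idea that makes the constant $c_Q$ tunable below $1$ uniformly in the data.
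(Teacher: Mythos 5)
Your proposal is correct, and for the kernel/contraction estimate it is essentially the paper's own argument: the paper likewise evaluates $\overline{Q}_{\widetilde X} b$ along the flow, bounds it by $C\, s\, w\, e^{\gamma(T-t-u)}$ with a constant uniform in the control (using $w_u^\alpha \le w$ and the uniform bounds on the mass and mean of $\eta^\P$), and gets $Q_L b \le (C/\gamma)\, b(\widetilde x)$ from $\int_0^{\tau^{\varphi}} e^{-\gamma u}\,\ud u \le 1/\gamma$, so that $c_Q<1$ for $\gamma$ large. Your structural remarks all check out: $b(\bar\Delta)=0$ kills the exit term, $(1+z)$ is uniformly bounded on the compact $\operatorname{supp}(\eta)$, $Q_L$ is built directly from the unnormalized kernel so no normalization needs to cancel, and the contraction must indeed come from the factor $e^{\gamma(T-t)}$ rather than from $\rho$ (absent from $Q_L$) or the survival factor (which can be close to $1$). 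A cosmetic difference: the paper bounds the drift part by $c_\eta=\sup\{\overline{\eta}^\P(t,e,0)\}$ via monotonicity in $\nu$, while your $C$ invokes Assumption~\ref{ass1} uniformly over $[0,\numax]$ and also absorbs the jump mass $\lambda^{\text{max}}$; both are legitimate. The one place you genuinely diverge is the reward bound, and there your sketch should be completed by the explicit identity you only gesture at: since $\frac{\ud}{\ud u}w_u^\alpha=-\alpha_u$, one has $\int_0^{\tau^{\varphi}} e^{-\rho u}e^{-\Lambda^{\alpha}_u}\alpha_u(1-f(\alpha_u))\,\ud u \le \int_0^{\tau^{\varphi}}\alpha_u\,\ud u = w-w^\alpha_{\tau^{\varphi}}$, which together with $h(w_{\tau^{\varphi}})\le w_{\tau^{\varphi}}$ gives $r(\widetilde x,\alpha)\le s w$ (or $\le 2sw$ if you bound the two pieces separately) --- your first attempt via $\numax\, s\,(T-t)$ indeed fails where $w$ is small, exactly as you note. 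The paper obtains the same cancellation by partial integration, $\int_0^{\tau^{\varphi}} e^{-\Lambda^{\alpha}_u}\alpha_u\,\ud u \le w - e^{-\Lambda^{\alpha}_{\tau^{\varphi}}}w^\alpha_{\tau^{\varphi}}$, so that the terminal term cancels and $c_r=1$; your inventory-identity route is more elementary and equally sufficient, since a bounding function only requires some finite $c_r$.
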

The proof is postponed to Appendix \ref{app:sec-4}. In the sequel  we denote by $\mathcal{B}_b$ the set of functions
\[\mathcal{B}_b:=\big\{ v:\widetilde{\mathcal{X}}\to \mathbb{R} \text{ such that } \sup\nolimits_{\widetilde{x} \in\widetilde{\mathcal{X}}} \big |\nicefrac{  v(\widetilde{x}) }{ b(\widetilde{x})}\big | <\infty     \big\}\,,   \]
and we define for $v\in \mathcal{B}_b$ the norm   $\| v\|_b=\sup_{\widetilde{x}\in\widetilde{\mathcal{X}}}  |\nicefrac {v(\widetilde{x})}{ b(\widetilde{x})}|$. Then the following holds, see \citet[Section~ 7.3]{bauerle-rieder-book}: a) $(\mathcal{B}_b,\|\cdot\|)_b$ is a Banach space; b) $\mathcal{T} (\mathcal{B}_b) \subset \mathcal{B}_b$;
c) $\|\mathcal{T}v-\mathcal{T}u \|_b \le c_Q \|v-u \|_b$.

If the MDM is contracting,  the maximal reward operator is a contraction on $(\mathcal{B}_b,\|\cdot\|)_b$, and the value function is an element of $\mathcal{B}_b$.
Banach's fixed  point theorem thus gives properties  i.) and ii.) above  with $\mathcal{M} = \mathcal{B}_b$. In order to establish property iii.) (continuity of $V$) we observe that the set
\begin{equation}
\mathcal{C}_b := \{ v \in \mathcal{B}_b \colon \text{ $v$ is continuous}\}
\end{equation}
is a closed subset of $(\mathcal{B}_b,\|\cdot\|)_b$, see  \citet[Section~7.3]{bauerle-rieder-book}. Moreover, we show in Proposition~\ref{prop:continuity} that under certain continuity conditions (see Assumptions~\ref{ass1}~and~\ref{ass2}), $\mathcal{T} $ maps $\mathcal{C}_b$ into itself. Hence it follows from Banach's fixed point theorem that $V \in \mathcal{C}_b$.

\begin{assumption} \label{ass2}
\begin{itemize}
\item[1.] The measure   $\eta^j(t,\nu;\ud z)$ for $j \in \{1, \dots, K\}$ is continuous in  the weak topology, i.e.~for all bounded and continuous  $\phi$, the mapping $ (t, \nu) \mapsto \int_\R \phi(z) \eta^j(\ud z) $ is continuous on $[0,T] \times [\numin, \numax]$.

\item[2.] For the functions $u^j$  introduced in Proposition~\ref{cor:KS-equation} the following holds: for any sequence $\{(t^n, \nu^n, \pi^n)\}_{n \in \bN}$ with $ (t^n, \nu^n, \pi^n)\in [0,T) \times [\numin, \numax] \times \S^K$ for every $n \in \bN$, such that $\ds (t^n, \nu^n, \pi^n) \xrightarrow[n \to \infty]{} (t, \nu, \pi)$, one has
$$
\lim_{n \to \infty} \ \sup\nolimits_{z \in {\rm supp}(\eta)} |u^j(t^n, \nu^n, \pi^n,z) - u^j(t, \nu, \pi, z) | =0\,.
$$
\end{itemize}
\end{assumption}

\begin{proposition}\label{prop:continuity} Suppose that Assumptions~\ref{ass1}~and~\ref{ass2} hold  and let $v \in \mathcal{C}_b$. Then $\mathcal{T} v \in C_b$.
\end{proposition}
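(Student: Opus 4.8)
The plan is to exploit that $\mathcal{T}v \in \mathcal{B}_b$ is already known (property b) in the text), so that only the \emph{continuity} of $\mathcal{T}v$ remains, and to reduce this to a joint-continuity statement on a compactified control space. Writing $\mathcal{T}v(\widetilde{x}) = \sup_{\alpha\in\mathcal{A}}\mathcal{L}v(\widetilde{x},\alpha)$ with $\mathcal{L}v = r + Q_L v$, the easy half is lower semicontinuity: once I show that for each fixed $\alpha\in\mathcal{A}$ the map $\widetilde{x}\mapsto \mathcal{L}v(\widetilde{x},\alpha)$ is continuous, $\mathcal{T}v$ is lower semicontinuous as a pointwise supremum of continuous functions. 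The difficulty is the reverse direction (upper semicontinuity and attainment), because the supremum runs over the infinite-dimensional space $\mathcal{A}$ of measurable controls, which is not compact in any topology for which the nonlinear-in-$\nu$ ingredients of $\mathcal{L}v$ remain continuous.

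To regain compactness I would embed $\mathcal{A}$ into the space $\mathcal{R}$ of relaxed controls, i.e. Young measures $r$ on $[0,T]\times[0,\numax]$ with time-marginal equal to Lebesgue measure. The space $\mathcal{R}$ is compact and metrizable, $\mathcal{A}$ is dense in it, and every functional $u\mapsto\int_{[0,\numax]}\phi(a)\,r_u(\ud a)$ with $\phi$ continuous is continuous in $r$. I would extend $\mathcal{L}v$ to $\widehat{\mathcal{L}}v$ on $\widetilde{\mathcal{X}}\times\mathcal{R}$ by replacing $g(\widetilde{x},\alpha_u)$ by $\int g(\widetilde{x},a)\,r_u(\ud a)$ in the flow ODE and by averaging the running reward $\alpha_u s(1-f(\alpha_u))$ and the kernel $\overline{Q}_{\widetilde{X}}v(\cdot,\alpha_u)$ against $r_u$ in $r$ and $Q_L v$. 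Granting the key claim that $\widehat{\mathcal{L}}v$ is jointly continuous on $\widetilde{\mathcal{X}}\times\mathcal{R}$, Berge's maximum theorem yields that $\widetilde{x}\mapsto\max_{r\in\mathcal{R}}\widehat{\mathcal{L}}v(\widetilde{x},r)$ is continuous. Since $\mathcal{A}$ is dense in $\mathcal{R}$ and $\widehat{\mathcal{L}}v(\widetilde{x},\cdot)$ is continuous, the two suprema agree, $\sup_{\alpha\in\mathcal{A}}\mathcal{L}v(\widetilde{x},\alpha)=\max_{r\in\mathcal{R}}\widehat{\mathcal{L}}v(\widetilde{x},r)$, so $\mathcal{T}v$ equals this continuous function and lies in $\mathcal{C}_b$.

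The heart of the proof is therefore the joint continuity of $\widehat{\mathcal{L}}v$, which I would assemble from: (i) continuous dependence of the relaxed flow $\varphi^r_u(\widetilde{x})$ on $(\widetilde{x},r)$, via Gronwall's inequality and the uniform Lipschitz property of $g$ from Lemma~\ref{lemma:lipschitz} together with continuity of $g$ in the control; (ii) continuity of $\Lambda^r_u$ and of the discounting/survival factors, which follow from (i) and the bound \eqref{eq:bound_etaP}; and (iii) continuity of the integrand $\overline{Q}_{\widetilde{X}}v$, which is exactly where Assumption~\ref{ass2} enters, the weak continuity of $\eta^j$ in its first part controlling the averaging over jump sizes, the uniform-in-$z$ continuity of the $u^j$ in its second part controlling the post-jump filter map, and the continuity of $v\in\mathcal{C}_b$ (with $\|v\|_b<\infty$ furnishing a dominating function) controlling the composition. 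The most delicate point, however, is the behaviour of the exit time $\tau^{\varphi}$ at the corner $\Gamma_1\cap\Gamma_2$, where the inventory hits $0$ exactly at maturity: $\tau^{\varphi}$ itself may jump there, passing from an interior crossing of $\{w=0\}$ to the terminal exit at $t=T$. I would show that $\widehat{\mathcal{L}}v$ nonetheless stays continuous because the two boundary pieces are reward-compatible: at the corner $w_{\tau^{\varphi}}\to 0$, so the terminal reward $h(w_{\tau^{\varphi}})\,s\to h(0)\,s=0$ and the truncated integrals match up in the limit. Turning this potential discontinuity of $\tau^{\varphi}$ into continuity of the functional $\widehat{\mathcal{L}}v$ is the main obstacle and constitutes precisely the careful boundary analysis that the setting requires.
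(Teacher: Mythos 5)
Your architecture is genuinely different from the paper's, and heavier than necessary: the paper never compactifies the control space. It uses the elementary bound $|\mathcal{T} v(\widetilde{x}_n)-\mathcal{T} v(\widetilde{x})|\le \sup_{\alpha\in\mathcal{A}}|\mathcal{L}v(\widetilde{x}_n,\alpha)-\mathcal{L}v(\widetilde{x},\alpha)|$ and then shows that the right-hand side tends to zero \emph{uniformly in} $\alpha$, via the Gronwall estimate $\sup_{t,\alpha}|\widetilde\varphi^{\alpha}_t(\widetilde{x}_n)-\widetilde\varphi^{\alpha}_t(\widetilde{x})|\le|\widetilde{x}_n-\widetilde{x}|e^{K_gT}$ (Lemma~\ref{lemma:lipschitz}) and the continuity of $(\widetilde{x},\nu)\mapsto\overline{Q} v(\widetilde{x},\nu)$ (Lemma~\ref{lemma:continuity-of-QL}, where Assumption~\ref{ass2} enters). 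Uniform convergence of $\mathcal{L}v(\cdot,\alpha)$ settles both semicontinuity directions at once, so no Young measures, no chattering/density argument, and no Berge maximum theorem are needed. Your relaxed-control route could in principle be completed, but every analytic difficulty it faces (continuity of the relaxed flow, of $\Lambda$, of $\overline{Q}v$, and above all the exit-time problem) is identical to what the direct argument faces, plus additional verifications (equality of the two suprema, convergence of payoffs along approximating ordinary controls). Note also that your ``easy half'' is not easy: even per-$\alpha$ continuity of $\widetilde{x}\mapsto\mathcal{L}v(\widetilde{x},\alpha)$ already requires the full boundary analysis below.

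The genuine gap is that you have misplaced the discontinuity of $\tau^{\varphi}$. At the corner $\Gamma_1\cap\Gamma_2$ the exit time is in fact continuous: $\tau^{\varphi_n}\le T-t_n$, and since $w_u>0$ on $[0,\tau^{\varphi})$, uniform convergence of the flows forces $\liminf_n\tau^{\varphi_n}\ge\tau^{\varphi}=T-t$, so $\tau^{\varphi_n}\to\tau^{\varphi}$; only the exit \emph{location} can switch there, which is harmless because $h(0)=0$. The dangerous case, which your sketch does not address, is an interior lateral exit $\tau^{\varphi}<T-t$ with $w_{\tau^{\varphi}}=0$: the control on $(\tau^{\varphi},T-t]$ is unconstrained (take $\alpha\equiv 0$ there), so a nearby initial inventory $w_n>w$ gives $w^n_u\ge w_n-w>0$ for all $u$, hence $\tau^{\varphi_n}=T-t_n$ and the exit time jumps by a macroscopic amount; this is the paper's Case~2 and is exactly why Remark~\ref{rem:no-bound} rejects the uniform-exit-speed hypothesis of \citet{davis1993markov}. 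Handling it needs quantitative bounds, not matching of limits: split $r$ and $Q_Lv$ at $\tau^{\varphi}\wedge\tau^{\varphi_n}$, bound the running-reward remainder by $s\,w_{\tau^{\varphi}\wedge\tau^{\varphi_n}}$ via the partial-integration argument of Lemma~\ref{lemma:bounding-function}, and bound the kernel remainder using $\overline{Q}v(\widetilde{x},\nu)\le\|v\|_b\,s\,w\,\lambda(\widetilde{x},\nu)$ together with $\int_{\tau^{\varphi}\wedge\tau^{\varphi_n}}^{\tau^{\varphi_n}}\lambda^{\alpha}_u e^{-\Lambda^{\alpha}_u(\widetilde{x}_n)}\,\ud u\le 1$, yielding $\|v\|_b\,s\,w_{\tau^{\varphi}\wedge\tau^{\varphi_n}}$; both remainders vanish uniformly in $\alpha$ since $w_{\tau^{\varphi}}=0$. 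Your ``reward-compatibility'' is precisely this mechanism, so the gap is fixable --- but as written, your key claim (joint continuity of $\widehat{\mathcal{L}}v$ on $\widetilde{\mathcal{X}}\times\mathcal{R}$, on which Berge's theorem rests) is defended only at the one boundary point where no defence is needed, and left unproven at the points where the exit time actually fails to be continuous.
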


\begin{proof}  Consider some sequence $\widetilde{x}_n \to \widetilde{x}$ for $n \to \infty$. Since
$ | \cT v (\widetilde{x}_n) - \cT v (\widetilde{x}) | \le \sup_{\alpha \in \mathcal{A}} | \mathcal{L}v( \widetilde{x}_n,\alpha) -\mathcal{L}v( \widetilde{x},\alpha)|,
$
it suffices to estimate   the difference $ \sup_{\alpha \in \mathcal{A}} | \mathcal{L}v( \widetilde{x}_n,\alpha) -\mathcal{L}v( \widetilde{x},\alpha)|$.
  First, note that by the Lipschitz continuity of $g$, established in Lemma~\ref{lemma:lipschitz}, we have
$$\big | \widetilde\varphi_t^{\alpha} (\widetilde{x}_n) -  \widetilde \varphi_t^{\alpha} (\widetilde{x}) \big | \le
|\widetilde{x}_n - \widetilde{x}| + K_g  \int_0^t \big | \widetilde\varphi_u^{\alpha} (\widetilde{x}_n) -  \widetilde \varphi_u^{\alpha} (\widetilde{x}) \big | \,\ud u\,.$$
Gronwall inequality hence yields that
\begin{equation}\label{eq:convergence-of-flow}
\sup\nolimits_{t \in [0,T], \alpha \in \mathcal{A}} \big | \widetilde\varphi_t^{\alpha} (\widetilde{x}_n) -  \widetilde \varphi_t^{\alpha} (\widetilde{x}) \big |  \le |\widetilde{x}_n - \widetilde{x}| e^{K_g T}\,,
\end{equation}
and thus uniform convergence for $n \to \infty$ of the flow $\widetilde\varphi^{\alpha} (\widetilde{x}_n)$ to $\widetilde\varphi^{\alpha} (\widetilde{x})$.
This does however not imply that $\tau^{\varphi_n}$, the entrance time of $\widetilde \varphi^{\alpha} (\widetilde{x}_n)$  into the active boundary of the state space, converges to $\tau^\varphi$ for $n \to \infty$. To deal with this issue we distinguish two cases:

\textit{Case 1.} The flow $\widetilde \varphi_\cdot^{\alpha} (\widetilde{x})$ exits the state space $\widetilde{\mathcal{X}}$ at the terminal boundary $\Gamma_2$ (see \eqref{eq:active-boundary}). This implies that $\tau^\varphi = T-t$ and that the inventory level $w_u$ is strictly positive for $u < T-t$.  We therefore conclude from \eqref{eq:convergence-of-flow} that $\tau^{\varphi_n}$ converges to $T-t$.  Under Assumptions~\ref{ass1} and~\ref{ass2} the uniform convergence
$
\lim_{n \to \infty} \sup_{\alpha \in \mathcal{A}} | \mathcal{L}v( \widetilde{x}_n,\alpha) -\mathcal{L}v( \widetilde{x},\alpha)| =0
$
thus follows  immediately using the definition of $r$ and the continuity of the mapping $(\widetilde{x}, \nu) \mapsto \bar Q v(\widetilde x, \nu)$ established in Lemma~\ref{lemma:continuity-of-QL}, see Appendix~\ref{app:sec-4}.

\textit{Case 2.} The flow $\widetilde \varphi_\cdot^{\alpha} (\widetilde{x})$ exits $\widetilde{\mathcal{X}}$ at the lateral boundary $\Gamma_1$ so that $w_{\tau^\varphi} =0$. In that case \eqref{eq:convergence-of-flow}  implies that  $\liminf_{n \to \infty} \tau^{\varphi_n} \ge \tau^\varphi$; it is however possible that this inequality is strict.
We first show continuity of  the reward function for that case.   We decompose $r(\widetilde x_n,\alpha)$ as follows, setting $\rho =0$ for simplicity:
\begin{align} \label{eq:first-integral}
r(\widetilde x_n,\alpha)  & = s \int_{0}^{\tau^{\varphi}\wedge \tau^{\varphi_n} } \hspace{-0.2cm}  e^{-\Lambda^{\alpha}_u (\widetilde{x}_n)} \alpha_{u} (1 - f(\alpha_u)) \ud u  \\
&   + s \int^{\tau^{\varphi_n}}_{\tau^{\varphi}\wedge \tau^{\varphi_n} } \hspace{-0.2cm} e^{-\Lambda^{\alpha}_u (\widetilde{x_n})} \alpha_{u} (1 - f(\alpha_u)) \ud u +
 s  e^{-\Lambda^{\alpha}_{\tau^{\varphi_n}}(\widetilde{x}_n)}h(w_{\tau^{\varphi_n}}). \label{eq:second-integral}
\end{align}
Now it follows from  \eqref{eq:convergence-of-flow} that the integral in \eqref{eq:first-integral}  converges for $n \to \infty$ to $r(\widetilde x,\alpha)$ uniformly in $\alpha \in \mathcal{A}$.  The terms in \eqref{eq:second-integral} are bounded from above by $ s w_{\tau^{\varphi}\wedge \tau^{\varphi_n}}$; this can be shown via a similar partial integration argument as in the proof of Lemma~\ref{lemma:bounding-function}. Moreover,
$w_{\tau^{\varphi}\wedge \tau^{\varphi_n}}$ converges uniformly in $\alpha \in \mathcal{A}$ to $w_{\tau^\varphi}=0 $, so that \eqref{eq:second-integral} converges to zero.
Next  we  turn to the  transition kernel.  We decompose $Q_L v$:
\begin{align*}
Q_L v\big(\widetilde{x}_n,\alpha\big)  &= \int_0^{ \tau^\varphi\wedge \tau^{\varphi_n}} \hspace{-0.3cm} e^{- \Lambda_u^{\alpha} (\tilde x_n)}  \overline{Q} v(\widetilde \varphi_u(\widetilde x_n), \alpha_u)  \ud u
 + \int^{\tau^{\varphi_n}}_{\tau^{\varphi}\wedge \tau^{\varphi_n} } \hspace{-0.3cm} e^{- \Lambda_u^\alpha (\tilde x_n)}  \overline{Q} v(\widetilde \varphi_u^{\alpha}(\widetilde x_n), \alpha_u  \big ) \ud u\,.
\end{align*}
For $n \to \infty$, the first integral converges to $Q_L v\big(\widetilde{x},\alpha\big)$ using \eqref{eq:convergence-of-flow} and the continuity of  the mapping $(\widetilde{x}, \nu) \mapsto \overline{Q} v(\widetilde x, \nu)$ (Lemma~\ref{lemma:continuity-of-QL}). To estimate  the second term  note that
$\overline{Q} v(\widetilde x, \nu) \le \|v\|_b s w \lambda({\widetilde{x},\nu})$ (as $\frac{1}{\lambda}\bar Q$ is a probability transition kernel), so that the integral is bounded by
$$ \|v\|_b s w_{\tau^{\varphi}\wedge \tau^{\varphi_n}}  \int^{\tau^{\varphi_n}}_{\tau^{\varphi}\wedge \tau^{\varphi_n} } \hspace{-0.1cm} \lambda^{\alpha}_u e^{- \Lambda_u^\alpha (\widetilde{x}_n)} \ud u \le \|v\|_b s w_{\tau^{\varphi}\wedge \tau^{\varphi_n}}\,, $$
and the last term converges to zero for $n \to \infty$, uniformly in $\alpha \in \mathcal{A}$.
\end{proof}

\begin{remark}\label{rem:no-bound}
Note that  existing  continuity results for $\mathcal{L}v( \cdot,\alpha)$ such as \citet[Theorem~44.11]{davis1993markov}  make the assumption that the flow  $\varphi^{\alpha}$ reaches the active boundary at a uniform speed, independent of the chosen control. In order to ensure this hypothesis in our framework we would have to impose a strictly positive lower bound on the admissible liquidation rate. This is  an economically implausible restriction of the strategy space which is why we prefer to rely on a direct argument.
\end{remark}

We summarize the results of this section in the following theorem.

\begin{theorem}\label{thm:fixed_point} Suppose that Assumptions~\ref{ass1}~and~\ref{ass2} hold.  Then
the value function $V$ is continuous on $\widetilde{\mathcal X}$ and satisfies the boundary conditions  $V(\widetilde x) = 0 $ for $\widetilde x$ in the lateral boundary $\Gamma_1$ and $V(T, x) = s h(w)$. Moreover, $V$ is the unique solution of the Bellman or optimality equation $ V= \widetilde{\mathcal{T}} V$ in $\mathcal{B}_b$.
\end{theorem}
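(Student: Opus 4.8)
The plan is to read this theorem as the culmination of the contraction-mapping machinery assembled in Section~\ref{subsec:optimality-equation}, with the genuinely analytic work already discharged in Proposition~\ref{prop:continuity}. Concretely, by Lemma~\ref{lemma:bounding-function} the bounding function $b(\widetilde x)=swe^{\gamma(T-t)}$ makes the MDM contracting once $\gamma$ is chosen large enough that $c_Q<1$, and together with properties a)--c) recorded after that lemma this means $\mathcal{T}$ is a $c_Q$-contraction on the Banach space $(\mathcal{B}_b,\|\cdot\|_b)$. Since the one-period reward $r$ is nonnegative and bounded by $c_r b$, the value function $V$ lies in $\mathcal{B}_b$ and already satisfies the optimality equation \eqref{eq:bellman-equation}; by the theory of positive contracting Markov decision models (\citet[Chapter~7]{bauerle-rieder-book}), $V$ is therefore the unique fixed point of $\mathcal{T}$ in $\mathcal{B}_b$ and satisfies $V=\lim_{n\to\infty}\mathcal{T}^n v^0$ in $\|\cdot\|_b$ for every $v^0\in\mathcal{B}_b$. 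This already delivers the uniqueness claim.

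For continuity I would run value iteration inside the closed subspace of continuous functions. Start from $v^0\equiv 0$, which lies in $\mathcal{C}_b$. Proposition~\ref{prop:continuity} guarantees $\mathcal{T}(\mathcal{C}_b)\subset\mathcal{C}_b$, so inductively $\mathcal{T}^n v^0\in\mathcal{C}_b$ for every $n$. Because $\mathcal{C}_b$ is a closed subset of $(\mathcal{B}_b,\|\cdot\|_b)$ and the iterates converge to $V$ in the $b$-norm, the limit $V$ must belong to $\mathcal{C}_b$; hence $V$ is continuous on $\widetilde{\mathcal X}$. Equivalently, one may invoke Banach's fixed point theorem directly on the complete metric space $\mathcal{C}_b$: the restriction of $\mathcal{T}$ has a unique fixed point there, which by uniqueness in the larger space $\mathcal{B}_b$ must coincide with $V$.

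The boundary conditions I would read off by evaluating the optimality equation $V=\mathcal{T}V$ at a point $\widetilde x\in\Gamma$. Since such a point already lies on the active boundary, the exit time of the flow satisfies $\tau^{\varphi}(\widetilde x,\alpha)=0$ for every $\alpha\in\mathcal A$; consequently the running integral in $r$ vanishes, while $\Lambda^{\alpha}_0=0$ leaves the terminal contribution $r(\widetilde x,\alpha)=s\,h(w)$, and the transition kernel reduces to $Q_L v(\widetilde x,\alpha)=v(\bar \Delta)=0$. Taking the supremum over $\alpha$, which is here $\alpha$-independent, yields $V(\widetilde x)=s\,h(w)$ on all of $\Gamma$. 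On $\Gamma_1$ the inventory is $w=0$ and $h(0)=0$, so $V\equiv 0$ there; on $\Gamma_2$ the time component is $t=T$, giving $V(T,x)=s\,h(w)$, as claimed.

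The only place where care is genuinely required is matching the fixed points across the two spaces and, more substantively, the continuity-preservation step itself. The latter---showing that $\mathcal{T}$ maps continuous functions to continuous functions even though the exit time $\tau^{\varphi_n}$ need not converge to $\tau^{\varphi}$ when the flow leaves through the lateral boundary $\Gamma_1$---is exactly the obstacle handled by the two-case decomposition in Proposition~\ref{prop:continuity}, and I would rely on it as a black box here. Given that proposition, the present theorem is essentially an application of Banach's fixed point theorem on the closed subspace $\mathcal{C}_b$ together with the elementary boundary evaluation above.
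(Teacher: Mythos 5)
Your proposal is correct and takes essentially the same route as the paper: the theorem is precisely the summary of Section~\ref{subsec:optimality-equation}, obtained by combining Lemma~\ref{lemma:bounding-function} (the MDM is contracting, so $\mathcal{T}$ is a $c_Q$-contraction on $(\mathcal{B}_b,\|\cdot\|_b)$ and $V\in\mathcal{B}_b$ satisfies the Bellman equation by positive-MDM theory), Proposition~\ref{prop:continuity} (invariance of the closed subspace $\mathcal{C}_b$), and Banach's fixed point theorem, while the boundary conditions follow, exactly as you argue, from evaluating $r$ and $Q_L$ at points of $\Gamma$ where $\tau^{\varphi}=0$, so that $r(\widetilde x,\alpha)=s\,h(w)$ and $Q_L v(\widetilde x,\alpha)=v(\bar\Delta)=0$ independently of $\alpha$. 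Your handling of the one genuinely delicate point (matching the fixed point in $\mathcal{C}_b$ with the one in $\mathcal{B}_b$, and relying on the two-case exit-time analysis of Proposition~\ref{prop:continuity} as a black box) is exactly how the paper proceeds.
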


\section{Control Problem II: Viscosity Solutions }\label{sec:viscosity}

In this section we   show that  the value function is a  viscosity solution of the standard HJB equation associated with the controlled Markov process $(W, \pi)$ and  we derive a  comparison principle for that equation. These results are crucial to ensure  the convergence of suitable numerical schemes for the HJB equation and thus for the numerical solution of the optimal liquidation problem.
In Section~\ref{subsec:counterexample} we provide an example which shows that in general the HJB equation does not admit a classical solution.

\subsection{Viscosity solution characterization.}\label{subsec:characterization}

As a first step we write down the Bellman equation and we   use the positive homogeneity of $V$ in the bid price (see Remark~\ref{rem:V-pos-hom}) to eliminate $s$ from the set of state variables.  Define $ \widetilde{\mathcal{Y}} = [0,T] \times [0, w_0] \times \S^K $ and denote by $\operatorname{int} \widetilde{\mathcal{Y}}$ and
$\partial \widetilde{\mathcal{Y}}$  the \emph{relative} interior and the \emph{relative} boundary of $\widetilde{\mathcal{Y}}$ with respect to the hyperplane $\R^2 \times \H^K$. For $\widetilde y  \in \widetilde{\mathcal{Y}} $ we set
\begin{equation} \label{eq:def-V'}
 V' (\widetilde y) =  V'(t,w,\pi) := V(t, w,1, \pi)\,,
 \end{equation}
so that the value function satisfies the relation $V(\widetilde x) = s V'(\widetilde y)$. For $\nu \in [\numin, \numax]$, $\widetilde y  \in \widetilde{\mathcal{Y}}$, and any measurable function $\Psi \colon \widetilde{\mathcal{Y}} \to \R^+$, define
$$
\overline{ Q}' \Psi ( \widetilde y , \nu) := \sum_{j=1}^K \pi^j \int_{\mathbb{R}} (1+z) \Psi \left(t,w,(\pi^i(1+u^i(t,\pi,\nu,z)))_{i=1,\dots,K} \right) \eta^{\P}(t,e_j,\nu,\ud z)
$$
and note that $\overline{Q} V ( \widetilde x, \nu) =  s \overline{ Q}' V' (\widetilde y,\nu)$. From now on  we denote  by $\widetilde \varphi_u^\alpha (\widetilde y) $ the flow of the vector field $g$ with price component $g^3$ omitted, and we write $\tau^{\varphi}$ for the first time this flow reaches the active boundary of $\widetilde{\mathcal{Y}}$ given by $\Gamma := [0,T] \times \{0\} \times \S^K_0  \cup \{T\} \times [0, w_0]  \times \S^K_0\,$ of $\widetilde{\mathcal{Y}}$.

By positive homogeneity, the Bellman equation for $V$  reduces to  the following  optimality equation for $V^\prime$:
\begin{equation}\label{eq:optimality-for-Vprime}
\begin{split}
V^\prime (\widetilde y) &= \sup_{\alpha \in A}  \Big \{ \int_0^{\tau^{\varphi} } \hspace{-0.1cm}e^{-(\rho u +\Lambda^\alpha_u(\widetilde y))} \big (   \alpha_u (1 - f(\alpha_u)) +  \overline{ Q}^\prime   V^\prime  ( \widetilde{\varphi}_u^\alpha (\widetilde y)  , \alpha_u  ) \big ) \ud u \\
 & \quad + e^{- (\rho \tau^{\varphi}+ \Lambda_{\tau^{\varphi}}^\alpha (\widetilde y) )} h (w_{\tau^{\varphi}} )  \Big \}.
\end{split}
\end{equation}
For $\Psi \colon \widetilde{\mathcal{Y}} \to \R^+$ bounded, define the function $\ell^\Psi \colon \widetilde{\mathcal{Y}} \times [\numin, \numax] \to \R^+ $  and the operator $\mathcal{T}'$ by
\begin{align} \label{eq:def-fpsi}
 \ell^\Psi (\widetilde y, \nu) &= \nu (1- f(\nu)) + \overline{Q}' \Psi ( \widetilde y , \nu)\,, \\
\label{eq:def-Tprime}
\mathcal{T}^\prime \Psi (\widetilde y) &= \sup_{\alpha \in A} \Big \{ \int_0^{\tau^{\varphi} } \hspace{-0.2cm}e^{-(\rho u + \Lambda^\alpha_u(\widetilde y))}  \ell^{\Psi}\big(\widetilde \varphi_u^\alpha(\widetilde y), \alpha_u) \ud u
+ e^{- (\rho \tau^{\varphi}+ \Lambda_{\tau^{\varphi}}^\alpha (\widetilde y) )}h(w_{\tau^{\varphi}})  \Big \}.
\end{align}
Note that for fixed $\Psi$, $v^\Psi := \mathcal{T}^\prime \Psi$  is the value function of a deterministic {exit-time optimal control problem} with instantaneous  reward $\ell^\Psi$ and boundary value $h$.
Viscosity solutions  for this problem are studied extensively in \citet{bib:barles-94}. Moreover,
the optimality equation \eqref{eq:optimality-for-Vprime} for $V'$ can be written as the fixed point equation $V^\prime  = \mathcal{T}^\prime V^\prime$. \citet{bib:davis-farid-99} observed  that this can be used to obtain a viscosity solution characterization of the value function in a PDMP control problem,  and we now explain how this  idea applies  in our framework. Define for $\Psi \colon \widetilde{\mathcal Y} \to \R^+$ the function
$ F_\Psi \colon \widetilde{\mathcal Y} \times \R^+ \times \R^{K+2} \to \R $ by
$$ F_\Psi ( \widetilde y, v, p) = - \sup \big \{-(\rho +\lambda( \widetilde{y}, \nu)) v  +  g(\widetilde{y}, \nu)' p + \ell^{\Psi} (\widetilde y, \nu) \colon \nu \in [0, \numax] \big \}\,.$$
The dynamic programming equation associated with the control problem~\eqref{eq:def-Tprime}  is
\begin{equation} \label{eq:dyn-prog-v-psi}
F_{\Psi} \big ( \widetilde y, v^\Psi(\widetilde y), \nabla v^\Psi  (\widetilde y) \big) = 0 \text{ for } \widetilde y \in \text{int} \widetilde{\mathcal{Y}}, \quad v^\Psi(\widetilde y) = h(\widetilde y) \text{ for } \widetilde y \in \partial  \widetilde{\mathcal{Y}} \,.
\end{equation}
Moreover, since $V^\prime  = \mathcal{T}^\prime V^\prime$, we expect that $V'$ solves in a suitable sense the equation
\begin{equation} \label{eq:HJB-general}
F_{V'} \big ( \widetilde y, V'(\widetilde y), \nabla V' (\widetilde y) \big) = 0, \text{ for } \widetilde y \in \text{int} \widetilde{\mathcal{Y}}, \quad V'(\widetilde y) = h(\widetilde y) \text{ for } \widetilde y \in \partial  \widetilde{\mathcal{Y}}.
\end{equation}
\begin{remark}\label{rem:HJB}
Notice that, equations \eqref{eq:dyn-prog-v-psi} and \eqref{eq:HJB-general} differ in the sense that in \eqref{eq:dyn-prog-v-psi} the function  $F_\Psi$ enters with $\Psi$ fixed, whereas in \eqref{eq:HJB-general} one works with the function $F_{V'}$. This reflects the fact that control problem \eqref{eq:def-Tprime} associated to equation \eqref{eq:dyn-prog-v-psi} has an exogenously given running cost, while in the optimization problem \eqref{eq:optimality-for-Vprime}, leading to equation \eqref{eq:HJB-general}, function $V'$ is the solution of a fixed point equation, and therefore the running cost is endogenous.
\end{remark}
There are two issues  with equations  \eqref{eq:dyn-prog-v-psi} and \eqref{eq:HJB-general}: $v^\Psi$ and  $V'$ are  typically not $\mathcal{C}^1$ functions, and the value  of these functions  on the non-active part $\partial \widetilde{Y} \setminus \Gamma$ of the boundary is  determined  endogenously.
Following \citet{bib:barles-94} we therefore work with the following notion of viscosity solutions.

\begin{definition}\label{def:viscosity-1}
\begin{itemize}\item[1.] A bounded upper semi-continuous (u.s.c.)  function $v$ on $\widetilde{\mathcal{Y}}$ is a {\em viscosity subsolution} of \eqref{eq:dyn-prog-v-psi}, if for all $\phi \in \mathcal{C}^1(\widetilde{\mathcal{Y}}) $ and all  local maxima ${\widetilde y}_0 \in \widetilde{\mathcal{Y}} $ of $v - \phi$ one has
\begin{equation} \label{eq:subsol-1}
\begin{split}
F_{\Psi} \big ( \widetilde y_0, v(\widetilde y_0), \nabla \phi (\widetilde y_0) \big)  &\le 0 \text{ for }\widetilde y_0 \in \text{int} \widetilde{\mathcal{Y}} , \\
\min \big \{  F_{\Psi} \big ( \widetilde y_0, v(\widetilde y_0), \nabla \phi (\widetilde y_0) \big), v(\widetilde y_0) - h(\widetilde y_0 )\big \} & \le 0  \text{ for } \widetilde y_0 \in \partial \widetilde{\mathcal{Y}} .
\end{split}
\end{equation}
A bounded lower semi-continuous (l.s.c.) function $u$ on $\widetilde{\mathcal{Y}}$ is a {\em viscosity supersolution} of \eqref{eq:dyn-prog-v-psi}, if   for all $\phi \in \mathcal{C}^1(\widetilde{\mathcal{Y}}) $ and all  local minima  ${\widetilde y}_0 \in \widetilde{\mathcal{Y}} $ of $u - \phi$ one has
\begin{equation} \label{eq:supersol-1}
\begin{split}
F_{\Psi} \big ( \widetilde y_0, u(\widetilde y_0), \nabla \phi (\widetilde y_0) \big)  &\ge 0 \text{ for }\widetilde y_0 \in \text{int} \widetilde{\mathcal{Y}} , \\
\max \big \{  F_{\Psi} \big ( \widetilde y_0, u(\widetilde y_0), \nabla \phi (\widetilde y_0) \big), u(\widetilde y_0) - h(\widetilde y_0 )\big \} & \ge 0  \text{ for } \widetilde y_0 \in \partial \widetilde{\mathcal{Y}}.
\end{split}
\end{equation}
A {\em viscosity solution} $v^\Psi$ of  \eqref{eq:dyn-prog-v-psi} is either a continuous function on $\widetilde{\mathcal{Y}}$ that is both a sub and a supersolution of \eqref{eq:dyn-prog-v-psi}, or a bounded function with u.s.c.~and l.s.c.~envelopes that are a sub and a supersolution of \eqref{eq:dyn-prog-v-psi}.
\item[2.] A bounded u.s.c.  function $v$ on $\widetilde{\mathcal{Y}}$ is a {\em viscosity subsolution} of \eqref{eq:HJB-general}, if
the relation \eqref{eq:subsol-1} holds for $F= F_v$. Similarly,
a bounded l.s.c. function $u$ on $\widetilde{\mathcal{Y}}$ is a {\em viscosity supersolution} of \eqref{eq:dyn-prog-v-psi}, if \eqref{eq:supersol-1} holds for $F= F_u$. Finally,
$V'$ is a \emph{viscosity solution} of \eqref{eq:HJB-general},  if it is both a sub and a supersolution of  that equation.
\end{itemize}
\end{definition}
Note that Definition \ref{def:viscosity-1} allows for the case that  $v^\Psi(\widetilde{y}_0 ) \neq h(\widetilde {y}_0)$ for certain  boundary points $\widetilde{y}_0\in \partial \widetilde{\mathcal{Y}}$. In particular, if  $F_{\Psi } \big ( \widetilde y_0, v^\Psi(\widetilde y_0), \nabla v^\Psi (\widetilde y_0) \big)  =0 $ in the viscosity sense, \eqref{eq:subsol-1} and \eqref{eq:supersol-1} hold irrespectively of the value of $h(\widetilde{y}_0)$.

\begin{theorem}\label{thm:viscosity} Suppose that  Assumptions~\ref{ass1} and \ref{ass2} hold.  Then
the value function $V'$ is a  continuous viscosity solution of \eqref{eq:HJB-general} in $\widetilde{\mathcal{Y}}$.
Moreover, a
 comparison principle holds for \eqref{eq:HJB-general}: if $v\ge 0 $ is a  subsolution and $u \ge 0 $ a supersolution of \eqref{eq:HJB-general} such that $v(\widetilde{y})/w$ and $u(\widetilde{y})/w$ are bounded on $\widetilde{\mathcal{Y}}$ and such that $v = u = h$ on the active boundary $\Gamma$ of $\widetilde{\mathcal{Y}}$, then $v  \le u $ on $\operatorname{int}\widetilde{\mathcal{Y}}$. It follows that  $V' $ is the only continuous viscosity solution of \eqref{eq:HJB-general}.
\end{theorem}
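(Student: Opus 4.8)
The plan is to follow the strategy of \citet{bib:davis-farid-99}, exploiting the fixed point relation $V'=\mathcal{T}'V'$ from Theorem~\ref{thm:fixed_point} to transfer results for the deterministic exit-time problem \eqref{eq:def-Tprime} to the full nonlocal equation \eqref{eq:HJB-general}. For the first claim I would argue as follows. Since $V'$ is continuous by Theorem~\ref{thm:fixed_point}, the running cost $\ell^{V'}$ is continuous on $\widetilde{\mathcal{Y}}\times[\numin,\numax]$ by Assumption~\ref{ass2} together with the continuity of $\overline{Q}'$ (cf.\ Lemma~\ref{lemma:continuity-of-QL}). Hence $\mathcal{T}'V'$ is the value function of a deterministic exit-time control problem with continuous data and Lipschitz vector field $g$ (Lemma~\ref{lemma:lipschitz}), so by the standard dynamic-programming argument for such problems \citep{bib:barles-94} it is a viscosity solution of \eqref{eq:dyn-prog-v-psi} with $\Psi=V'$, interpreted with the generalized boundary conditions of Definition~\ref{def:viscosity-1}. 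Because $V'=\mathcal{T}'V'$ and $F_{V'}$ coincides with $F_\Psi$ for $\Psi=V'$, this equation is exactly \eqref{eq:HJB-general}, and $V'$ is a continuous viscosity solution of it.

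For the comparison principle the key observation is that the nonlocal structure can be reduced to the \emph{frozen} problem in which $\Psi$ is held fixed. Suppose $v\ge0$ is a subsolution and $u\ge0$ a supersolution of \eqref{eq:HJB-general} with $v=u=h$ on $\Gamma$ and $v/w,u/w$ bounded. By Definition~\ref{def:viscosity-1}.2, $v$ is a subsolution of the frozen equation \eqref{eq:dyn-prog-v-psi} with $\Psi=v$, whereas $\mathcal{T}'v$ is the viscosity solution, hence a supersolution, of the \emph{same} frozen equation; moreover both equal $h$ on $\Gamma$ (since $\tau^{\varphi}=0$ there) and satisfy the growth bound. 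Invoking the comparison principle for the frozen problem (established as Step~A below) therefore yields $v\le\mathcal{T}'v$. Arguing symmetrically with $\Psi=u$ gives $\mathcal{T}'u\le u$.

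I would then combine these inequalities with the monotonicity and contraction of $\mathcal{T}'$. The operator $\mathcal{T}'$ is monotone because $\overline{Q}'\Psi$ integrates $\Psi$ against the nonnegative weights $(1+z)\pi^j\eta^\P(t,e_j,\nu;\ud z)$ (here $1+z>0$ as $\text{supp}(\eta)\subset(-1,\infty)$), so $\Psi_1\le\Psi_2$ forces $\ell^{\Psi_1}\le\ell^{\Psi_2}$ and hence $\mathcal{T}'\Psi_1\le\mathcal{T}'\Psi_2$. By the positive homogeneity in the bid price (Remark~\ref{rem:V-pos-hom}), the contraction property of Lemma~\ref{lemma:bounding-function} carries over to $\mathcal{T}'$ on the space of functions with $\Psi/(we^{\gamma(T-t)})$ bounded, whose unique fixed point is $V'$; the growth hypotheses on $v$ and $u$ place them in this space. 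From $v\le\mathcal{T}'v$ and monotonicity the sequence $\{(\mathcal{T}')^n v\}_n$ is nondecreasing and, by the contraction, converges to $V'$, so $v\le V'$. Symmetrically $\{(\mathcal{T}')^n u\}_n$ is nonincreasing with limit $V'$, so $u\ge V'$. Hence $v\le V'\le u$, which is the asserted comparison. Uniqueness is then immediate: any continuous viscosity solution of \eqref{eq:HJB-general} is simultaneously a sub- and a supersolution, so it is squeezed between $V'$ and $V'$.

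The hard part is Step~A, the comparison principle for the frozen first-order equation \eqref{eq:dyn-prog-v-psi} with $\Psi$ fixed, formulated with the relaxed boundary conditions of Definition~\ref{def:viscosity-1} on the non-active boundary $\partial\widetilde{\mathcal{Y}}\setminus\Gamma$. I would prove it by the doubling-of-variables technique, using the Lipschitz continuity of $g$ (Lemma~\ref{lemma:lipschitz}) and the continuity of $\lambda$ and $\ell^\Psi$. Two points require care. First, one must ensure the penalization maximum is not attained on the non-active boundary, where only the one-sided min/max inequalities hold; here the structure $g^1\equiv1$ (time runs forward, so the flow exits in time at most $T$) and $g^2=-\nu\le0$ (inventory is nonincreasing) guarantee that the active boundary is reached transversally and plays the role of a finite-horizon terminal condition, supplying the coercivity usually provided by strict discounting. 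Second, the first-order term $g(\widetilde y,\nu)'p$ must be controlled uniformly in $\nu\in[0,\numax]$, which follows from compactness of the control set and the uniform Lipschitz constant $K_g$. Once Step~A is in place, the reduction in the preceding paragraphs delivers the full nonlocal comparison principle and uniqueness.
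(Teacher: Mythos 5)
Your overall architecture is exactly the paper's: continuity plus the fixed-point relation $V'=\mathcal{T}'V'$, \citet[Theorem~5.2]{bib:barles-94} applied to the frozen exit-time problem to get the viscosity property, and for comparison the reduction ``subsolution $v\le \mathcal{T}'v$, supersolution $u\ge \mathcal{T}'u$'' followed by monotone iteration of $\mathcal{T}'$ and convergence of the iterates to $V'$ via the contraction of Lemma~\ref{lemma:bounding-function} (transferred to $\mathcal{T}'$ by homogeneity). All of that is correct and matches the paper's proof step for step. The single point of divergence is how the comparison principle for the \emph{frozen} equation \eqref{eq:dyn-prog-v-psi} is obtained, and that is where your proposal has a genuine gap.

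The paper does not prove that frozen comparison by doubling of variables; it invokes \citet[Theorem~5.7]{bib:barles-94}, which bounds every subsolution above by a value function $u^+$ and every supersolution below by a value function $u^-$ of relaxed exit-time control problems, and then observes that $u^+$ and $u^-$ coincide in this model, so both equal $\mathcal{T}'\Psi$. Your Step~A replaces this citation with a sketched doubling argument, and the sketch rests on a claim that fails here: you assert that $g^1\equiv 1$ and $g^2=-\nu\le 0$ make the active boundary be ``reached transversally.'' But the admissible rates include $\nu=0$ (the control set is $[\numin,\numax]$ with $\numin=0$), so trajectories can approach the lateral boundary $\Gamma_1=\{w=0\}$ with vanishing normal speed $-\nu$, i.e.\ tangentially; there is no uniform exit speed. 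This is precisely the structural difficulty the paper flags in Remark~\ref{rem:no-bound} as the reason the hypotheses of \citet{bib:davis-farid-99} (and of \citet[Theorem~44.11]{davis1993markov}) fail, and it is exactly where a PDE-side doubling proof with relaxed boundary conditions on $\partial\widetilde{\mathcal{Y}}$ needs a non-tangency condition you do not have. (Incidentally, your worry about coercivity is misplaced in the other direction: strict discounting \emph{is} available, since the zeroth-order coefficient is $\rho+\lambda(\widetilde y,\nu)$ and Assumption~\ref{ass1} gives $\lambda\ge M^{-1}\eta^{\Q}(\R)>0$; the issue is the boundary, not the interior.) The gap is repairable without new ideas — replace Step~A by the control-theoretic representation of \citet[Theorem~5.7]{bib:barles-94} and verify $u^+=u^-$, as the paper does — but as written, the one step you identify as ``the hard part'' is the one step whose proposed proof would not go through.
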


\begin{proof} First, by Theorem~\ref{thm:fixed_point}, $V'$ is continuous. Moreover, \citet[Theorem~5.2]{bib:barles-94} implies that $V'$ is a viscosity solution of \eqref{eq:dyn-prog-v-psi} with $\Psi = V^\prime$ and hence of equation \eqref{eq:HJB-general}.

Next we prove the  comparison principle.  In order to establish the inequality $v \le u$ we use an inductive argument based on  the monotonicity of $\mathcal{T}^\prime$ and on a comparison result for \eqref{eq:dyn-prog-v-psi}.
Let $u_0 := u$ and define  $u_1=\mathcal{T}u_0$. It follows from \citet[Theorem 5.2]{bib:barles-94}  that $u_1$ is a viscosity solution of \eqref{eq:dyn-prog-v-psi} with $\Psi = u_0$. Moreover,  $u_1(\widetilde{y})/w$ is bounded on $\widetilde{\mathcal{Y}}$ so that $u_1 = h$ on $\Gamma$. Since $u_0$ is a supersolution of \eqref{eq:HJB-general} it is also a supersolution of \eqref{eq:dyn-prog-v-psi} with $\Psi = u_0$. \citet[Theorem 5.7]{bib:barles-94} gives the inequality  $u_1 \le u_0$ on $\operatorname{int}\widetilde{ \mathcal{Y}}$, since   the functions $u^+$ and $u^-$ defined in that theorem coincide in our case.
Define now inductively $u_{n}=\mathcal T' u_{n-1}$, and suppose that $u_{n}\le u_{n-1}$. Then, using the monotonicity of $\mathcal{T}'$, we have
\[
u_{n+1}=\mathcal T' u_n \le\mathcal T' u_{n-1}=u_n.
\]
This proves that $u_{n +1}\le u_{n} $ for every $n\in \bN$. Moreover, as explained in Section~\ref{subsec:optimality-equation}, the sequence $\{u_n\}_{n \in \bN}$ converges to $V^\prime$ ,  so that $u_n \ge V^\prime$ for all $n$. In the same way  we can construct a sequence of functions  $\{v^n\}$ with $v_0 =v$ such that $v^n \uparrow V'$, and we conclude that  $v \le V^\prime \le u$. The remaining statements are clear.
\end{proof}
\begin{remark}
Note that the results in  \citet{bib:davis-farid-99} do not apply directly  to  our case since their assumptions regarding  the behaviour of the vector field $g$ on the lateral boundary are not satisfied in our model. Moreover, \citet{bib:davis-farid-99}
do not give  a comparison principle  for \eqref{eq:HJB-general}.
\end{remark}

Finally, we write the dynamic programming  equation \eqref{eq:HJB-general} explicitly. To this end, we use the fact that $\lambda( \widetilde{y}, \nu) = \sum_{k=1}^K \pi^k \eta^\P (t,e_k,\nu,\R)$, the definition of $g$,
  and the definition of $l^{V'}$ in \eqref{eq:def-fpsi} to obtain
\begin{align}\label{eq:HJB_2}
0 &=  \frac{\partial V'}{\partial t}(t,w,\pi) + \sup \big \{ H (\nu, t,w,\pi, V', \nabla V') \colon \nu \in [\numin, \numax]\big \}\,, \text{ with }
\end{align}
\begin{align}
H (\nu, t,w,\pi, V', \nabla V') &= -\rho V' + \nu (1-f(\nu)) - \nu \frac{\partial  V'}{\partial w}(t,w,\pi)\nonumber  \\
&+\sum_{k,j=1}^K\frac{\partial  V'}{\partial \pi^k}(t,w,\pi)\pi^j \Big( q^{jk}- \pi^k\int_{\mathbb{R}}  u^k(t,\nu,\pi, z) \eta^{\P}(t,e_j,\nu,\ud z)\Big)\nonumber\\
&+\sum_{j=1}^K \pi^j\int_\R  \Delta V'(t,w,\pi,z) \eta^{\P}(t,e_j,\nu,\ud z) , \label{eq:HJB-2-2}
\end{align}
and  $\Delta V'(t,w,\pi,z):=(1+z) V'\left(t,w, (\pi^{i}(1+u^{i}(t,\nu,\pi,z)))_{i=1,...,K}\right)- V'(t,w,\pi).$
This equation coincides with the standard HJB equation associated with  the controlled  Markov process $(W,\pi)$. The advantage of using viscosity solution theory  is that we are able to give a   mathematical meaning to this equation even if $V'$ is merely continuous. This is relevant in our context. Indeed,  in the next section we present a simple example where $V'$ is not $\mathcal{C}^1$.

\subsection{A counterexample.}\label{subsec:counterexample}
We now give an example within a setup where the value function is a viscosity solution of the dynamic programming equation but not a classical solution.
Precisely, we work in the context of Example \ref{exm1}  with linear permanent price impact and deterministic compensator $\eta^\P$.  For simplicity  we let  $\rho=0$, $s=1$, $h(w)\equiv 0$, $f(\nu) \equiv 0$ (zero terminal liquidation value and no temporary price impact). In this case an exogenous upper bound $\numax$ on the liquidation rate need to be imposed, in order to ensure that the set of controls is compact and that a viscosity solution exists (see Remark \ref{rem:supersolution}). Moreover, we assume that  $c^{\text{up}}<c^{\text{down}}$.
The  function $\overline{\eta}^\P$ from  \eqref{eq:drift} is thus given by
$
\overline{\eta}^\P(\nu):=\theta(c^{\text{up}}-c^{\text{down}}(1+a\nu))
$
and $\overline{\eta}^\P(\nu) < 0$ for $\nu >0$. It follows that $S^{\Nu}$ is a supermartingale for any admissible  $\Nu$, and we conjecture that it is optimal to sell as fast as possible to reduce the loss due to the falling bid price. Denote by $\tau(w):= w/\numax$ the minimal time necessary to liquidate the inventory $w$.
The optimal strategy is thus given by  $\nu^*_t=\nu^{{\max}}\I_{\left[0, \tau(w_0)\wedge T\right]}(t)$. Moreover, for $t< \tau(w_0) \wedge T$ one has   $\overline{\eta}^\P(\nu_t)=\overline{\eta}^\P(\nu^{\max})$ and $\bE (S_t^*) = \exp\big(t  \ \overline{\eta}^\P(\numax)\big)$.
Hence we get that
$$
J(\Nu^*) = \int_0^{\tau(w_0) \wedge T} \hspace{-0.1cm}\nu^{\max} \exp\big( u \, \overline{\eta}^\P(\numax)\big)\ud u\,.
$$
Solving this integral we get  the following candidate for the value function
\begin{equation}\label{eq:V'}
V'(t, w) : =\frac{\numax}{\overline{\eta}^\P(\numax)} \Big \{ \exp\big( \overline{\eta}^\P(\numax) (\tau(w) \wedge (T-t))\big) - 1 \Big \}, \quad (t,w)\in [0,T]\times [0, \numax].
\end{equation}
In order to  verify that $V'$  is in fact the value function   we show that $V'$ is a viscosity solution of the  HJB equation \ref{eq:HJB_2}. In the current setting this equation becomes
\begin{equation}\label{eq:hjbV'}
-\frac{\partial V'}{\partial t}-\sup \Big\{ \nu-\nu \frac{\partial V'}{\partial w}+\overline{\eta}^\P(\nu)V' \colon {\nu\in [0, \nu^{\max}]} \Big\}=0.
\end{equation}
First note that $V'$ satisfies the correct terminal and boundary conditions. Define the set
$$ G:= \{ (t,w) \in [0,T] \times [0,w_0] \colon \tau(w) =  (T-t) \} \,.$$
The function $V'$ is $\mathcal{C}^1$ on $ [0,T] \times [0,w_0] \setminus G$, and it is a classical solution of  \eqref{eq:hjbV'} on this set.
However $V'$ is  not differentiable on $G$  and hence not a classical solution everywhere.

Fix some point $(\overline t, \overline w) \in G$.  In order to show that $V'$ is a viscosity solution of the \eqref{eq:HJB_2} we need to verify the subsolution property in this point. (For the supersolution property there is nothing to show as there is no $\mathcal{C}^1$-function $\phi$ such that $V'-\phi$ has a local minimum in $(\overline{t},\overline{w})$.)
Consider $\phi\in \mathcal{C}^1$ such that $V'-\phi$ has a local maximum in $(\overline{t},\overline{w})$.
By considering the left and right derivatives of the functions $t \mapsto (V'-\phi) (t,\overline w)$ respectively
$w \mapsto (V'-\phi) (\overline t, w)$ we get the following inequalities for the partial derivatives of $\phi$
$$ -\nu^{\max}e^{\overline{\eta}^\P(\nu^{\max})(T-\overline{t})}\le \frac{\partial \phi}{\partial t} (\overline t, \overline w) \le 0 \; \text{ and }\;
  0 \le \frac{\partial \phi}{\partial w} (\overline t, \overline w)  \le \exp\big ( \overline{\eta}^\P(\numax) \tau(\overline{w}) \big) \,.
$$
Moreover, it holds  on $G$  that $V'(t,w )=\frac{\nu^{\max}}{\overline{\eta}^\P(\numax)} \big\{ \exp \big( \overline{\eta}^\P(\numax)(T-t)\big) -1\big\}$.  As $w = \numax (T-t)$ on  $G$, differentiating with respect to $t$ gives that
\begin{equation} \label{eq:dphi}
\Big( \frac{\partial \phi}{\partial t}-\nu^{\max}\frac{\partial \phi}{\partial w} \Big )(\overline{t}, \overline{w}) =-\numax \exp \big (\overline{\eta}^\P(\numax)(T-\overline{t})\big ) .
\end{equation}
Applying the inequalities for $\frac{\partial \phi}{\partial w}$ we get  that
\begin{align*}
\sup \Big\{    \nu-\nu\frac{\partial \phi}{\partial w}+\overline{\eta}^\P(\nu)V' \colon {\nu\in [0,\nu^{\max}]}\Big\}
&=\nu^{\max}\Big(-\frac{\partial \phi}{\partial w}+ e^{\overline{\eta}^\P(\nu^{\max})(T-t)}\Big).
\end{align*}
Using \eqref{eq:dphi}	this gives
$-\frac{\partial \phi}{\partial t}-\sup\big \{\nu-\nu\frac{\partial \phi}{\partial w}+\overline{\eta}^\P(\nu)V' \colon \nu\in [0,\nu^{\max}]\big\} =0$
and hence the subsolution property.

\begin{remark} \label{rem:supersolution}It is easily seen that  for $\nu^{\max}\to \infty$ the value function $V'$ from \eqref{eq:V'} converges to $V^{\prime, \infty} (t,w) :=- \frac{1} {\theta\cdown a} \big ( \exp(-w \theta a \cdown)-1 \big)$ and that $V^{\prime, \infty} $ is a strict (classical) supersolution of equation \eqref{eq:hjbV'}  since
\[
\nu-\nu \frac{\partial V'}{\partial w}+\overline{\eta}^\P(\nu)V'=\left(1-e^{-w\theta a \cdown}\right)\left(\frac{\cupp-\cdown}{a\cdown}\right)<0.
\]
Hence we get that  $V^{\prime, \infty} $ is larger or equal than the value function  of the optimal liquidation problem for $\numax = \infty$, by the supersolution property and that it is also smaller or equal than that,  since it is the limit of value functions with bounded maximum selling rate. This allows to conclude that $V^{\prime, \infty} $ is indeed the value function  of the optimal liquidation problem for $\numax = \infty$.
\end{remark}

\section{Examples and numerical results}\label{sec:numerics}

In this section we study  the optimal liquidation rate and the expected liquidation profit in our model. For concreteness  we work in the framework of  Example~\ref{exm2}, that is the example where $\eta^\P$ depends on the liquidation strategy as well as on a two-state Markov chain. We focus on two different research questions:    i.) the influence of model  parameters on the form of the optimal liquidation rate; ii.) the additional liquidation profit from the use of stochastic filtering and a comparison to classical approaches. Moreover, we report the results of a small calibration study.

\paragraph{Numerical method.}
Since equation in \eqref{eq:HJB-2-2} in the setting of Example~\ref{exm2}, cannot be solved analytically, we resort to numerical methods.
We apply an explicit finite difference scheme to solve the HJB equation and to compute the corresponding liquidation strategy.
First, we turn the HJB equation into an initial value problem via time reversion.
Given a time discretization $0=t_0<\dots<t_k<\dots<t_m=T$ we set $V'_{t_0}=h$, and given $V'_{t_k}$, we
approximate the liquidation strategy as follows:
\begin{align}\label{eq:optprob}
 \nu_{t_k}^\ast (w,\pi):= \operatorname{arg max}_{\nu \in [\numin, \numax]} H (\nu, t_k,w,\pi, V'_{t_k}, \nabla^{\text{disc}} V'_{t_k})   \,,
\end{align}
where $\nabla^{\text{disc}}$ is the gradient operator with derivatives replaced by suitable finite differences. In the sequel  we refer to $\nu_{t_k}^\ast$ from \eqref{eq:optprob} as the candidate optimal liquidation rate.
With this we obtain the next time iterate of the value function,
\begin{align}\label{eq:disc}
 V'_{t_{k+1}}=V'_{t_k}+(t_{k+1}-t_k)\,H (\nu_{t_k}^\ast, t_k,w,\pi, V'_{t_k}, \nabla^{\text{disc}} V'_{t_k})\,.
\end{align}
Since the comparison principle holds, as shown in Theorem \ref{thm:viscosity}, and  the value function is the unique viscosity solution of our HJB equation, we get convergence of the proposed procedure to the value function by similar arguments as in \citet{barles1991,dang2014}; details are presented in Appendix \ref{app:num}. The  motivation for using the candidate optimal strategy is as follows: the value function obtained by the  finite difference approximation \eqref{eq:disc} can be viewed as value function in an approximating control problem where the state process follows a discrete time Markov chain, and the candidate optimal strategy \eqref{eq:optprob} is the optimal strategy in the approximating problem, see for instance Chapter~IX of \citet{bib:fleming-soner-06}. The convergence result for the finite difference approximation of the value function suggests that the candidate optimal strategy is nearly optimal in the original problem. A formal analysis of the optimality properties of the candidate optimal strategy is however beyond the scope of the present paper.

\subsection{Candidate optimal liquidation rate.} \label{subsec:opt-liq-rate}

We start by computing the candidate optimal liquidation rate $\nu_{t_k}^\ast$ for Example~\ref{exm2}, assuming  that the temporary price impact is of the form $f(\nu) = \fconst \nu^\fexp$ for $\fexp >0$.
Since $\pi_t^1+\pi^2_t=1$ for all $t\in[0,T]$, we can eliminate the process $\pi^2$  from the set of state variables. In the sequel we denote by  $\pi_t$ the conditional probability of being in the good state $e_1$ at time $t$ and by $V'(t,w,\pi)$ the value function evaluated at the point $(t,w,(\pi,1-\pi))$.
To compute $\nu_{t_k}^\ast$  we substitute the functions $u^i$ given in \eqref{eq:u1-example} and the dynamics of the process  $(\pi_t)_{0\leq t \leq T} $ from  \eqref{eq:dynamics-pi-Ex2} into the general HJB equation \eqref{eq:HJB_2}.
Denote by  $$\pipost_t = \frac{\pi_t \cdown_1}{\pi_t c^\text{down}_1+(1-\pi_t) \cdown_2}, \quad 0\leq t \leq T,$$  the updated (posterior) probability of state $e_1$ given that a downward jump occurs at $t$.
Moreover, denote the discretized partial derivatives  of $V'$ appearing in \eqref{eq:optprob} by $ \frac{\delta V'}{\delta w}$ and $ \frac{\delta V'}{\delta \pi}$.
Substitution into \eqref{eq:HJB-2-2} leads to
\begin{align}\label{eq:nustar}
\nu^\ast_{t_k} &= \operatorname{arg max}_{\nu \in [\numin, \numax]} \big \{ \nu(1-\fconst \nu^\fexp)  - \nu C(t_k,w,\pi) \big \}, \;\text{ with }  \\
C (t_k,w,\pi) &=   \frac{\delta V'}{\delta w} (t_k,w,\pi)  + \frac{\delta V'}{\delta \pi}(t_k,w,\pi)  \pi(1-\pi) a(c^\text{down}_1 - c^\text{down}_2) \\
 & -  \big \{ (1-\theta) V'(t_k,w,\pipost)-V'(t_k,w,\pi)\big \} (\pi c^\text{down}_1+(1-\pi)c^\text{down}_2)a\,.\label{eq:def-cost}
\end{align}
Maximizing \eqref{eq:nustar} with respect to  $\nu$, we  get that
$\nu^\ast_{t_k} = 0$ if $C (t_k,w,\pi)  > 1$;  for $C (t_k,w,\pi)  \le 1$ one has   $\nu^\ast_{t_k}  = \widetilde{\nu}^\ast \wedge \numax$, where  $\widetilde{\nu}^\ast$ solves the equation
\begin{equation} \label{eq:nu-int-sol}
1- \fconst(\fexp +1) \nu^\fexp = C(t_k,w,\pi)\,.
\end{equation}
In our numerical examples we choose $\numax$ large enough so that the constraint $\nu_t \le \numax$ is never binding.
The characterization \eqref{eq:nu-int-sol} of $\nu^\ast_{t_k}$ is very intuitive:~$1-\fconst(\fexp +1) \nu^\fexp$ gives the marginal liquidation  benefit due to an increase in  $\nu$   and  $C(t_k,w,\pi)$ can be viewed as  marginal cost of an increase in $\nu$ (see below).  For  $C (t_k,w,\pi)  \le 1$,  $\widetilde{\nu}^\ast$ is found by equating marginal benefit and marginal cost; for $C (t_k,w,\pi)  > 1$ the marginal benefit is  smaller than the marginal cost  for all $\nu \ge 0$ and $\nu^*_{t_k} = 0$.

The candidate optimal liquidation rate  $\nu^*_{t_k}$ is thus  determined by the marginal cost  $C(t_k,w, \pi)$, and we now   give an economic interpretation of the terms in \eqref{eq:def-cost}.
First, $\frac{\delta V'}{\delta w} $ is a  marginal opportunity cost, since  selling inventory  reduces the amount that can be liquidated in the future. Moreover,  it holds that
\begin{align}\nonumber
- \big ( (1-\theta) V'(t_k,w,\pipost)-V'(t_k,w,\pi)\big ) &= \theta V'(t_k,w,\pipost) - \big( V'(t_k,w, \pipost) - V'(t_k,w, \pi)\big)\,.\label{eq:estimate-liq-cost}
\end{align}
The term  $\theta V'(t_k,w,\pipost)$ represents the reduction in the expected liquidation value due to a downward jump in the return process, and
$(\pi c^\text{down}_1+(1-\pi)c^\text{down}_2)a$ is the marginal increase in the intensity of a downward jump, so that the term
\begin{equation} \label{eq:illiquidity-cost}
\theta V(t_k,w,\pipost)  (\pi c^\text{down}_1+(1-\pi)c^\text{down}_2)a
\end{equation}
measures the marginal  cost due to   permanent price impact; in the sequel we refer to \eqref{eq:illiquidity-cost} as \emph{illiquidity cost}.
Finally, note that
$\pipost - \pi = \frac{ \pi(1-\pi) (c^\text{down}_1 - c^\text{down}_2)}{\pi c^\text{down}_1+(1-\pi)c^\text{down}_2} \,.$
Hence the remaining terms in \eqref{eq:def-cost} are equal to
\begin{equation} \label{eq:uncertainty-cost}
- \Big ( V'(t_k,w, \pipost) - V'(t_k,w, \pi) -  \frac{\delta V'}{\delta \pi}(t_k,w,\pi)(\pipost - \pi ) \Big ) a(\pi \cdown_1+(1-\pi)\cdown_2)\,.
\end{equation}
Simulations indicate that $V'$ is convex in $\pi$; this is quite natural as it implies that uncertainty about the true state reduces the optimal liquidation value. It follows that  \eqref{eq:uncertainty-cost} is negative which  leads to an increase in  the candidate optimal liquidation rate \eqref{eq:nu-int-sol}.  Since $\pipost - \pi $ is largest for $\pi \approx 0.5$, this effect is most pronounced if the investor is uncertain about the true state.   Hence  \eqref{eq:uncertainty-cost} can be viewed as an uncertainty correction that makes the trader sell faster if he is uncertain about the true state.

\paragraph{Numerical analysis and  varying price impact parameters.} To gain further insight on the structure of the candidate  optimal liquidation rate  we resort to numerical experiments.  We work with the parameter set given in Table \ref{tab:params}.
Moreover, we set the liquidation value $h(w)\equiv 0$; this amounts to a strong penalization of any remaining inventory at $T$.
Without loss of generality we set $s=1$, so that the expected liquidation profit is equal to $V'$.

\begin{table}[htbp]
{\small
\begin{center}
\renewcommand\arraystretch{1.5}
 \begin{tabular}{c c c c c c c c c c c}
 \hline
 $w_0$ &  $T$ & $\rho$ & $\theta$ &$\cupp_1, \cdown_2$& $\cdown_1, \cupp_2$ & $a$ &  $\fexp$ & $q^{12}$ & $q^{21}$\\ \hline
  6000  & 2 days & $0{.}00005$ & $0{.}001$ &1000&900& $7\times 10^{-6}$ &  $0.6$ & $4$ & $4$\\
\hline
\end{tabular}
\caption{Parameter values used in numerical experiments.\label{tab:params}}
\label{parameter_values}
\end{center}}
\end{table}

First, we discuss the form of the  candidate optimal liquidation rate for varying size of the temporary price impact, that is for varying $\fconst$, keeping  the permanent price impact parameter $a$ constant at the moderate value $a= 7\times 10^{-6}$.
Figure \ref{fig:pi-cf} shows the liquidation rate at $t=0$ for intermediate and large temporary price impact as a function of $w$ and $\pi$.  The figure is a contour plot:   white areas correspond to $\nu_0 =0$, grey areas correspond to selling at a moderate speed, see also the color bars below the graphs.
Comparing the graphs we see that for higher temporary price  impact (high $\fconst$) the trader tends to trade more evenly over the state space to keep the cost due to the temporary price impact small. The candidate optimal strategy is then characterized by two regions: a \emph{sell} region, where the trader sells at some (varying) speed, and a \emph{wait} region, where she does not sell at all. This reaction of $\nu^*_{t_k}$ to variations in $\fconst$ can also be derived theoretically by inspection of \eqref{eq:nu-int-sol}.

\begin{figure}[ht]
\begin{center}
\includegraphics[width=0.45\textwidth]{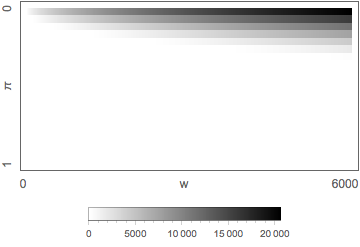}
\includegraphics[width=0.45\textwidth]{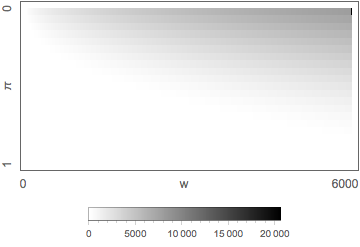}
\caption{Contour plot of the liquidation policy as a function of $w$ (abscissa) and $\pi$ (ordinate) for $\fconst=10^{-5}$ (left), and $\fconst=5\times 10^{-5}$ (right) and $t=0$ for Example~\ref{exm2}.}\label{fig:pi-cf}
\end{center}
\end{figure}

Now we study the impact of the permanent price impact $a$ on the form of the  candidate optimal liquidation rate. Figure~\ref{fig:pi-cf} shows that for  moderate $a$  the liquidation rate is decreasing in $\pi$  and increasing in the inventory level. The situation changes when  the permanent price impact becomes large. Figure~\ref{fig:pi-a} depicts the sell and wait regions under partial information in dependence of the inventory level $w$ and the filter probability $\pi$ for $a=7 \times 10^{-5}$.
For this value of  $a$ the sell region forms a band from low values of $w$ and $\pi$ to high values of $w$ and $\pi$. In particular, for large $w$ and small $\pi$  there is a {\em gambling region} where the trader does not sell, even if a small value of $\pi$ means that the bid price is trending downward (recall that $\pi$ gives the probability that $Y$ is in the good state).

\begin{figure}[ht]
\begin{center}
\includegraphics[width=0.45\textwidth]{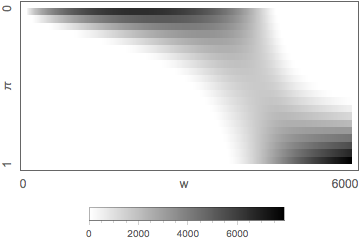}
\caption{Contour plot of the liquidation policy as a function of $w$ (abscissa) and $\pi$ (ordinate) for $\fconst= 10^{-5}$  and for $a= 7 \times 10^{-5}$ and $t=0$ for Example~\ref{exm2}.}\label{fig:pi-a}
\end{center}
\end{figure}

The observed form of $\nu^*_{t_k}$ has the following explanation.  Our numerical experiments show that  for the chosen parameter values  $V'$  is almost  linear in $\pi$, so that the  uncertainty correction \eqref{eq:uncertainty-cost} is negligible. Hence  the liquidation rate $\nu^\ast_{t_k}$ is determined  by the  interplay of the
opportunity cost  $\frac{\delta V'}{\delta w} (t_k,w,\pi)$ and of the illiquidity cost \eqref{eq:illiquidity-cost}.
We found that the opportunity cost is  increasing in $\pi$. This is very intuitive:  in the good state  the investor expects an increase in the expected bid price which makes additional inventory more valuable. Moreover, we found that $\frac{\delta V'}{\delta w} (t_k,w,\pi)$ is decreasing  in $w$, that is the optimal liquidation problem has decreasing returns to scale.
The illiquidity cost  has  the opposite monotonicity behaviour:  it is increasing in $w$ (as it is proportional to $V'(t_k,w,\pipost) $)   and,    for the given parameters,  decreasing in $\pi$.
Now for small values of $a$ the opportunity cost dominates the illiquidity cost  for all $(w,\pi)$ and  $C(t_k,w,\pi)$ is increasing in $\pi$ and decreasing in $w$.   By \eqref{eq:nu-int-sol}, the liquidation rate is thus decreasing in $\pi$ and increasing in $w$, which is in line with the monotonicity behaviour observed  in Figure~\ref{fig:pi-cf}.
If $a$ is large the situation is more involved. The opportunity cost dominates  for  small  $w$, leading to a liquidation rate that is decreasing in $\pi$.  For large $w$ the illiquidity cost dominates,   $C$ is decreasing in $\pi$, and the  candidate optimal liquidation rate is increasing in $\pi$. For $w$ large enough this effect is strong enough to  generate the unexpected gambling region observed in Figure~\ref{fig:pi-a}.

\paragraph{Impact of other model components.} In reality the support of $\eta^\P$   is larger than  $\{- \theta, \theta\}$ as the price may jump by more than one tick. Hence it is important to test the sensitivity of $\nu_{t_k}^\ast$ with respect to the precise form of the  support. To this end,  we computed the candidate optimal strategy for a different parameter set $\tilde \theta, {\tilde c}^{\text{up}}_i , {\tilde c}^{\text{down}}_i$,  $i=1,2$    with $\tilde \theta = 2 \theta $ and ${\tilde c}^{\text{up}}_i = 0.5 \cupp_i$,   ${\tilde c}^{\text{down}}_i = 0.5 \cdown_i$, $i=1,2$.  Note that for the new parameters the  support of $ \eta^\P $ is different but the expected return of the bid price in each of the two states is the same. We found that the liquidation value and the candidate optimal strategy were nearly identical to the original case. This shows that our approach is quite robust with respect to  the exact form of the support of $\eta^\P$ and  justifies the use of a simple model with only two possible values for the jump size  of $R$.

\subsection{Gain from filtering and comparison to classical approaches.} \label{subsec:gain-from-filtering}

In this section we compare the  expected  proceeds  of using the optimal liquidation rate  to the expected proceeds of a trader who mistakenly uses a model with deterministic $\eta^\P$ as in Example~\ref{exm1}. We use the following parameters for the deterministic model:
$c^{\text{up}} = 0.5 c_1^{\text{up}} + 0.5 c_2^{\text{up}}$, $c^{\text{down}} = 0.5 c_1^{\text{down}} + 0.5 c_2^{\text{down}}$, that is the  trader ignores regime switching  but works with the stationary distribution of the Markov chain  throughout, and we set $\fconst=5\times 10^{-5}$ (high temporary price impact). To compute the resulting liquidation rate $\nu^{\ast,\text{det}}_{t_k}$, we consider  the value function $\Vdet$ for  Example~\ref{exm1}. $\Vdet$ is a function  of $t$ and $w$  and it is the unique viscosity solution of the HJB equation
\begin{equation}\label{eq:HJB-for-ex1}
\frac{\partial \Vdet}{\partial t} - \rho \Vdet   + \sup_{\nu \in [0,\numax]}\Big  \{ \nu(1-\fconst \nu^\fexp) - \nu \frac{\partial \Vdet}{\partial w} - \bar\eta^\P (\nu)  \Vdet \Big  \} =0,
\end{equation}
with  $\bar \eta^\P (\nu) = \theta \cdown a $.    Then   $\nu^{\ast,\text{det}}_{t_k}$ is the maximizer in \eqref{eq:HJB-for-ex1} (with partial derivatives replaced by finite differences) and depends only on time and  inventory level.
In our numerical experiments  the expected gain from the use of filtering was equal to
$7.56$\% of the original $w_0=6000$.
This shows that the additional complexity of using a filtering model may be worthwhile.

\begin{remark}[Comparison to \protect{\citet{almgren2001optimal}}] It is interesting that the optimal liquidation rate $\nu^{\ast,\text{det}}_{t_k}$ is identical to the optimal rate in a geometric version of  the well-known model of \citet{almgren2001optimal}, referred to as \emph{geometric AC-model} in the sequel, see, e.g. \citet{gatheral2011optimal} and references therein. In particular,  the performance comparison applies also to the case where the investor uses this classical model.  In the geometric AC-model it is assumed that  that the bid  price has dynamics
\begin{equation} \label{eq:geometric-AC}
dS_t^{\Nu} = \bar \eta^\P (\nu_t) S_t^{\Nu} dt + \sigma S_t^{\Nu} d B_t\,,
\end{equation}
for a Brownian motion $B$. By standard arguments  the HJB equation for the value function $V^\text{AC} $ of the optimal liquidation problem in the geometric AC-model is
\begin{equation} \label{eq:HJB-AC}
\frac{\partial V^\text{AC}}{\partial t}  - \rho V^\text{AC} + \sup_{\nu \in [0,\numax]}\Big  \{ s\nu(1-\fconst \nu^\fexp) - \nu \frac{\partial V^\text{AC}}{\partial w}  - \bar\eta^\P(\nu)  s \frac{\partial V^\text{AC}}{\partial s} + \frac{1}{2} \sigma^2 s^2 \frac{\partial^2  V^\text{AC}}{\partial s^2}  \Big  \} =0\,.
\end{equation}
Moreover, since $V^\text{AC}$ is homogeneous in $s$, $V^\text{AC}(t,s,w) = s V^{\prime, \text{AC}}(t,w)$.  It follows that $\frac{\partial^2  V^\text{AC}}{\partial s^2} =0$, and the HJB equation for $V^{\prime, \text{AC}}$ reduces to~\eqref{eq:HJB-for-ex1}.   Hence the optimal liquidation rate in the geometric AC model and in the jump-model with deterministic compensator coincide.
Note that the equivalence between the jump model and the  geometric AC-model \eqref{eq:geometric-AC} holds only for the case where the compensator is deterministic: a model of the form  \eqref{eq:geometric-AC} with drift driven by an unobservable Markov chain would lead to a diffusion equation for the filter and hence to a control problem for diffusion processes.
\end{remark}

\subsection{Model calibration.} \label{subsec:calibration}
Finally we report the results of a small calibration study. We used a robust version of the EM algorithm to estimate the parameters  of  the bid price dynamics for the model specification from Example~\ref{exm2}; see \citet{bib:damian-eksi-frey-17} for details on the methodology. First, in order to test the performance of the algorithm we ran a study with simulated data for  two different parameter sets. In set~1 we use the parameters from Table~\ref{tab:params}; in set~2 we work with  $\cupp_1 =\cupp_2=\cdown_1=\cdown_2 =1000$, that is we consider a situation without Markov switching in the  true data-generating process. However, the  EM algorithm  allows for different parameters in the two states, so that parameter set~2  is a  test, if the EM  methodology points out  spurious regime changes and trading opportunities which are not really in the data. The outcome of this exercise is presented in Figure~\ref{fig:filters}, where we plot the hidden trajectory of $Y$ together with the filter estimate $\widehat{Y}$ generated from the simulated data using  the estimated model parameters.  We see that in the left plot the filter nicely picks up the regime change, in the right plot the estimate $\widehat{Y}_t$  is close to $1.5$ throughout, that is the estimated model correctly indicates  that there is no Markov switching in the data.
Finally we applied the algorithm  to bid price data from the share price of Google, sampled at a  frequency of one second. The EM  estimates are $\widehat{c}^\text{up}_1 = 2128$, $\widehat{c}^\text{up}_2 = 1751$, $\widehat{c}^\text{down}_1 = 1769$,  $\widehat{c}^\text{down}_2 = 1888$, which shows the same qualitative behaviour as the values used in our simulation study. A trajectory of the ensuing filter is given in Fig~\ref{fig:filters-google}.

One would need an extensive empirical study to confirm and refine these  results, but this is beyond the scope of the present paper.

\begin{figure}[ht]
\begin{center}
\includegraphics[scale=0.22]{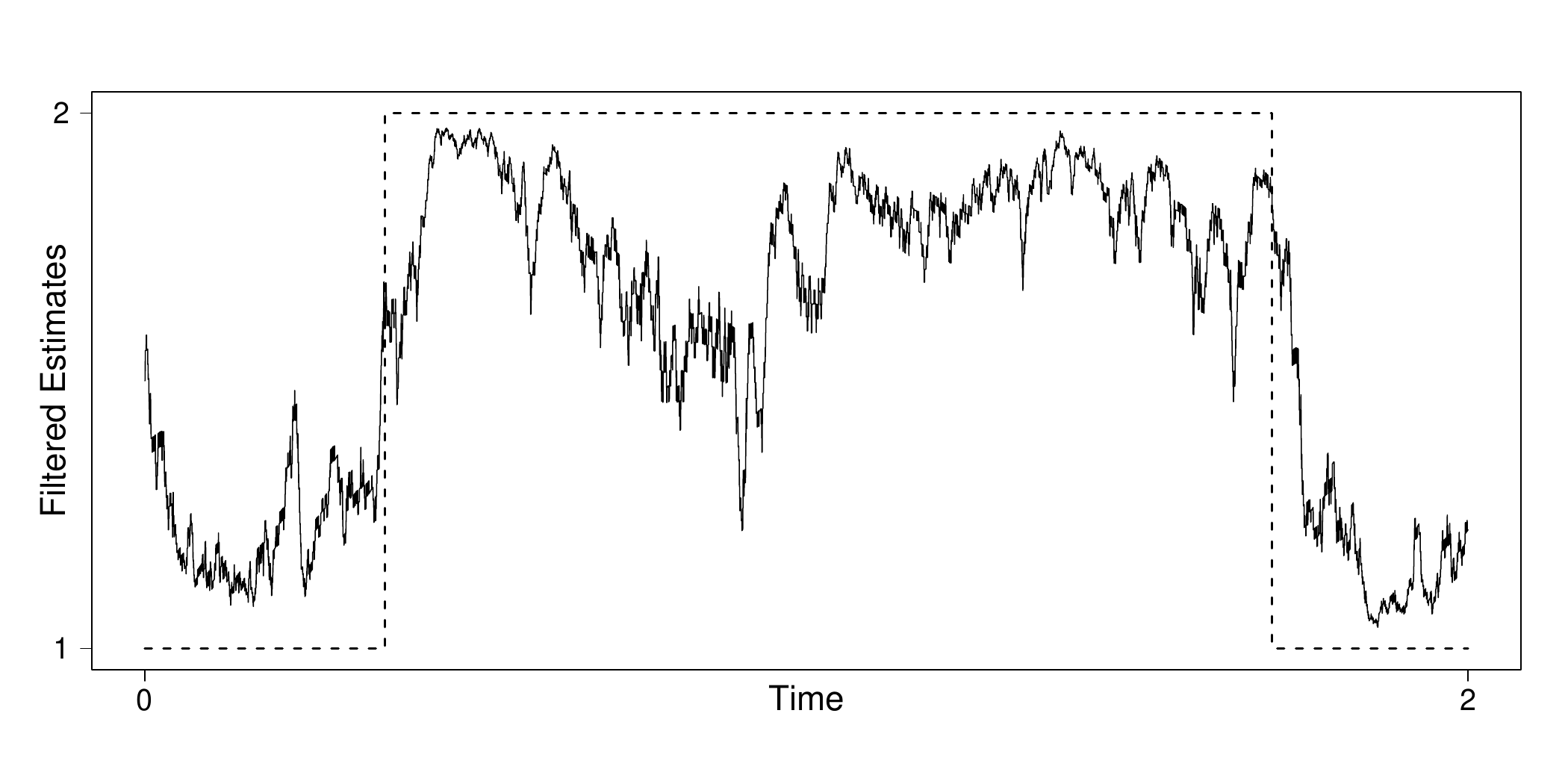}
\includegraphics[scale=0.22]{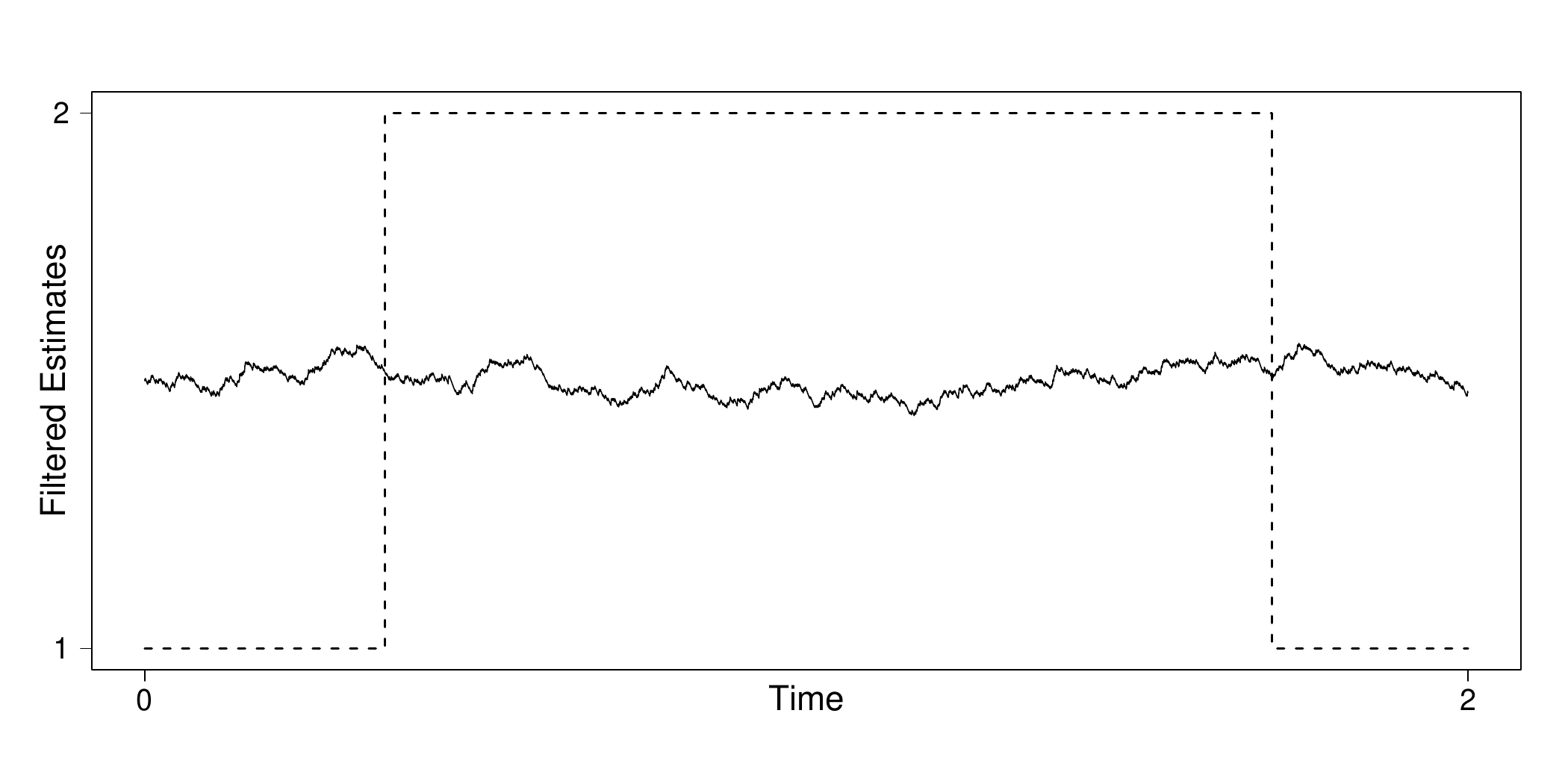}
\caption{A trajectory of the Markov chain $Y$ (dashed)  and of the corresponding filter $\widehat{Y}$ (straight line)  computed using the parameter estimates from the EM  algorithm as input. Left plot: results for  parameter set~1 (with Markov switching); right plot: results for parameter set~2 (no Markov switching) In the graphs state $e_1$ ($e_2$)  is represented by the value 1  (the value 2), and $\widehat{Y}_t = \pi_t 1 + (1-\pi_t)2$.
The estimated parameters for parameter set ~1 are as follows: $\widehat{c}^\text{up}_1 = 993$;  $\widehat{c}^\text{up}_2 = 875$; $\widehat{c}^\text{down}_1 = 842$; $\widehat{c}^\text{down}_2 = 960$. For parameter set~2 we obtained $\widehat{c}^\text{up}_1 = 940$;  $\widehat{c}^\text{up}_2 = 941$; $\widehat{c}^\text{down}_1 = 945$; $\widehat{c}^\text{down}_2 = 957$. \label{fig:filters}
 }
 \end{center}
 \end{figure}

 \begin{figure}[h]
 \begin{center}
\includegraphics[scale=0.22]{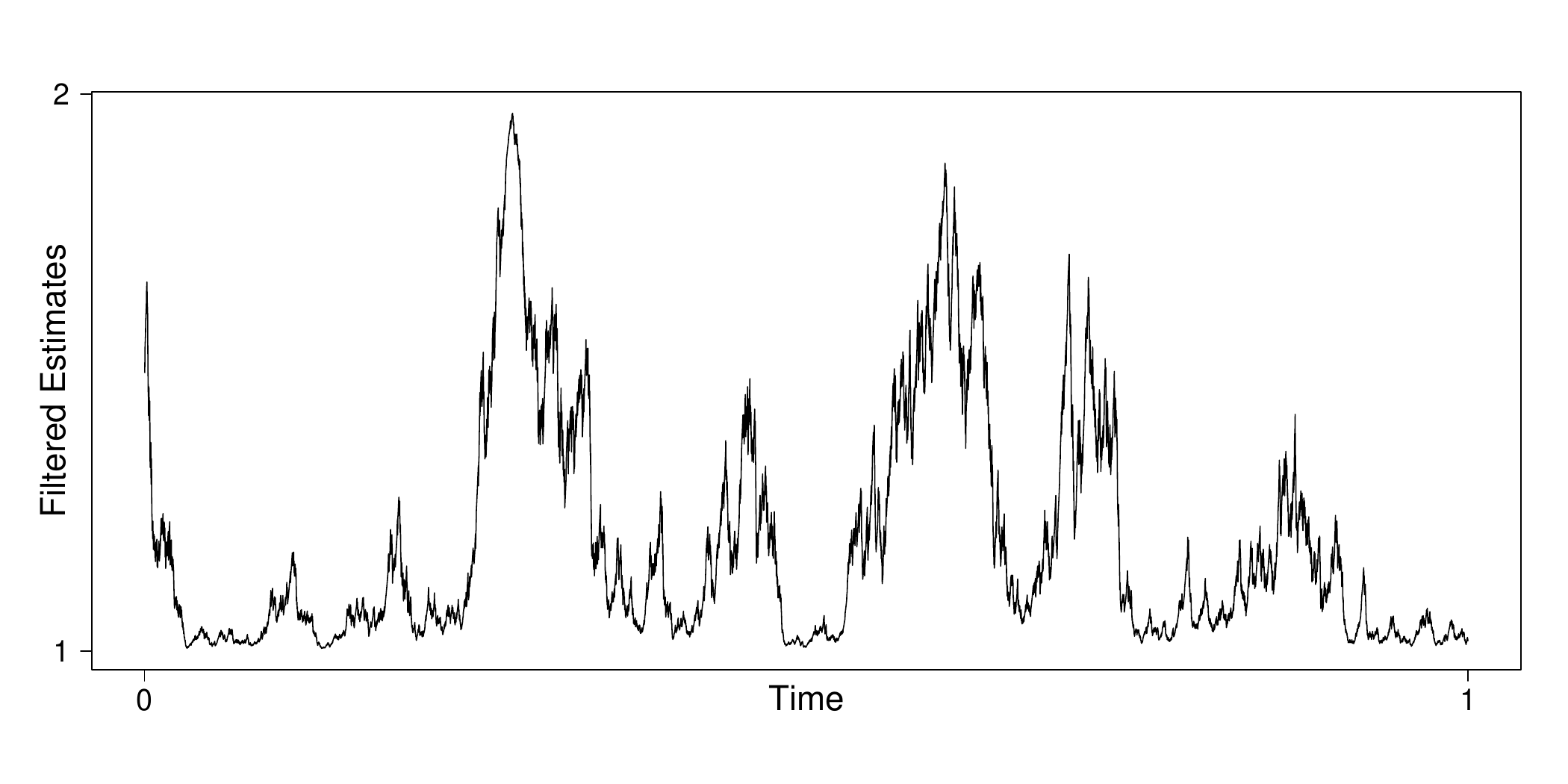}
\caption{Trajectory  of $\widehat{Y}$  computed from  the Google share price on  2012-06-21,  sampled at a frequency of one second. (Data are from the LOBSTER database, see {\tt https://lobsterdata.com}) }
 \label{fig:filters-google}
\end{center}
\end{figure}

\section*{Acknowledgements}
The authors are grateful for useful suggestions from several anonymous referees and for the excellent research assistance by Camilla Damian.
Support by the Vienna Science and Technology Fund (WWTF) through project MA14-031 is gratefully acknowledged.
The work of K.~Colaneri was partially supported by INdAM GNAMPA through projects UFMBAZ-2017/0000327 and UFMBAZ-2018/000349.
A part of this article was written while K.~Colaneri was affiliated with the Department of Economics, University of Perugia, Via A. Pascoli 20, 06123 Perugia, Italy and with the School of Mathematics, University of Leeds, LS2 9JT, Leeds, UK.
M.~Sz\"olgyenyi is supported by the AXA Research Fund grant ``Numerical Methods for Stochastic Differential Equations with Irregular Coefficients with Applications in Risk Theory and Mathematical Finance".
A part of this article was written while M.~Sz\"olgyenyi was affiliated with the Seminar for Applied Mathematics and the RiskLab Switzerland, ETH Zurich, R\"amistrasse 101, 8092 Zurich, Switzerland and with the Institute of Statistics and Mathematics, Vienna University of Economics and Business, Welthandelsplatz 1, 1020 Vienna, Austria.

\appendix

\section{Setup and filtering: proofs and additional results}\label{app:filtering}

\begin{lemma}\label{lemma:martingale-part-of-S}   Suppose that Assumption~\ref{ass1} holds.  Fix $m > w_0/T$ and consider some $\bF^S$-adapted strategy $\Nu$ with values in  $[0,m]$. Define
$$C:= 0 \vee \sup \big \{\int_\R (z^2 + 2z) \eta^\P (t,e,\nu,dz)\colon  (t, e, \nu)  \in [0,T] \times \mathcal{E} \times  [0 ,  m] \big \}\,. $$  Then   $C < \infty$,  $\mathbb{E}((S_t^{\Nu})^2) \le S_0^2 e^{C t}$,  and $(\int_0^t S_{s-}^{\Nu} \ud M_s^R)_{0\le t\le T}$ is a true martingale.
\end{lemma}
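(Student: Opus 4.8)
The plan is to prove the three assertions in order, the second-moment estimate being the crux from which the martingale property follows easily. First, finiteness of $C$ is immediate: since $\text{supp}(\eta)$ is a compact subset of $(-1,\infty)$, the continuous map $z\mapsto z^2+2z$ is bounded on it, say by $K_0$, while \eqref{eq:bound_etaP} bounds the total mass $\eta^\P(t,e,\nu;\R)$ by $\lambda^{\text{max}}$ uniformly over $[0,T]\times\E\times[0,m]$; hence $\int_\R(z^2+2z)\,\eta^\P(t,e,\nu;\ud z)\le K_0\lambda^{\text{max}}$ and $C\le K_0\lambda^{\text{max}}<\infty$. For the moment bound I would apply It\^o's formula to $x\mapsto x^2$ along the pure-jump process $S$: using $\Delta S_s=S_{s^-}\Delta R_s$ the continuous quadratic part vanishes and $S_t^2=S_0^2+\int_0^t\int_\R S_{s^-}^2(2z+z^2)\,\mu^R(\ud s,\ud z)$. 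Compensating $\mu^R$ by $\eta^\P(s,Y_{s^-},\nu_{s^-};\ud z)\,\ud s$ decomposes the right-hand side into a local martingale and a finite-variation term whose integrand is at most $C\,S_{s^-}^2$.

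To pass to expectations rigorously I would localize along $\tau_n=\inf\{t:S_t>n\}$. On $[0,\tau_n]$ the integrand $S_{s^-}^2(2z+z^2)$ is bounded (by $n^2K_0$) against a compensator of total rate at most $\lambda^{\text{max}}$, so the stopped compensated integral is a genuine mean-zero martingale and $\esp{S_{t\wedge\tau_n}^2}=S_0^2+\esp{\int_0^{t\wedge\tau_n}S_{s^-}^2\int_\R(2z+z^2)\,\eta^\P(\ud z)\,\ud s}\le S_0^2+C\,\esp{\int_0^{t\wedge\tau_n}S_{s^-}^2\,\ud s}$. Since $S_{s^-}=S_s$ off the countable set of jump times, Tonelli gives $\esp{\int_0^{t\wedge\tau_n}S_{s^-}^2\,\ud s}\le\int_0^t\esp{S_{s\wedge\tau_n}^2}\,\ud s$, so the finite function $g_n(t):=\esp{S_{t\wedge\tau_n}^2}$ satisfies $g_n(t)\le S_0^2+C\int_0^tg_n(s)\,\ud s$ and Gronwall yields $g_n(t)\le S_0^2e^{Ct}$. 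As $S$ is nonexplosive, $t\wedge\tau_n=t$ eventually, so Fatou delivers $\esp{(S_t^{\Nu})^2}\le S_0^2e^{Ct}$. This localization step is the main obstacle: the bookkeeping identifying $S_{s^-}$ with $S_s$ under the time integral and bounding $\esp{S_{t\wedge\tau_n}^2}$ by $\int_0^tg_n\,\ud s$ (rather than by the untruncated moment) is precisely what makes the Gronwall argument self-contained.

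Finally, $N_t:=\int_0^tS_{s^-}^{\Nu}\,\ud M_s^R=\int_0^t\int_\R S_{s^-}^{\Nu}z\,(\mu^R-\eta^\P)(\ud s,\ud z)$ is a priori only a local martingale, and I would upgrade it by controlling its predictable quadratic variation $\langle N\rangle_T=\int_0^TS_{s^-}^2\big(\int_\R z^2\,\eta^\P(s,Y_{s^-},\nu_{s^-};\ud z)\big)\,\ud s$. Since $z\mapsto z^2$ is bounded on $\text{supp}(\eta)$, one has $\int_\R z^2\,\eta^\P\le C'<\infty$ uniformly, whence by Tonelli and the second-moment bound $\esp{\langle N\rangle_T}\le C'\int_0^T\esp{(S_s^{\Nu})^2}\,\ud s\le C'S_0^2\int_0^Te^{Cs}\,\ud s<\infty$. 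Finite expected quadratic variation promotes the local martingale $N$ to a square-integrable true martingale, completing the proof.
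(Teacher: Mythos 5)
Your proof is correct and follows essentially the same route as the paper: It\^o's formula applied to $x\mapsto x^2$ along the pure-jump dynamics, compensation of $\mu^R$ by $\eta^\P$, a Gronwall argument for the second-moment bound, and integrability of the quadratic variation (you use $\langle N\rangle$, the paper uses $[N]$ --- equivalent here) to upgrade the stochastic integral to a true martingale. Your localization along $\tau_n=\inf\{t\colon S_t>n\}$ is in fact slightly more careful than the paper, which takes expectations of the compensated integral directly without justifying that it is a mean-zero martingale before the moment bound is established; your Fatou step closes that small gap cleanly.
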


\begin{proof}  To ease the notation we write $S_t$ for $S_t^{\Nu}$.    We begin with the bound on $S_t^2$. First note that $C$ is finite by Assumption~\ref{ass1}.  At a jump time $T_n$ of $R$ it holds that $S_{T_n} = S_{T_{n}-}(1 + \Delta R_{T_n})$ and therefore
$$S_{T_n}^2 - S_{T_n -}^2 = S_{T_n -}^2 \Delta R_{T_n} ^2 + 2 S_{T_n -}^2 \Delta R_{T_n}\,.$$
Hence $ S_t^2 = S_0^2 + \int_0^t \int_\R S_{s-}^2 (z^2 + 2z) \mu^R(\ud z,\ud s)$ and we get
\begin{align}
 \mathbb{E}(S_t^2) & = S_0^2 + \mathbb{E} \Big ( \int_0^t  \int_\R S_{s}^2 (z^2 +2z) \eta^\P (s, Y_{s-} \nu_{s-}, \ud z) \, \ud s \Big )
       \\&  \le S_0^2 + C \int_0^t \mathbb{E} \big( S_s^2\big) \ud s\,,
\end{align}
so that $\mathbb{E}((S_t^{\Nu})^2) \le S_0^2 e^{C t}$ by the Gronwall inequality. To show that
$\int_0^{\cdot} S_{s-} \ud M_s^R$ is a true martingale we show that this process has integrable quadratic variation. Since
 $ \big[ \int_0^\cdot   S_{s-} \ud M_s^R \big]_t = \int_0^t \int_\R S_{s-}^2 z^2 \mu^R(\ud z, \ud s) $, we have
$$ \mathbb{ E} \left ( \big[ \int_0^\cdot   S_{s-} \ud M_s^R \big]_t \right ) =
\mathbb{E} \left ( \int_0^t S_{s}^2 \int_\R z^2 \eta^\P(s, Y_{s-}, \nu_{s-}\ud z) \ud s\right) \le S_0^2 \tilde C \int_0^t e^{C s} \ud s\,,$$
for every $t\in[0,T]$,
where ${\tilde C} = \sup \big \{\int_\R z^2  \eta^\P (t,e,\nu,dz)\colon  t \in[0,T], \ e \in \mathcal{E}, \nu \in [0 ,  m] \big \}$ is finite by Assumption~\ref{ass1}.
\end{proof}

\begin{proof}[Proof of Lemma \ref{lem:change-of-measure}] Conditions~\eqref{eq:cond-on-density} and \eqref{eq:bound_etaP} imply that $\tilde Z$ is a true martingale, see \citet{protter2008arbitrage}.  Moreover, $\beta( t,Y_{t^-},\nu_{t^-}, z) > - 1$, since $\big( \ud \eta^\P_t (t,e_i, \nu;\ud z)/\ud \eta^\Q_t(\ud z) \big)(z) >0$ by assumption. This implies that $\widetilde{Z}_T >0$, and hence the equivalence of $\P$ and $\Q$.
The Girsanov theorem for random measures (see \cite[VIII, Theorem T10]{bremaud1981point})  shows that under $\P$, $\mu^{R} (\ud t,\ud z)$ has the predictable compensator
$(\beta(t, Y_{t-}, \nu_t, z) +1) {\eta}_t^\Q (\ud z) \ud t.$ By definition of $\beta$  this is equal to ${\eta}^P(t, Y_{t-}, \nu_t, \ud z) \ud t \,.$ Moreover, $\widetilde Z$ and $Y$ are orthogonal, since $R$ and $Y$ have no common jumps, so that the law of $Y$ is the same under $\P$ and under $\Q$.

\end{proof}

%
%

\begin{proof}[Proof of Theorem \ref{thm:Zakai}] Our derivation parallels the proof of \cite[Theorem~3.24]{bain2009filtering}, which deals with  the classical  case where the observation process is a Brownian motion with drift.
Recall that for a function $f \colon \mathcal{E} \to \R$  the semimartingale decomposition of $f(Y_t)$ is given by
$f(Y_t)=f(Y_0)+\int_0^t \langle Q\mathbf{f}, Y_s \rangle\ud s+M_t^f$,
where $M^f$ is a true $(\bF, \Q)$-martingale.
Define the process $\widetilde Z^\epsilon=(\widetilde Z^\epsilon_t)_{0\leq t \leq T}$ by
\[\widetilde Z^\epsilon_t:=\frac{\widetilde Z_t}{1+\epsilon\widetilde Z_t},\]
and note that $\widetilde Z^\epsilon_t < 1/\epsilon$ for every $t \in [0,T]$.
Now we compute $\widetilde Z^\epsilon f(Y)$. Notice that $[\widetilde Z^\epsilon , Y]_t =0$ for every $t \in [0,T]$, as $R$ and $Y$ have no common jumps. Hence, from  It\^o's product rule we get
\begin{align}
\ud  \big(\widetilde Z^\epsilon_t f(Y_t)\big) &=
 \widetilde Z^\epsilon_{t^-} \langle Q\mathbf{f}, Y_t \rangle \ud t + \widetilde Z^\epsilon_{t^-} \ud M^f_t
  - f(Y_{t^-}) \widetilde Z^\epsilon_{t^-} \int_\R\frac{\beta(t,Y_{t^-},\nu_{t-},z)}{1 + \epsilon  \widetilde Z_{t^-}} \eta^\Q_t(\ud z) \ud t \\
& \quad + f(Y_{t^-}) \widetilde Z^\epsilon_{t^-}\int_\R \frac{\beta(t,Y_{t^-},\nu,z)}{1+\epsilon \widetilde Z_{t^-} (1+\beta(t,Y_{t^-},\nu_{t-},z))} \mu^R(\ud t, \ud z).\label{eq:zakaihelp}
\end{align}

Next we show that $\qesp{\int_0^t\widetilde Z^\epsilon_{s^-}\ud M^f_s\mid\F^S_t}=0$. By the definition of conditional expectation, this is equivalent to
$\qesp{H\int_0^t\widetilde Z^\epsilon_{s^-}\ud M^f_s}=0 $ for every bounded, $\F_t^S$-measurable random variable $H$. Define an $(\bF^S,\Q)$-martingale by $H_u=\qesp{H\mid\F^S_u}$, $0\le u \le t\le T$, and note that $H=H_t$. By the martingale representation theorem for random measures, see, e.g., \cite[Ch.
III, Theorem 4.37]{jacod2003limit} or \cite[Ch.~VIII, Theorem T8]{bremaud1981point}, we get that  there is a bounded $\bF^S$-predictable random function  $\phi$ such that  \[H_t=H_0+\int_0^t\int_{\R}\phi(s,z)(\mu^R(\ud s, \ud z)-\eta^{\Q}(\ud z)\ud s), \quad 0\leq t \leq T \,. \]
Now, applying the It\^{o} product rule and using that $[M^f, H]_t = [Y,R]_t =0$ for every $t \in [0,T]$,  we obtain
\[
H_t \int_0^t\widetilde Z^\epsilon_{s^-}\ud M^f_s\!=\!\int_0^t H_{s-}\widetilde Z^\epsilon_{s^-}\ud M^f_s+\int_0^t\!\int_{\R} \!\Big(\!\int_0^s\widetilde Z^\epsilon_{u^-}\ud M^f_u\Big)\phi(s,z)\big(\mu^R(\ud s, \ud z)-\eta^{\Q}(\ud z)\ud s\big).
\]
Both integrals on the right hand side of the above representation are martingales. This follows from  the finite-state property of the Markov chain $Y$ and the boundedness of $\widetilde Z^\epsilon$ and $H$. Hence, taking the expectation we get that $\qesp{H\int_0^t\widetilde Z^\epsilon_{s^-}\ud M^f_s}=0 $ as claimed.

Now note that  for $t \in [0,T]$ and a generic integrable $\F_t$-measurable random variable $U$ it holds that
\begin{equation} \label{eq:cond-esp}
\qesp{U\mid\F^S_t}=\qesp{U\mid\F^S_T};
\end{equation}
this can be shown with similar arguments as in \cite[Proposition 3.15]{bain2009filtering}.
Taking the conditional expectation from \eqref{eq:zakaihelp} and applying \eqref{eq:cond-esp} and Fubini Theorem  we get for every $t \in [0,T]$,
\begin{align}
\nonumber&\qesp{\widetilde Z^\epsilon_t f(Y_t) \mid \F^S_t}  =\frac{\pi_0 (f)}{1+\epsilon}+\int_0^t\qesp{\widetilde Z^\epsilon_{s^-} \langle Q\mathbf{f}, Y_s \rangle \mid\F^S_T}\ud s\\
\nonumber&\, + \int_0^t \int_\R \qesp{ f(Y_{s^-}) \widetilde Z^\epsilon_{s^-} \frac{\beta(s,Y_{s^-},\nu_{s^-},z)}{1+\epsilon \widetilde Z_{s^-} (1+\beta(s,Y_{s^-},\nu,z))}\mid\F^S_T}\mu^R(\ud s, \ud z) \\
&\, -\int_0^t\int_\R\qesp{f(Y_{s^-}) \widetilde Z^\epsilon_{s^-} \frac{\beta(s,Y_{s^-},\nu_{s^-},z)}{1 + \epsilon  \widetilde Z_{s^-}}  \mid\F^S_T}\eta^\Q_s(\ud z)\ud s\,.\label{eq:zakaihelp2}
\end{align}
Note that, for every $t \in [0,T]$, $\widetilde Z^\epsilon_t < \widetilde Z_t $ and that  $\widetilde Z_t $ is integrable. Since $\beta$ is bounded by assumption, by dominated convergence we get the following three limits
\begin{align*}
& \lim_{\epsilon \to 0} \qesp{\widetilde Z^\epsilon_t f(Y_t)\mid\F^S_t}   =\qesp{\widetilde Z_t f(Y_t)\mid\F^S_t}\,, \\
& \lim_{\epsilon \to 0} \int_0^t\qesp{\widetilde Z^\epsilon_{s^-} \langle Q\mathbf{f}, Y_s \rangle\mid\F^S_T}\ud s = \int_0^t\qesp{\widetilde Z_{s^-} \langle Q\mathbf{f}, Y_s \rangle\mid\F^S_T}\ud s\,,\\
&\lim_{\epsilon \to 0} \int_0^t \int_\R\qesp{f(Y_{s^-}) \widetilde Z^\epsilon_{s^-} \frac{\beta(s,Y_{s^-},\nu_{s^-},z)}{1 + \epsilon  \widetilde Z_{s^-}}  \mid\F^S_T} \eta^\Q_s(\ud z)\ud s \\   & \qquad =
\int_\R\qesp{f(Y_{s^-}) \widetilde Z_{s^-} \beta(s,Y_{s^-},\nu_{s^-},z) \mid \F^S_T} \eta^\Q_s(\ud z)\ud s.
\end{align*}

Finally we consider the integral with respect to $\mu^R(\ud s, \ud z)$ in \eqref{eq:zakaihelp2}.  Let $\{T_n, Z_n\}$ be the sequence of jump times and the corresponding jump sizes of the process $R$.
Denote by $n(t) $ the number of jumps up to time $t$, so that $T_{n(t)}$ is the last jump time before $t$. Then
\begin{align*}
&\lim_{\epsilon \to 0}\int_0^t \int_\R \qesp{ f(Y_{s^-}) \widetilde Z^\epsilon_{s^-} \frac{\beta(s,Y_{s^-},\nu,z)}{1+\epsilon \widetilde Z_{s^-} (1+\beta(s,Y_{s^-},\nu,z))}\mid\F^S_T} \mu^R(\ud s, \ud z)\\
&\qquad = \lim_{\epsilon \to 0} \sum_{n=1}^{n(t)} \qesp{ f(Y_{T_n^-}) \widetilde Z^\epsilon_{T_n^-} \frac{\beta(T_n,Y_{T_n^-},\nu,\Delta R_{T_n})}{1+\epsilon \widetilde Z_{T_n^-} (1+\beta(T_n,Y_{T_n^-},\nu_{T_n^-},\Delta R_{T_n}))}\mid\F^S_T} \\
&\qquad  = \sum_{n=1}^{n(t)} \qesp{ f(Y_{T_n^-}) \widetilde Z_{T_n^-} \beta(T_n,Y_{T_n^-},\nu_{T_n^-},\Delta R_{T_n})\mid\F^S_T}\\
&\qquad  = \int_0^t \int_\R \qesp{ f(Y_{s^-}) \widetilde Z_{s^-} \beta(s,Y_{s^-},\nu_{s^-},z)\mid\F^S_T} \mu^R(\ud s, \ud z)\,.
\end{align*}
Assembling the previous results we obtain
\begin{align*}
&\qesp{\widetilde Z_t f(Y_t)\mid\F^S_T}=\pi_0 (f)  +\int_0^t\qesp{\widetilde Z_{s^-} \langle Q\mathbf{f}, Y_s \rangle\mid\F^S_T}\ud s\\
&\quad+\int_0^t\int_\R\qesp{f(Y_{s^-}) \widetilde Z_{s^-} \beta(s,Y_{s^-},\nu_{s^-},z)\mid\F^S_T}\left(\mu^R(\ud s, \ud z)-\eta^\Q_s(\ud z)\right)\,,
\end{align*}
and hence the claim of the theorem follows from \eqref{eq:cond-esp}.
\end{proof}

\section{Optimization via MDMs: proofs and additional results}\label{app:sec-4}

\begin{proof}[Proof of Lemma \ref{lemma:lipschitz}]
To establish the claim we show that the first derivatives of the vector field $g$ are bounded, uniformly in $\nu$.
The components of $\frac{\partial g}{\partial w}$  and $\frac{\partial g}{\partial s}$ are all 0, and, using Assumption \ref{ass1}, the nonzero components of $\frac{\partial g}{\partial \pi^i}$, $i=1,\dots, K$, can be estimated as follows. For $i \neq k$,
\begin{align*}
\left|\frac{\partial g^{k+3}}{\partial \pi^i}\right|\!&=\! \left|q^{ik}\!-\!\pi^k \!\!\int_{\mathbb{R}}\!\!  u^k(t,\nu,\pi,z)\eta^\P(t,e_i,\nu,\ud z)-\pi^k \sum_{j=1}^K \pi^j \!\!\int_{\mathbb{R}} \!\! \frac{\partial u^k(t,\nu,\pi,z)}{\partial \pi^i} \eta^\P(t,e_j,\nu,\ud z)\right|\\
<&\max_{i,k}q^{ik}+\pi^k \int_{\mathbb{R}}  u^k(t,\nu,\pi,z)\eta^\P(t,e_i,\nu,\ud z)\\ &+ \pi^k \sum_{j=1}^K \pi^j \int_{\mathbb{R}} \frac{{\ud \eta^{\P}(t,e_i,\nu)}/{\ud \eta^\Q_t}( z){\ud \eta^{\P}(t,e_k,\nu)}/{\ud \eta^\Q_t}( z)}{\left(\sum_{l=1}^K \pi^l {\ud \eta^{\P}(t,e_l,\nu)}/{\ud \eta^\Q_t}( z)\right)^2}\eta^\P(t,e_j,\nu,\ud z),
\end{align*}
and this is smaller than $\max_{i,k}q^{ik}+(M^4+M^2)\lambda^{\max}$. For $i=k$ we get
\begin{align*}
\left|\frac{\partial g^{i+3}}{\partial \pi^i}\right|=&\left| q^{ii}-2\pi^i \int_{\mathbb{R}}  u^i(t,\nu,\pi,z)\eta^\P(t,e_i,\nu,\ud z)- \sum_{j\neq i}\pi^j\int_{\R}u^i(t,\nu,\pi,z)\eta^\P(t,e_j,\nu,\ud z) \right.\\
&\left.-\pi^i \sum_{j=1}^K \pi^j \int_{\mathbb{R}}  \frac{\partial u^i(t,\nu,\pi,z)}{\partial \pi^i} \eta^\P(t,e_j,\nu,\ud z)\right|<\max_{i}q^{ii}(M^4+3 M^2)\lambda^{\max}\,.
\end{align*}
\end{proof}

\begin{proof}[Proof of Lemma \ref{lemma:bounding-function}] First we estimate the reward function introduced in \eqref{rewardmdm}.  Since $f \ge 0$, $e^{- \rho t} \le 1$,  and  $h(w)\le w$, we get that
$
r(\widetilde{x}, \alpha)  \le s \int_0^{\tau^\varphi} e^{-\Lambda^{\alpha}_u}  \alpha_u \ud u +  s e^{-\Lambda^{\alpha}_{\tau^\varphi}} w_{\tau^\varphi}^\alpha$ .
Partial integration gives
$$ \int_0^{\tau^\varphi} e^{-\Lambda^{\alpha}_u}  \alpha_u \ud u =
\big[- w_u^\alpha e^{-\Lambda^{\alpha}_u}\big]_0^{\tau^\varphi} - \int_0^{\tau^\varphi} \lambda^\alpha_u  e^{-\Lambda^{\alpha}_u} w_u^\alpha \ud u \le w - e^{-\Lambda^{\alpha}_{\tau^\phi}} w_{\tau^\phi}^\alpha\, ,
$$
and hence   $r(\widetilde{x}, \alpha) \le s w$.  Next we  estimate $Q_L b (\widetilde{x},\alpha)$. Recall the definition of $\bar \eta^\P$ from \eqref{eq:drift} and let
$c_{\eta} := \sup \big \{ \bar \eta^\P(t,e,0)   \colon (t,e)  \in [0,T] \times \mathcal{E} \big\}.$
It holds that
\begin{align*}
Q_L b (\widetilde{x},\alpha) &= \int_0^{\tau^{\varphi}} \hspace{- 0.2cm} e^{\gamma(T-(u+t))}e^{-\Lambda^{\alpha}_u}
 \sum_{j=1}^K \pi_j  s w_u^\alpha (1 + \bar \eta^\P( t+u, e_j,  \alpha_u)) \ud u
{\le} s w e^{\gamma(T-t)} c_\eta  \int_0^{\tau_\varphi} \hspace{- 0.2cm}  e^{-\gamma u}\,\ud u\,,
\end{align*}
where we have used  that $w_u^\alpha \le w$ and $e^{-\Lambda^{\alpha}_u}<1$. The last term is bounded by  $ b(\widetilde{x})\frac{c_{\eta}}{\gamma}$, and  the MDM is contracting for $\gamma >c_\eta$.
\end{proof}

The following lemma is needed in the proof of Proposition~\ref{prop:continuity}.

\begin{lemma} \label{lemma:continuity-of-QL}
Consider a  function $v\in \mathcal{C}_b$. Then  the mapping $(\widetilde{x}, \nu) \mapsto \bar Q v(\widetilde x, \nu)$ is continuous on $\widetilde{\mathcal X} \times [0, \numax]$.
\end{lemma}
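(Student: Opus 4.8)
The plan is to prove sequential continuity. Take $(\widetilde x_n, \nu_n) \to (\widetilde x, \nu)$ in $\widetilde{\mathcal X} \times [0,\numax]$, write $\widetilde x_n = (t_n, w_n, s_n, \pi_n)$, and show $\overline Q v(\widetilde x_n, \nu_n) \to \overline Q v(\widetilde x, \nu)$. Recalling \eqref{eq:def-Qbar}, I abbreviate the integrand by
\[
\Phi_n(z) := v\big(t_n, w_n, s_n(1+z), (\pi_n^i(1 + u^i(t_n,\nu_n,\pi_n,z)))_{i=1,\dots,K}\big),
\]
and let $\Phi$ be the same expression with $(\widetilde x_n, \nu_n)$ replaced by $(\widetilde x, \nu)$, so that $\overline Q v(\widetilde x_n, \nu_n) = \sum_{j=1}^K \pi_n^j \int_{\operatorname{supp}(\eta)} \Phi_n \, \eta^\P(t_n, e_j, \nu_n; \ud z)$, the integral running over $\operatorname{supp}(\eta)$ since $\eta^\P$ is supported there by Assumption~\ref{ass1}. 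Because the sum is finite, $\pi_n^j \to \pi^j$, and each integral is uniformly bounded (by $\|v\|_b \sup_n(s_n w_n e^{\gamma(T-t_n)})(1 + \max\operatorname{supp}(\eta))\,\lambda^{\text{max}}$, using $|v| \le \|v\|_b\, b$ and the mass bound \eqref{eq:bound_etaP}), it suffices to show for each fixed $j$ that $\int \Phi_n \, \eta^\P(t_n, e_j, \nu_n; \ud z) \to \int \Phi \, \eta^\P(t, e_j, \nu; \ud z)$.

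For fixed $j$ I split this difference as $\int(\Phi_n - \Phi)\,\eta^\P(t_n, e_j, \nu_n; \ud z)$ plus $\big(\int \Phi\, \eta^\P(t_n, e_j, \nu_n; \ud z) - \int \Phi\, \eta^\P(t, e_j, \nu; \ud z)\big)$. The first term is bounded by $\|\Phi_n - \Phi\|_{\infty,\operatorname{supp}(\eta)}\,\lambda^{\text{max}}$, so it remains to prove the uniform convergence $\|\Phi_n - \Phi\|_{\infty,\operatorname{supp}(\eta)} \to 0$. The key observation is that the arguments $\psi_n(z)$ fed into $v$ lie, for $z \in \operatorname{supp}(\eta)$ and $n$ large, in a fixed compact set $\mathcal K \subset \widetilde{\mathcal X}$: indeed $t_n, w_n$ converge, $s_n(1+z)$ stays in a compact subinterval of $\R^+$ bounded away from $0$ (as $\operatorname{supp}(\eta) \subset (-1,\infty)$ and $s_n \to s > 0$), and $(\pi_n^i(1+u^i(\cdots)))_i$ always lies in the simplex $\S^K$. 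On $\mathcal K$ the continuous function $v$ is uniformly continuous, and $\psi_n(z) \to \psi(z)$ uniformly in $z$: the price component converges uniformly because $z$ is bounded, while the filter components converge uniformly by Assumption~\ref{ass2}.2 together with $\pi_n^i \to \pi^i$ and the bound $1 + u^i < M^2$ that follows from \eqref{eq:cond-on-density}. Uniform continuity of $v$ composed with uniform convergence of $\psi_n$ then yields $\|\Phi_n - \Phi\|_{\infty,\operatorname{supp}(\eta)} \to 0$.

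The second term vanishes because $\Phi$ is a fixed bounded continuous function on the compact set $\operatorname{supp}(\eta)$, while the measures $\eta^\P(t_n, e_j, \nu_n; \ud z)$ converge weakly to $\eta^\P(t, e_j, \nu; \ud z)$ by Assumption~\ref{ass2}.1; since all these measures are supported in the common compact $\operatorname{supp}(\eta)$, weak convergence may be tested against functions merely continuous on $\operatorname{supp}(\eta)$ (extend $\Phi$ to $\R$ by Tietze's theorem). Combining the two estimates gives the claim. The main obstacle is the first term: one must upgrade the mere continuity of $v$ and the pointwise-in-$z$ behaviour of the $u^j$ to genuine uniform convergence of the integrand, which is exactly why Assumption~\ref{ass2}.2 (uniformity in $z$) and the compactness argument producing $\mathcal K$ are needed. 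A naive single weak-convergence argument does not apply, since here both the integrand $\Phi_n$ and the integrating measure vary with $n$ simultaneously, forcing the split above.
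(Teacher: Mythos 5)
Your proof is correct and takes essentially the same route as the paper's: the same two-term decomposition of the difference of integrals, the same uniform-continuity-on-a-compact-set argument (using Assumption~\ref{ass2}.2 to get uniform-in-$z$ convergence of the integrand, with the price component confined to a compact interval $[\underline{s},\overline{s}]\subset(0,\infty)$), and the same bound of the first term by $\lambda^{\text{max}}\sup_{z\in\operatorname{supp}(\eta)}|\Phi_n-\Phi|$ together with Assumption~\ref{ass2}.1 (weak continuity of $(t,\nu)\mapsto\eta^j(t,\nu;\ud z)$) for the second. Your additional touches --- the Tietze extension legitimizing the weak-convergence test against a function only defined on $\operatorname{supp}(\eta)$, and the explicit bound $1+u^i<M^2$ from \eqref{eq:cond-on-density} --- merely make explicit details the paper leaves implicit.
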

\begin{proof}
It suffices to show that for $j=1,\dots,K$ the mapping
$$ (t,w,s,\pi,\nu) \mapsto \int_\R v \left (t,s(1+z), \pi^1(1+ u^1(t, \nu, \pi,z), \dots , \pi^K(1+ u^K(t, \nu, \pi,z)\right ) \eta^j(t, \nu, \ud z)
$$
is continuous on $ \widetilde{\mathcal{X}} \times [0, \numax]$, where $\eta^j(t, \nu, \ud z) := \eta(t,e_j,\nu,\ud z)$.
Consider a sequence with elements $ (t_n, \nu_n, \pi_n) \xrightarrow[ n \to \infty]{} (t, \nu, \pi)$. Note that, for sufficiently large $n$,  the set $\{s^n (1+z)\colon z \in \text{supp}(\eta) \}$ is  contained in a compact subset $[\underline{s}, \overline{s}] \subset (0, \infty)$. Moreover, $v$ is uniformly continuous on the compact set
$ [0,T] \times [0, w_0] \times [\underline{s}, \overline{s}] \times \S^K \times [\numin, \numax]\,.
$
Then, Assumption~\ref{ass2}-(2) implies that the sequence $\{v^n\} $ with
$$
v^n(z) :=  v \left (t_n,s_n(1+z), \pi_{n}^1(1+ u^1(t_n, \nu_n, \pi_n,z), \dots \pi^K_n(1+ u^K_n(t_n, \nu_n, \pi_n,z)\right )
$$
converges uniformly in $z \in \text{supp}(\eta)$  to $v(z) := v(t,s,\pi, \nu, z)$. Hence the following estimate holds:
\begin{align}\nonumber
&\Big | \int_{\text{supp}(\eta)} \hspace{-0.2cm} v^n(z)  \eta^j(t_n, \nu_n,\ud z) - \int_{\text{supp}(\eta)} \hspace{-0.2cm} v(z)  \eta^j(t, \nu,\ud z) \Big |  \\
& \label{eq:estimate-of-integral}
 \le \int_{\text{supp}(\eta)} \hspace{-0.7cm} \big |v^n (z) - v(z) \big | \eta^j(t_n, \nu_n , \ud z) +
\Big |\int_{\text{supp}(\eta)} \hspace{-0.7cm} v(z)  \eta^j(t_n, \nu_n,\ud z) -
\int_{\text{supp}(\eta)} \hspace{-0.7cm} v(z)  \eta^j(t, \nu,\ud z) \Big |\,.
\end{align}
Finally,  the   first term in \eqref{eq:estimate-of-integral} can be estimated by
$ \lambda^{\text{max}} \sup\{ |v^n(z) - v(z)| \colon z \in \text{supp}(\eta) \} $, which converges to zero as $v^n$ converges to $v $ uniformly; the second term in \eqref{eq:estimate-of-integral} converges to zero by Assumption~\ref{ass2}-(1) (continuity of the mapping  $(t,\nu) \mapsto \eta^j(t,\nu, \ud z) $ in the weak topology).
\end{proof}

\section{Convergence of the finite difference approximation}
\label{app:num}

\citet{barles1991} introduced conditions under which a numerical scheme converges to the viscosity solution of an HJB equation.
These conditions are {\it consistency}, which means that the difference operators converge to the differential operators, {\it stability},  that is the finite difference operator stays bounded as the time and space steps converge to zero, and {\it monotonicity}, which means that the mapping $V_{t_k}\mapsto V_{t_{k+1}}$ from \eqref{eq:disc} is monotone.
Moreover, a comparison principle for the limiting HJB equation needs to hold.

In the sequel we present the discretization scheme for Example \ref{exm2}, treated numerically in Section \ref{sec:numerics}, and verify the above conditions.
For the discretization of the differential operators we use the standard difference scheme as stated in \citet[Chapter IX]{bib:fleming-soner-06}.
In order to satisfy the monotonicity condition we need to discretize the first order terms appropriately by using an upwinding scheme: depending on the sign of the coefficient, we use a forward or a backward difference operator.
In Example \ref{exm2} the integral w.r.t.~$\eta^\P$ reduces to a sum, which is evaluated by interpolation between the grid points.

It is easier to verify convergence conditions from \citet{barles1991,dang2014} for the HJB equation that includes $s$ as a state variable than the reduced one. Convergence for the reduced equation follows by using  homogeneity of $V$ in $s$,.

Let $\wstep,\pistep$ be the step-sizes in $w,\pi^1$-direction. In our computations we choose $\wstep=10$ and $\pistep=1/20$. For fixed time point $t_k$ we determine the control $\nu=\nu_{t_k}^\ast(s,w,\pi)$ by maximizing $H$, see \eqref{eq:optprob}. The discretized version of the HJB equation is
\begin{align}
V(t_{k+1},s,w,\pi^1)&=V(t_k,s,w,\pi^1)+(t_{k+1}-t_k)\Big[-\zeta_1(t_k,w,\pi^1) V (t_k,s,w,\pi^1)\\
&+ \zeta_2(t_k,w,\pi^1) V(t_k,s,w-\wstep,\pi^1)\\
&+\frac{1}{2}(\zeta_3(t_k,w,\pi^1) + |\zeta_3(t_k,w,\pi^1)|)V(t_k,s,w,\pi^1+\pistep)\\
&+\frac{1}{2}(\zeta_3(t_k,w,\pi^1) - |\zeta_3(t_k,w,\pi^1)|)V(t_k,s,w,\pi^1-\pistep)\\
&+\lambda_1(t_k,w,\pi^1) V\left(t_k,s(1-\theta),w,\frac{\pi^1 c^\text{down}_1}{\pi^1 c^\text{down}_1+(1-\pi^1) c^\text{down}_2}  \right)\\
&+ \lambda_2(t_k,w,\pi^1) V\left(t_k,s(1+\theta),w,\frac{\pi^1 c^\text{up}_1}{\pi^1 c^\text{up}_1+(1-\pi^1) c^\text{up}_2} \right) \Big]\\
&+(t_{k+1}-t_k) (\nu-\fconst \fexp \nu^{\fexp+1})s\,,\label{eq:scheme-s}
\end{align}
where
\begin{align*}
\zeta_2(t_k,w,\pi^1)&= \frac{\nu}{\wstep} \,,\\
\zeta_3(t_k,w,\pi^1)&= (\pi^1 q^{11}+(1-\pi^1)q^{21}) - \pi^1(1-\pi^1)\left((1+a\nu_{t_k}^\ast(w,\pi^1))(c^\text{down}_1- c^\text{down}_2)+ c^\text{up}_1 - c^\text{up}_2 \right) \,,\\
\lambda_1(t_k,w,\pi^1)&= (\pi^1 c^\text{down}_1+(1-\pi^1)c^\text{down}_2)(1+a\,\nu) \,,\\
\lambda_2(t_k,w,\pi^1)&= (\pi^1 c^\text{up}_1+(1-\pi^1)c^\text{up}_2) \,,\\
\zeta_1(t_k,w,\pi^1)&=\rho + \zeta_2(t_k,w,\pi^1) + |\zeta_3(t_k,w,\pi^1)| +\lambda_1(t_k,w,\pi^1)+\lambda_2(t_k,w,\pi^1) \,.
 \end{align*}
 On the active boundary we set $V=h$.
Due to the use of an upwinding scheme, on the non-active part of the boundary the solution to \eqref{eq:scheme-s} is determined endogenously, i.e.~by the values of $V$ in the interior of the state space. Note that this is in line with the formulation of the boundary conditions for the limiting equation in Definition \ref{def:viscosity-1}.

By construction this scheme is consistent.
In order to get stability we need to choose the time step
sufficiently small, namely
$t_{k+1}-t_{k} \le 1/\zeta_1(t_k,w,\pi^1)$.
Monotonicity requires positivity of the coefficients of the difference operators and suitable quadrature weights for the integral term (for the latter see \citet{dang2014}).
In our context, this holds since $\zeta_2, \lambda_1, \lambda_2$ are positive,
and the coefficients of $V(t_k,s,w,\pi^1+\pistep), V(t_k,s,w,\pi^1-\pistep)$ are positive by construction (the construction we used is an easy implementable way of decomposing the coefficient $\zeta_3$ into its positive and its negative part). Hence, $\zeta_1$ is also positive.
Since we also showed in Theorem \ref{thm:viscosity} that the comparison principle holds and that the value function is the unique viscosity solution to our HJB equation, we get convergence of the proposed scheme to the value function.
%

\end{document}